\definecolor{mygray}{gray}{0.6}
\titleformat*{\section}{\large\bfseries}
\titleformat*{\subsection}{\bfseries}
\titleformat*{\subsubsection}{\itshape}
\definecolor{DarkMidnightBlue}{rgb}{0.0, 0.2, 0.5}
\newcommand{\niton}{\not\owns}
\def\lstAZ{A, B, C, D, E, F, G, H, I, J, K, L, M, N, O, P, Q, R, S, T, U, V, W, X, Y, Z}
\def\lstaz{a, b, c, d, e, f, g, h, i, j, k, l, m, n, o, p, q, r, s, t, u, v, w, x, y, z}
\def\lstAZBB{B, C, D, E, F, G, H, I, J, K, L, M, N, O, P, Q, R, T, U, V, W, X, Y, Z}
\newcommand{\MkScr}[1]{\expandafter\def\csname s#1\endcsname{\mathscr{#1}}}
\newcommand{\MkUp}[1]{\expandafter\def\csname u#1\endcsname{\mathrm{#1}}}
\newcommand{\MkFrak}[1]{\expandafter\def\csname f#1\endcsname{\mathfrak{#1}}}
\newcommand{\MkCal}[1]{\expandafter\def\csname c#1\endcsname{\mathcal{#1}}}
\newcommand{\MkBB}[1]{\expandafter\def\csname #1#1\endcsname{\mathbb{#1}}}
\lstAZ\do{%
	\expandafter\MkScr \i  %  
	\expandafter\MkFrak \i  %
	\expandafter\MkUp \i %
	\expandafter\MkCal \i  %
		  }    
\lstaz\do{%
	\expandafter\MkUp \i   }    
\lstAZBB\do{%
	\expandafter\MkBB \i     }
\newcommand{\E}{\mathrm{E}}
\newcommand{\ind}{\mathbf{1}}
\newcommand{\dd}{ \mathrm{d}}
\newcommand{\Ge}{ \mathcal{L}}
\newcommand{\sign}{\operatorname{sign}}
\DeclareMathOperator*{\argmin}{argmin}
\renewcommand{\epsilon}{\varepsilon}
\renewcommand{\tilde}{\widetilde}
\numberwithin{equation}{section}
\newtheorem{theorem}{Theorem}[section]
\newtheorem{lemma}[theorem]{Lemma}
\newtheorem{proposition}[theorem]{Proposition}
\theoremstyle{definition}
\newtheorem{remark}[theorem]{Remark}
\newtheorem{assumption}[theorem]{Assumption}
\newcommand{\abovetozero}{\circ\!\shortleftarrow }
\newcommand{\belowtozero}{\shortrightarrow\!\circ}
\newcommand{\aboveaway}{\circ\!\shortrightarrow}
\newcommand{\belowaway}{\shortleftarrow\!\circ}
\newcommand{\zerofromabove}{\stackrel{\shortleftarrow}{\circ}}
\newcommand{\zerofrombelow}{\stackrel{\shortrightarrow}{\circ}}
\newcommand{\towardzero}{\shortrightarrow\!\circ\!\shortleftarrow}
\newcommand{\awayfromzero}{\shortleftarrow\!\circ\!\shortrightarrow}
\newcommand{\atzero}{\circ}
\newcommand{\R}{\mathbb R}
\newcommand{\1}{\mathbbm 1}
\definecolor{lime}{HTML}{A6CE39}
\DeclareRobustCommand{\orcidicon}{%
	\begin{tikzpicture}
	\draw[lime, fill=lime] (0,0) 
	circle [radius=0.16] 
	node[white] {{\fontfamily{qag}\selectfont \tiny ID}};
	\draw[white, fill=white] (-0.0625,0.095) 
	circle [radius=0.007];
	\end{tikzpicture}
	\hspace{-2mm}
}
\xdef\csname orcid\x\endcsname{\noexpand\href{https://orcid.org/\csname orcidauthor\x\endcsname}{\noexpand\orcidicon}}
\title{{\bf Sticky PDMP samplers for sparse and local inference problems}}
\author[1]{{\small \bf Joris Bierkens}\orcidA{}}
\author[2]{{\small \bf Sebastiano Grazzi\orcidB{}}}
\author[3]{\\{\small \bf Frank van der Meulen}\orcidC{}}
\author[4]{{\small\bf Moritz Schauer}\orcidD{}}
\affil[1]{\small Delft Institute of Applied Mathematics (DIAM), Delft University of Technology, The Netherlands}
\affil[2]{\small Department of Statistics, University of Warwick, United Kingdom}
\affil[3]{\small Department of Mathematics,  Vrije Universiteit Amsterdam, The Netherlands}
\affil[4]{\small Department of Mathematical Sciences,  Chalmers University of Technology, Sweden and University of Gothenburg, Sweden}
\date{\today}
\pgfplotsset{compat=1.17} 
\begin{document}
\maketitle
% \tableofcontents
\begin{abstract}
\noindent We construct a new class of efficient Monte Carlo methods based on continuous-time piecewise deterministic Markov processes (PDMPs) suitable for inference in high dimensional sparse models, i.e.\  models for which there is prior knowledge that many coordinates are likely to be exactly $0$. This is achieved with the fairly simple idea of endowing existing PDMP samplers with ``sticky'' coordinate axes, coordinate planes etc. Upon hitting those subspaces, an event is triggered during which the process \emph{sticks} to the subspace, this way spending some time in a sub-model. This results in  \emph{non-reversible} jumps between different (sub-)models. While we show that PDMP samplers in general can be made sticky, we mainly focus on the Zig-Zag sampler.  
Compared to the Gibbs sampler for variable selection, we heuristically derive favourable dependence of the Sticky Zig-Zag sampler on dimension and data size. The computational efficiency of the Sticky Zig-Zag sampler is further established through numerical experiments where both the sample size and the dimension of the parameter space are large.
\end{abstract}

\textit{Keywords: Bayesian variable selection, piecewise deterministic Markov process, Monte Carlo, spike-and-slab, big-data, high-dimensional problems, non-reversible jump}

\section{Introduction}\label{sec: intro}
\subsection{Overview}
Consider the problem of simulating from a measure $\mu$ on $\RR^d$ that is a mixture of atomic and continuous components.
A key application  is Bayesian inference for sparse problems and variable selection  under a spike-and-slab prior $\mu_0$ of the form 
\begin{equation}\label{eq: spike-and-slab prior}
\mu_0(\dd x) = \prod_{i = 1}^d \left( w_i \pi_{i}(x_i)\dd x_i + (1-w_i) \delta_0(\dd x_i)\right). 
\end{equation}
Here, $w_i \in [0,1]$, $\pi_{1}, \pi_{2},\dots,\pi_d$ are densities with respect to the Lebesgue measure referred to as \emph{slabs} and $\delta_0$ denotes the Dirac measure at zero. 
For sampling from $\mu$, it is common  to construct  and simulate a Markov process with $\mu$ as invariant measure. Routinely used samplers such as the Hamiltonian Monte Carlo sampler (\cite{duane1987hybrid}) cannot be applied directly due to the degenerate nature of $\mu$. We show that ``ordinary'' samplers based on piecewise deterministic Markov processes (PDMPs) can be adapted to sample from $\mu$ by introducing \emph{stickiness}. 
     
In piecewise deterministic Markov processes, the state space is augmented by adding to each coordinate $x_i$ a velocity component $v_i$, doubling the dimension of the state space. They are characterized by piecewise deterministic dynamics between event times, where event times correspond to changes of velocities. 
PDMPs have received recent attention because they have good mixing properties (they are non-reversible and have `momentum', see e.g. \cite{andrieu2019peskuntierney}), they take gradient information into account and they are attractive in Bayesian inference scenarios with a large number of observations because they allow for subsampling of the observations without creating bias (\cite{bierkens2019}, \cite{bierkens2020boomerang}).  

We introduce ``sticking event times'', which occur every time a coordinate of the process state hits $0$. At such a time that particular component of the state freezes for an independent exponentially distributed time with a specifically chosen rate equal to $|v_i|\kappa_i$, for some $\kappa_i>0$ which depends on $\mu$. This  corresponds to temporarily setting the marginal velocity to $0$: the process ``sticks to (or freezes at) 0'' in that coordinate, while the other coordinates keep moving, as long as they are not stuck themselves. After the exponentially distributed time the  coordinate moves again with its original velocity, see  Figure~\ref{fig: Zig-Zag} for an illustration of the sticky version of the Zig-Zag sampler (\cite{bierkens2019}). By this we mean that the dynamics of a ordinary PDMP are adjusted such that  the process can spend a positive amount of time at the origin, at the coordinate axes and at the coordinate (hyper-)planes by sticking to $0$ in each coordinate for a random time span whenever the process hits $0$ in that particular coordinate.  By restoring the original velocity of each coordinate after sticking at 0, we  effectively generate \emph{non-reversible jumps between states with different sets of non-zero coordinates}. In the Bayesian context this corresponds to having non-reversible jumps between models of varying dimensionality.

This allows us to construct a piecewise deterministic process that has a pre-specified measure $\mu$ as invariant measure, which we  assume to be of the  form
\begin{equation}
    \label{eq: target measure}
    \mu(\dd x) = C_\mu \exp(-\Psi(x))\prod_{i=1}^d \left( \dd x_i + \frac1{\kappa_i} \delta_{0}(\dd x_i)\right) 
\end{equation}
for some differentiable  function $\Psi$, normalising constant $\, C_\mu> 0$ and positive parameters $\kappa_1,\kappa_2,\dots,\kappa_d$. Here the Dirac masses are located at $0$, but generalizations are straightforward. The resulting samplers and processes are referred to as \emph{sticky samplers} and \emph{sticky piecewise deterministic Markov processes} respectively. 
The proportionality constant $C_\mu$ is assumed to be unknown while $(\kappa_i)_{i=1,\dots,d}$ are known. This is a natural assumption; suppose a statistical model with parameter $x$ and log-likelihood $\ell(x)$ (notationally, we drop the dependence of $\ell$ on the data). Under the spike-and-slab prior defined in Equation~\eqref{eq: spike-and-slab prior}, the posterior measure is of the form of Equation~\eqref{eq: target measure} with
\begin{equation}
    \label{eq: equivalence spike-and-slab}
     \Psi(x) = C - \ell(x) - \sum_{i=1}^d  \log(\pi_{i}(x_i)), \quad \kappa_i =  \frac{w_i}{1-w_i}\pi_i(0) 
\end{equation}
where $C$, independent of $x$, can be chosen freely for convenience. A popular choice for $\pi_i$ is a Gaussian density centered at $0$ with standard deviation $\sigma_i$. In this case, as  $w/(1-w) \approx w$ for $w\approx 0$,   $\kappa_i$ depends linearly on $w_i/\sigma_i$ in the sparse setting.

\begin{figure}[ht!]
    \centering
    \includegraphics[width = 0.8\linewidth]{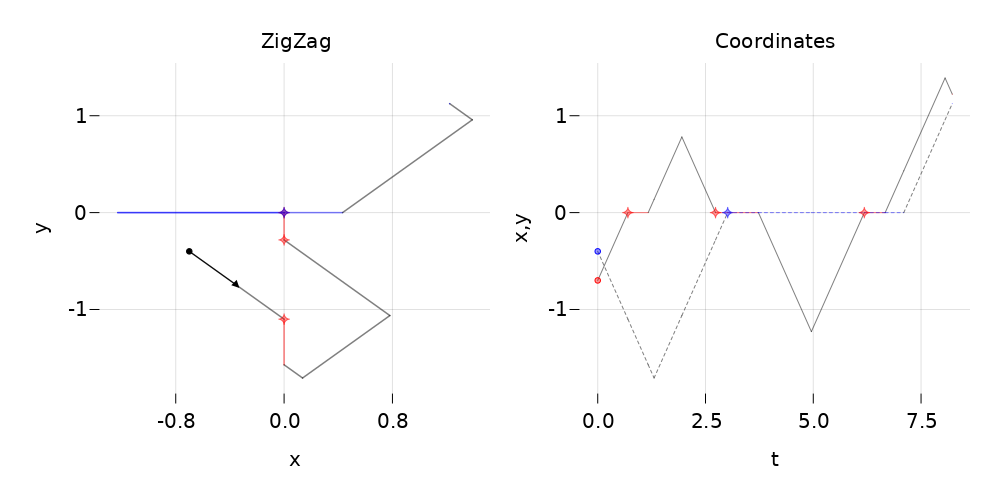}
    \caption{2-dimensional Sticky Zig-Zag sampler with initial position $(-0.75, -0.4)$ and initial velocity $(+1, -1)$. On the left panel, a trajectory on the $(x,y)$-plane of the Sticky Zig-Zag sampler. The sticky event times relative to the $x$ (respectively $y$) coordinate and the trajectories with the $x$ (respectively $y$) stuck at 0 are marked with a blue (respectively red) cross and line. On the right panel, the trajectories of each coordinate  against the time using the same (color-) scheme. The trajectory of $y$ is dashed.}
    \label{fig: Zig-Zag}
\end{figure}

Relevant quantities useful for model selection, such as the posterior probability of  a model excluding the first variable
{\small \[
    \mu(\{0\}\times \RR^{d-1}) = C_\mu \int \exp(-\Psi(x)) \frac{1}{\kappa_1} \delta_0(\dd x_1) \prod_{i=2}^d \left( \dd x_i + \frac1\kappa_i \delta_{0}(\dd x_i)\right) 
\]}%
cannot be directly computed if  $C_\mu$ is unknown. However, given a trajectory $\left(x(t)\right)_{0\le t\le T}$ of a PDMP  with invariant measure $\mu$, the quantity $\mu(\{0\}\times \RR^{d-1})$ can be approximated by  the ratio $T_0/T$ where $T_{0} =  \mbox{Leb}\{0\le t \le T\colon x_1(t) = 0\}$. This simple, yet general idea requires the user only to specify $\{\kappa_i\}_{i=1}^d$ and $\Psi$ as in Equation~\eqref{eq: target measure}. Moreover,  the posterior probability that a collection of variables are all jointly equal to zero can be estimated in a similar way by computing the fraction of time that all corresponding coordinates of the process are simultaneously zero and, more generally, expectations of functionals with respect to the posterior can be estimated from the simulated trajectory.

\subsection{Related literature}\label{sec: related literature}

The main purpose of this paper is to show how ``ordinary'' PDMPs can be adjusted to sample from the measure $\mu$ as defined in \eqref{eq: target measure}. The numerical examples illustrate its applicability in a wide range of applications. One specific application that has received much attention in the statistical literature is variable selection using a spike-and-slab prior. For the linear model, early contributions include \textcite{mitchell1988bayesian} and \textcite{george1993variable}. Some later contributions for hierarchical models derived from the linear model are \textcite{Ishwaran_2005}, \textcite{guan2011bayesian},  \textcite{zanella2018scalable} and \textcite{liang2021adaptive}. These works have in common that samples from the posterior are obtained from Gibbs sampling  and can be implemented in practise only in specific cases (when the Bayes factors between (sub-)models can be explicitly computed). A general and common framework for MCMC methods for variable selection was introduced in \textcite{Green95reversiblejump} and \textcite{green2009reversible} and referred to as \textit{reversible jump MCMC}.

Methods that  scale better  (compared to Gibbs sampling) with  either the  sample size or dimension of the parameter can be obtained in different ways. Firstly, rather than sampling from the posterior one can {\it approximate} the posterior within a specified class, for example using variational inference.  As an example, \textcite{ray2020spike} adopt this approach in a logistic regression problem with  spike-and-slab prior. Secondly, one can try to obtain sparsity  using a prior which is not of spike-and-slab type. For example, \textcite{griffin2021bayesian} consider   Gibbs sampling algorithms  for the linear model with  priors that are designed to promote sparseness, such as the Laplace or horseshoe prior (on the parameter vector). While such methods scale well with dimension of data and parameter, these target a different problem: the  posterior is not of the form \eqref{eq: target measure}. That is, the posterior itself is not sparse (though derived point estimates may be sparse and the posterior itself may have good properties when viewed from a frequentist perspective). Moreover, part of the computational efficiency is related to the specific model considered (linear  or logistic regression model) and, arguably,  a generic gradient-based MCMC method would perform poorly on such measures since the gradient of the \mbox{(log-)density} near 0 in each coordinate explodes to account for the change of mass in the neighborhood of 0 induced by the continuous spike component of the prior.

 A recent related work by  \textcite{chevallier2020reversible} addresses variable selection problems using PDMP samplers. The different approach taken in that paper is  based on  the framework of reversible jump (RJ) MCMC as proposed in  \textcite{Green95reversiblejump}. A comparison between \textcite{chevallier2020reversible} and our work may be found in Appendix~\ref{app: Comparison between reversible jump PDMPs and sticky PDMPs}. 

\subsection{Contributions}
\begin{itemize}
    \item We show how to construct sticky PDMP samplers from ordinary PDMP samplers for sampling from the measure in Equation~\eqref{eq: target measure}. This extension allows for informed exploration of sparse models and does not require any additional tuning parameter. We rigorously characterise the stationary measure of the sticky Zig-Zag sampler.

    \item 
    We analyse the computational efficiency of the sticky Zig-Zag sampler by studying its complexity and mixing time.
    \item   
     We demonstrate the performance of the sticky Zig-Zag sampler on a variety of high dimensional statistical examples  
     (e.g. the example in Section~\ref{subsec: spatially structured sparsity} has dimensionality $10^6$). 
\end{itemize} 
The Julia package \texttt{ZigZagBoomerang.jl} (\cite{mschauer/ZigZagBoomerang.jl}) implements efficiently the sticky PDMP samplers from this article for general use. 

\subsection{Outline}

Section~\ref{sec: sticky PDMP samplers} formally introduces sticky PDMP samplers and gives the main theoretical results for the sticky Zig-Zag sampler. In  Section~\ref{sec: Subsampling} we explain how the sticky Zig-Zag sampler may be applied to subsampled data, allowing the algorithm to access only a fraction of data at each iteration, hence reducing the computational cost from $\cO(N)$ to $\cO(1)$, where $N$ is the sample size.
In Section~\ref{sec:runtime analysis} we extend the Gibbs sampler for variable selection for target measures of the form of Equation~\eqref{eq: target measure}. We analyse and compare the computational complexity and the mixing times of both the sticky Zig-Zag sampler and the Gibbs sampler. 
Section~\ref{sec: examples} presents four statistical examples with simulated data and analyses the outputs after applying the algorithms considered in this article. In Section~\ref{sec: discussions}  both limitations and  promising research directions are discussed. 

There are five appendices. 
The derivation of our theoretical results is given in Appendix~\ref{app: details of the sticky zz sampler}. Appendix~\ref{app: other sticky samplers} extends some of the theoretical results for two other sticky samplers: the sticky version of the Bouncy particle sampler (\cite{2015arXiv151002451B}) and the Boomerang sampler (\cite{bierkens2020boomerang}),  the latter having Hamiltonian deterministic dynamics invariant to a prescribed Gaussian measure.  Appendix~\ref{app: Comparison between reversible jump PDMPs and sticky PDMPs} contains a self-contained discussion with heuristic arguments and simulations which highlight the differences between the sticky PDMPs and the method of \textcite{chevallier2020reversible}. Appendix~\ref{app: details of section 4} complements Section~\ref{sec:runtime analysis} with the details of the derivations of the main results and by presenting local implementations of the sticky Zig-Zag sampler that benefit of a sparse dependence structure between the coordinates of the target measure. Appendix~\ref{app: details of examples} contains some of the details of the numerical examples of Section~\ref{sec: examples}.

\subsection{Notation}

The $i$th element of the vector $x \in \RR^d$ is denoted by $x_i$. We denote  $x_{-i} := (x_1,x_2,\dots,x_{i-1}, x_{i+1},\dots,x_d) \in \RR^{d-1}$. Write
\[
\left(x[k\colon y]\right)_{i}:=\left\{\begin{array}{ll}x_{i} & i \neq k, \\ y & i=k.\end{array}\right. 
\]
and $[x]_A := (x_i)_{i \in A} \in \RR^{|A|}$ for a set of indices $A\subset \{1,2,\dots,d\}$ with cardinality $|A|$. 
We denote by $\sqcup$ the disjoint union between sets and the positive and negative part of a real-valued function  $f$ by $f^+ := \max(0, f)$ and $f^- := \max(0, -f)$ respectively so that $f = f^+ - f^-$. For a topological space $E$, let $\cB(E)$ denote the Borel $\sigma$-algebra on $E$. Denote by $\cM(E)$ the class of Borel measurable functions $f\colon E \to \RR$ and let  $C(E) = \{f \in \cM(E)\colon f \text{ is  continuous }\}$. For a measure $\mu(\dd x, \dd y)$ on a product space $\cX,\cY$, we write the marginal measure on $\cX$ by $\mu(\dd x) = \int_{\cY} \mu(\dd x, \dd y)$.
\section{Sticky PDMP samplers}\label{sec: sticky PDMP samplers}
In what follows, we formally describe the sticky PDMP samplers (Section~\ref{sec: general framework}) and give the main theoretical results obtained for the sticky Zig-Zag sampler (Section~\ref{sec: sticky PDMP samplers>Sticky Zig-Zag sampler}). Section~\ref{sec: Subsampling} extends the sticky Zig-Zag sampler with subsampling methods.

\subsection{Construction of sticky PDMP samplers}\label{sec: general framework}  
The state space of the the sticky PDMPs contains two copies of zero for each coordinate position. This construction allows a coordinate process arriving at zero from below (or above) to spend an exponentially distributed  time at zero before jumping to the ``other'' zero and continuing the dynamics.  Formally, let $\overline \RR$ be the  disjoint union
%\begin{align*} 
$ \overline \R  = (-\infty,0^-] \sqcup [0^+,\infty)$
%\end{align*}
with the natural topology\footnote{{A function $f\colon \overline{\RR} \to \RR$ is continuous if both restrictions to $(\infty, 0^-]$ and $[0^+, \infty)$ are continuous. If $f(0^-) = f(0^+)$, we write $f(0)$.}} $\tau$,
where we use the notation $0^-$, $0^+$ to distinguish the zero element in $(-\infty,0]$ from the zero element in $[0,\infty)$.
The process has \emph{c\`adl\`ag}\footnote{I.e., trajectories that are continuous from the right, with existing limits from the left.} trajectories in the locally compact state space  $E = \overline\RR^d \times \cV$, where $\cV \subset \RR^d$. 
Pairs of position and velocity will typically be denoted by  $ (x, v) \in \overline\RR^d \times \cV$.
A trajectory reaching zero in a coordinate from below (with positive velocity) or from above (with negative velocity) spends time at the closed end of the half open interval $(-\infty, 0^-]$ or $[0^+, \infty)$, respectively. For $i = 1,\dots, d$ we define the associated `frozen boundary' $\mathfrak F_i \subset E$ for the $i$th coordinate as 
\[\mathfrak F_i :=\{(x,v) \in E\colon x_i = 0^-, \, v_i >0  \, \text{ or } \, x_i = 0^+, \, v_i < 0\} .\] 
Thus the $i$th coordinate of the particle is sticking to zero (or frozen), if the state of the particle belongs to the $i$th frozen boundary $\mathfrak F_i$. 

Sometimes, we abuse notation by writing $(x_i,v_i) \in \mathfrak F_i$ when $(x,v) \in \mathfrak F_i$ as the set $\mathfrak F_i$ has restrictions only on $x_i, v_i$.  
The closed endpoints of the half-open intervals are somewhat reminiscent of sticky boundaries in the sense of \textcite[Example 5.59]{Liggett2010}.
%}
Denote by $\alpha \equiv  \alpha(x,v)$ the set of indices of active coordinates corresponding to state $(x,v)$, defined by 
\begin{equation}\label{eq:alpha}
\alpha(x,v)= \{i\in \{1,2,\dots, d\} \colon (x, v) \notin \mathfrak F_i \}  \end{equation}
and its complement $\alpha^c = \{1,2,\dots,d\}\setminus 
\alpha$.
Furthermore define a jump or \emph{transfer mapping} $T_i \colon \mathfrak F_i \rightarrow E$ by 
\[ T_i (x,v) = \begin{cases}
(x[i\colon 0^+], v) & \text{if } x_i = 0^-, v_i > 0,\\
(x[i\colon 0^-], v) & \text{if } x_i = 0^+, v_i < 0.
\end{cases}\]
The sticky PDMPs on the space $E$ are determined by their infinitesimal characteristics:
their dynamics are determined by random state changes happening at random jump times of a time inhomogeneous Poisson process with intensity depending on the state of the process, and a deterministic flow governed by a differential equation in between.
The state changes are characterised by a Markov kernel $\cQ\colon E\times \cB(E) \to [0,1]$,
at random times sampled with state dependent intensity $\lambda\colon E \to [0,\infty)$.
The deterministic dynamics are determined coordinate-wise by the integral equation 
\begin{equation}
    \label{eq: deterministic dynamics}
    (x_i(t), v_i(t)) = (x_i(s),v_i(s)) +\int_s^t \xi_i(x_i(r), v_i(r)) \dd r, \quad i=1,2,\dots,d,
\end{equation}
with $\xi_i$ being state dependent with form
\begin{equation}
\label{eq: deterministic dynamics 2}
\xi_i(x,v) = \begin{cases}
\bar\xi_i(x_i,v_i) & (x_i,v_i) \notin \fF_i\\
(0, 0) & (x_i,v_i) \in \fF_i,
\end{cases}
\end{equation}

 for  functions $\bar \xi_i\colon \overline \RR \times \RR\to\overline \RR \times \RR$ which depend on the specific PDMP chosen and corresponds to the coordinate-wise dynamics  of the ordinary PDMP while the  second case in Equation~\eqref{eq: deterministic dynamics 2} captures the behaviour of the $i$th coordinate when it sticks at 0.

For PDMP samplers, we typically have $\bar \xi_i = \bar \xi_j$ for all $i,j \in 1, \dots, d$ and we have different types of state changes given by Markov kernels $\cQ_1$, $\cQ_2$, \dots, for example refreshments of the velocity, reflections of the velocity, unfreezing of a coordinate etc.  If each transition is triggered by its individual independent Poisson clock with intensity $\lambda_1, \lambda_2, \ldots$, then  $\lambda = \sum_i \lambda_i$, and $\cQ$ itself can be written as the mixture \[\cQ((x,v), \cdot) = \sum_i \frac{\lambda_i((x,v))}{\lambda((x,v))} \cQ_i((x,v), \cdot).\]

With that, the dynamics of the sticky PDMP sampler $t\mapsto (X(t), V(t))$ are as follows: starting from $(x, v) \in E$, 
    \begin{enumerate}
        \item its flow in each coordinate is deterministic and continuous until an event happens. The deterministic dynamics are given by \eqref{eq: deterministic dynamics}. Upon hitting $\mathfrak F_i$, the $i$th coordinate process freezes, captured by the state dependence of \eqref{eq: deterministic dynamics 2}.
        \item   A frozen coordinate  ``unfreezes'' or ``thaws'' at rate equal to $\kappa_i|v_i|$ by jumping according to the transfer mapping $T_i$ to the location $(0^+, v_i)$ (or $(0^-, v_i)$) outside $\fF_i$ and continuing with the \emph{same} velocity as before.
        That is, on hitting $\mathfrak F_i$, the $i$th coordinate process freezes for an independent exponentially distributed time with rate $\kappa_i|v_i|$.
             This constitutes a non-reversible move between models of different dimension. 
       The corresponding transition $\cQ_{i,\operatorname{thaw}}$ is the Dirac measure at $ \delta_{T_i(x,v)}$ and the intensity component $\lambda_{i,\operatorname{thaw}}$ equals $\kappa_i|v_i| \ind_{\fF_i}$. 
        \item An inhomogeneous Poisson process $\lambda_{\mathrm{refl}}$ with  rate depending on  $\Psi$ triggers the reflection events. At a reflection event time, the process changes its velocities according to its reflection rule $\cQ_{\mathrm{refl}}$ in such a way that the process is invariant to the measure $\mu$. 
        \item  Refreshment events can be added, where, at exponentially distributed  inter-arrival times, the velocity changes according to a refreshment rule leaving the measure $\mu$ invariant. Refreshments are sometimes necessary for the process to be ergodic.  
        \end{enumerate}
        
The resulting stochastic process $(X_t, V_t)$ is a sticky PDMP with dynamics $\cQ$, $\lambda$, $\varphi$, initialised in  $(X(\tau_0),V(\tau_0))$.
Let $s \to \varphi(s, x, v)$  be the deterministic solution of \eqref{eq: deterministic dynamics}  starting in $(x,v)$. Set $\tau_0=0$ and the initial state $(X(\tau_0),V(\tau_0)) \in E$. A sample of a sticky PDMP is given by the recursive construction in Algorithm~\ref{alg: algorithm 1}. 
  
\begin{algorithm}
\caption{PDMP samplers: recursive construction}
Given the current state $(X(\tau_k), V(\tau_k))$ at time $\tau_k$
\begin{enumerate}
    \item Sample independently $\Delta_k$ as the first event time of an inhomogeneous Poisson process. We denote $\Delta_k \sim \text{Poiss}(s \to \lambda(\varphi(s, X(\tau_k), V(\tau_k)))$,  with
\begin{equation}\label{eq:jumptimes} \PP\left( \Delta_k \ge t\right) = \exp\left(-\int_0^t  \lambda(\varphi(s, X(\tau_k), V(\tau_k)) \dd s\right). \end{equation} 
\item Let $\tau_{k+1} = \tau_k + \Delta_k$ and set for $t \in [\tau_k, \tau_{k+1})$ 
\[
(X(t), V(t)) = \varphi(t - \tau_k, X(\tau_k), V(\tau_k)).
\]
\item Let 
\[(X(\tau_{k+1}), V(\tau_{k+1})) \sim \cQ(\varphi(\Delta_k, X(\tau_k), V(\tau_k)), \cdot).\]
\end{enumerate}
\label{alg: algorithm 1}
\end{algorithm}

In what follows, we focus our attention on the Sticky Zig-Zag sampler and defer to Appendix~\ref{app: other sticky samplers} the details of the Bouncy Particle sampler and the Boomerang samplers.

\subsection{Sticky Zig-Zag sampler}
A trajectory of the Sticky Zig-Zag sampler  has piecewise constant velocity which is an  element of the set $\mathcal{V} = \{v \colon |v_i| = a_i, \forall i \in \{1,2,\dots, d\}\}$
for a fixed vector $a$. For each index $i$,  the deterministic dynamics of Equation~\eqref{eq: deterministic dynamics 2} are determined by the function $\bar \xi_i(x_i, v_i) = (v_i, 0)$. 
The reflection 
rate $\lambda_{\mathrm{refl}}$ 
is \emph{factorised} coordinate-wise and the reflection event for the $i$th coordinate is determined by the inhomogeneous rate \begin{equation}
    \label{eq: poisson rate zigzag}
     \lambda_{i, \mathrm{refl}}(x,v) = \1_{i\in \alpha(x,v)}(v_i \partial_i \Psi(x))^+.
\end{equation}

At reflection time of the $i$th coordinate, the transition kernel $\cQ_{i, \mathrm{refl}}$ acts deterministically by flipping the sign of the $i$th velocity component of the state: $(x_i, v_i) \to (x_i, -v_i)$. As shown in \textcite{bierkens2019ergodicity}, the Zig-Zag sampler does not require refreshment events in general to be ergodic.

\subsection{Theoretical aspects of the Sticky Zig-Zag sampler}\label{sec: sticky PDMP samplers>Sticky Zig-Zag sampler}
A theoretical analysis of the sticky Zig-Zag sampler is given in Appendix~\ref{app : construction sticky zz}. In this section we review key concepts and state the main results.

The stationary measure of a PDMP is studied by looking at the extended generator of the process which is an operator characterising the process in terms of local martingales - see \textcite[Section 14 ]{Davis1993} for details. The extended generator is - as the name suggests - an extension of the infinitesimal generator of the process (defined for example in \cite[Theorem 3.16]{Liggett2010}) in the sense that it acts on a larger class of functions than the infinitesimal generator and it coincides with the infinitesimal generator when applied to functions in the domain of the infinitesimal generator.

A general representation of the extended generator of PDMPs is given in \textcite[Section 26]{Davis1993}, while the infinitesimal generator of the ordinary Zig-Zag sampler is given in the supplementary material of \textcite{bierkens2019}. Here, we highlight the main results we have derived for the sticky Zig-Zag sampler.

 Recall  $t \to \varphi(t, x, v)$ denotes the  deterministic solution of \eqref{eq: deterministic dynamics} starting in $(x,v)$ and $\tau$ is the natural topology on $E$.
Define the operator $\cA$ with domain 
\begin{align*}
    \cD(\cA) = \{f &\in \cM(E): \, t \mapsto f(\varphi(t,x, v)) \text{ $\tau$-absolutely continuous } \forall (x,v) \ \text{and} \\
    & \forall i: \,\lim_{t \downarrow 0} f(x[i\colon 0^+ + t], \cdot) = f(x[i\colon 0^+], \cdot ),\, \lim_{t \downarrow 0} f(x[i\colon 0^- - t], \cdot ) = f(x[i\colon 0^-], \cdot)\}
\end{align*}
by $\cA f(x, v) = \sum_{i=1}^d \cA_i f(x, v)$
with
\[
\cA_i f(x, v) = \begin{cases}
a_i \kappa_i \left(f(T_i(x, v)) - f(x, v)\right) & (x, v) \in \mathfrak{F}_i,\\
v_i \partial_{x_i} f(x, v) + \lambda_i(x,v)\left(f(x, v[i:-v_i]) - f(x, v)\right) & \text{else.}
\end{cases}
\]

\begin{proposition}\label{prop: gen and ext gen of sticky zz}
The extended generator of the $d$-dimensional Sticky Zig-Zag process is given by $\mathcal A$ with domain $\mathcal D(\mathcal A)$. 
\end{proposition}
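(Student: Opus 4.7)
The plan is to verify the claim by applying the standard characterisation of extended generators for PDMPs developed in \textcite[Section 26]{Davis1993}. The sticky Zig-Zag process is a PDMP on $E = \overline{\mathbb{R}}^d \times \mathcal{V}$ with three local characteristics: the deterministic flow $\varphi$ obtained by integrating $\xi$ in Equations~\eqref{eq: deterministic dynamics}--\eqref{eq: deterministic dynamics 2}; the total jump intensity $\lambda(x,v) = \sum_i \lambda_i(x,v)$ that sums the reflection rates $\mathbf{1}_{i \in \alpha(x,v)}(v_i \partial_{x_i} \Psi(x))^+$ off $\mathfrak{F}_i$ with the thaw rates $a_i \kappa_i \mathbf{1}_{\mathfrak{F}_i}$; and the transition kernel $\mathcal{Q}$ that mixes the reflections $v_i \mapsto -v_i$ with the transfer maps $T_i$ in proportions dictated by the corresponding rates. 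First I would check that Davis's standard regularity conditions hold, in particular local integrability of $\lambda$ along flow trajectories so that the process is non-explosive; the thaw contribution is uniformly bounded by $\max_i a_i \kappa_i$, and the condition on the reflection rates is inherited from the standard hypotheses on $\Psi$ used for the ordinary Zig-Zag.

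Next I would invoke Davis's explicit formula: for any admissible $f$,
$$\mathcal{A}f(x,v) = \mathcal{X}f(x,v) + \lambda(x,v) \int_E \bigl(f(y) - f(x,v)\bigr)\, \mathcal{Q}\bigl((x,v), \mathrm{d}y\bigr),$$
where $\mathcal{X}f$ is the derivative of $f$ along the deterministic flow. Evaluating the right-hand side on $\mathfrak{F}_i$ and on its complement separately reproduces the announced coordinate-wise decomposition $\mathcal{A}f = \sum_i \mathcal{A}_i f$: on $\mathfrak{F}_i$ the $i$th flow component $\xi_i$ vanishes, so $\mathcal{X}_i f = 0$, and the only active jump is the thaw with rate $a_i \kappa_i$ and deterministic target $T_i(x,v)$, producing $a_i \kappa_i\bigl(f(T_i(x,v)) - f(x,v)\bigr)$; off $\mathfrak{F}_i$ the flow contributes $v_i \partial_{x_i} f$ and the reflection contributes $\lambda_i(x,v)\bigl(f(x,v[i\colon -v_i]) - f(x,v)\bigr)$. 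The per-coordinate attribution is legitimate because both $\lambda$ and $\mathcal{Q}$ factorise coordinate-wise and at most one of the two branches is active per index.

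The delicate point is the characterisation of the domain $\mathcal{D}(\mathcal{A})$ given the non-standard topology $\tau$ on $\overline{\mathbb{R}}$. Davis's generic condition requires $t \mapsto f(\varphi(t,x,v))$ to be absolutely continuous so that $\mathcal{X}f$ exists along the flow and enters the compensator of the associated martingale. The additional one-sided continuity conditions $\lim_{t \downarrow 0} f(x[i\colon 0^+ + t], \cdot) = f(x[i\colon 0^+], \cdot)$ and its counterpart at $0^-$ encode precisely the requirement that $f$ be continuous in $\tau$ when a thaw event returns the $i$th coordinate from the frozen boundary into the open half-space. Without them, the candidate process $f(X_t,V_t) - \int_0^t \mathcal{A}f(X_s,V_s)\, \mathrm{d}s$ would pick up uncompensated contributions at the thaw times and fail to be a local martingale.

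I expect the main obstacle to be this last pathwise bookkeeping. One must verify that in the topology $\tau$ the flow approaches $\mathfrak{F}_i$ only from the ``outside'' of the relevant half-open interval, so the sample path is left-continuous there in $\tau$ and the transfer map $T_i$ is the unique admissible jump discontinuity. Once this is made precise, the identification of $\mathcal{A}$ with the extended generator on $\mathcal{D}(\mathcal{A})$ follows from Davis's standard martingale computation, and the complete verification naturally belongs in Appendix~\ref{app : construction sticky zz}.
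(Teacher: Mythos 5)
Your plan applies Davis's extended-generator theorem directly to the sticky process on $E = \overline{\RR}^d \times \cV$. This is exactly the route the paper says is blocked, and identifying the obstacle is the main content of the proof, so the gap is substantive. Davis's theory requires the state space to be a countable disjoint union of \emph{open} subsets of Euclidean spaces, with a locally Lipschitz vector field on each component, and with any boundary reached by the flow being an ``active'' boundary from which the process jumps \emph{instantaneously} back into the interior. The sticky Zig-Zag violates both requirements on $E$: the vector field $\xi_i$ in \eqref{eq: deterministic dynamics 2} jumps discontinuously from $(v_i,0)$ to $(0,0)$ at the frozen boundary $\fF_i$, and — far more seriously — the process \emph{dwells} on $\fF_i$ for a positive, exponentially distributed time while the other coordinates keep flowing. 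That dwelling on a lower-dimensional subset is precisely what Davis's boundary mechanism cannot express. So writing down the formula $\cA f = \cX f + \lambda \int (f(y)-f)\,\cQ(\cdot,\dd y)$ on $E$, as you propose, is not licensed by the theorem you are citing; there is no verification of ``standard regularity conditions'' that makes the direct application go through.

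What the paper actually does, in Appendix~\ref{app : construction sticky zz}, is construct an auxiliary ``disjointified'' space $\tilde E$: for each label $\ell \in K^d$ (where $K$ records, per coordinate, whether the particle is moving toward $0$, away from $0$, or sitting at $0$) one gets a tagged open set $\tilde E^\circ_\ell$ of the appropriate reduced dimension $d_\ell$, on which the vector field is constant and hence Lipschitz. Frozen coordinates simply do not appear as dynamical degrees of freedom in the corresponding component. Davis's standard conditions are then checked on $\tilde E$, Theorem~26.14 gives the extended generator $\tilde\cA$ there, and a bijection $\iota\colon \tilde E \to E$ transports the martingale problem — and hence the extended generator and its domain — back to $E$. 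The per-coordinate expressions in $\cA_i$ and the one-sided continuity conditions in $\cD(\cA)$ then fall out of the pullback. Your writeup arrives at the right \emph{formula} for $\cA$ and the right membership conditions for $\cD(\cA)$, but by asserting rather than justifying the applicability of Davis's theorem. To fix it you would need either to build the tagged-space construction yourself (essentially redoing Appendix~\ref{app : construction sticky zz}), or to prove the martingale characterisation from scratch, neither of which your sketch does.
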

\begin{proof}
See Appendix~\ref{app: the extended generator of sticky zig zag}.
\end{proof}
Notice that, the operator $\cA$ restricted on $D = \{f \in C^1_c( E), \cA f \in C_b( E)\}$ coincides with the infinitesiaml generator of the ordinary Zig-Zag process restricted on $D$, see Proposition~\ref{prop: ext generator restricted resable standard zigzag generator}, Appendix~\ref{app: the extended generator of sticky zig zag} for details.

\begin{theorem}\label{thm: Sticky Zig-Zag main result}
The $d$-dimensional Sticky Zig-Zag sampler is a Feller process and a strong Markov process  in the topological space $(E, \tau)$ with stationary measure
\begin{equation}\label{eq: Sticky Zig-Zag invariant measure}
    \mu(\dd x, \dd v) =    \frac1C \sum_{u \in \mathcal{V}} \exp(-\Psi(x))\prod_{i = 1}^d\left(\dd x_i + \frac{1}{\kappa_i}\left( \1_{v_i>0}\, \delta_{0^-}(\dd x_i) + \1_{v_i<0}\, \delta_{0^+}(\dd x_i) \right) \delta_{u}(\dd v)\right),   
\end{equation}
for some normalization constant $C>0$.
\end{theorem}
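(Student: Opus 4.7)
The plan is to split the proof into three parts: the Feller property on $(E,\tau)$, the strong Markov property, and the identification of $\mu$ as stationary measure. For the Feller property I would build on the recursive construction of Algorithm~\ref{alg: algorithm 1}. Away from $\bigcup_i \mathfrak F_i$ the flow is coordinate-wise translational, and the only subtle continuity question is the behaviour of trajectories hitting the freezing set. Here the choice of disjoint-union topology $\tau$ is essential: $0^-$ and $0^+$ are not close in $\tau$, so for initial conditions in a small $\tau$-neighbourhood the first hitting time of $\mathfrak F_i$ is a continuous functional of $(x,v)$, and the transfer map $T_i$ is trivially continuous because it moves between disconnected components of $E$. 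Combining this with the standard analysis of PDMP semigroups by conditioning on the first event time shows that $P_t$ maps $C_0(E)$ into $C_0(E)$ with $P_t f \to f$ as $t\downarrow 0$, which is the Feller property; the strong Markov property then follows from the standard fact that Feller processes are strong Markov.

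To show that $\mu$ is invariant I would use Proposition~\ref{prop: gen and ext gen of sticky zz} and verify $\int_E \cA f\,\dd\mu = 0$ on a measure-determining subset of $\cD(\cA)$, splitting as $\int \cA f\,\dd\mu = \sum_i \int \cA_i f\,\dd\mu$ and handling each coordinate $i$ separately. On the regular set $\{(x,v)\notin\mathfrak F_i\}$, integration by parts in $x_i$ applied to $v_i\partial_{x_i}f\,\exp(-\Psi)$ over $(-\infty,0^-]\cup[0^+,\infty)$ produces the bulk term $v_i f\,(\partial_{x_i}\Psi)\exp(-\Psi)$, which the reflection term $\lambda_{i,\mathrm{refl}}\bigl(f(\cdot,v[i\colon-v_i])-f\bigr)$ cancels by the same symmetrisation as in the ordinary Zig-Zag computation of \textcite{bierkens2019}, pairing $v_i$ with $-v_i$ and using $a^+ - (-a)^+ = a$. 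What remains are the two endpoint evaluations at $x_i = 0^\pm$; after summing over $v_i\in\{-a_i,+a_i\}$ these take the form $\pm a_i\bigl[f(x[i\colon 0^+],v) - f(x[i\colon 0^-],v)\bigr]\exp(-\Psi(x[i\colon 0]))$.

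These boundary terms are cancelled exactly by the sticky contribution $\int_{\mathfrak F_i} a_i\kappa_i\bigl(f(T_i(x,v)) - f(x,v)\bigr)\,\dd\mu$: on $\mathfrak F_i$ the measure $\mu$ carries an atom of mass $1/\kappa_i$ at $0^-$ when $v_i>0$ and at $0^+$ when $v_i<0$, and the product $a_i\kappa_i\cdot (1/\kappa_i) = a_i = |v_i|$ produces precisely the boundary flux needed for cancellation. This single arithmetic identity is the reason why the thaw rate $\kappa_i|v_i|$ is the correct calibration against the atomic weight $1/\kappa_i$. The main obstacle I expect is not this cancellation (which is essentially bookkeeping once the measure-theoretic set-up is laid out) but the Feller property on the non-standard topological space $(E,\tau)$, where one must handle concurrent freezing events across several coordinates and the propagation of continuity through the transfer maps $T_i$ and the reflections.
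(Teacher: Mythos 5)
Your invariance argument is essentially identical to the paper's (Appendix~\ref{app: remaining proofs of section 2}): integration by parts over $(-\infty,0^-]$ and $[0^+,\infty)$, cancellation of the bulk term against the reflection term via $-v_i\partial_i\Psi + \lambda_i(x,v) - \lambda_i(x,v[i\colon -v_i]) = 0$, and cancellation of the endpoint flux $\pm a_i\bigl[f(x[i\colon 0^+],v) - f(x[i\colon 0^-],v)\bigr]$ against the thaw contribution through the calibration $a_i\kappa_i\cdot(1/\kappa_i)=a_i$, which you correctly identify as the decisive identity. Where you diverge is in the route and order for the first two claims. The paper first builds the process as a \emph{standard} Davis-type PDMP on a tagged auxiliary space $\tilde E=\bigcup_\ell\tilde E_\ell$ whose pieces are disconnected and of varying dimension, together with a bijection $\iota\colon\tilde E\to E$ (Appendix~\ref{app : construction sticky zz}); it then obtains the strong Markov property of $Z_t=\iota(\tilde Z_t)$ directly from Davis' Theorem~25.5 plus a pull-back lemma (Appendix~\ref{app: strong markov property sticky zz}), \emph{before} and independently of the Feller property, which is proved separately by an iterated-conditioning argument $\tilde G_n\to \tilde P_t$ adapted from Davis' Lemma~27.3 (Appendix~\ref{app: feller property sticky zz}). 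You propose the opposite order: establish Feller directly on $(E,\tau)$ and deduce strong Markov as a standard corollary. Both routes are admissible, but your Feller sketch is where most of the real work is being compressed: ``conditioning on the first event time'' must be iterated and controlled uniformly over initial conditions near the freezing boundaries, including scenarios where several coordinates hit zero in quick succession, and the paper's tagged-space decomposition plus the uniform estimate $|\tilde G_n\psi - \tilde P_t f|\le 2\max(\|\psi\|,\|f\|)\,\PR(t\ge T_n)$ is precisely the scaffolding that delivers this. You would need to rebuild an equivalent scaffold rather than appeal to ``standard analysis of PDMP semigroups.'' Two smaller points: the paper works with $\tau$-continuous bounded functions $C_b(E)$ rather than $C_0(E)$, to match its invocation of Liggett's Theorem~3.37; and it restricts the generator to the core $D=\{f\in C^1_c(E)\colon\cA f\in C_b(E)\}$, on which the sticky boundary condition of Proposition~\ref{prop: ext generator restricted resable standard zigzag generator} holds automatically, so your ``measure-determining subset of $\cD(\cA)$'' should be identified with that $D$.
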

\begin{proof} The construction of the process and the characterization of the extended generator and its domain of the $d$-dimensional Sticky Zig-Zag process can be found in Appendix~\ref{app : construction sticky zz}. We then prove that the process is Feller and strong Markov (Appendix~\ref{app: strong markov property sticky zz} and Appendix~\ref{app: feller property sticky zz}). By \textcite[Theorem~3.37]{Liggett2010}, $\mu$ is a stationary measure if, for all $f \in D$, $\int \cL f \dd \mu = 0$. This last equality is derived in Appendix~\ref{app: remaining proofs of section 2}. 
\end{proof}
\begin{theorem}\label{thm: Ergodicity of szz}
    Suppose $\Psi$ satisfies Assumption~\ref{ass: Assumption ergodicity}. Then the sticky Zig-Zag process is ergodic and $\mu$ is its unique stationary measure.
\end{theorem}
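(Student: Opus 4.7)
The plan is to combine the stationary measure provided by Theorem~\ref{thm: Sticky Zig-Zag main result} with an irreducibility argument and the Lyapunov/tail assumptions encoded in Assumption~\ref{ass: Assumption ergodicity}, so that a standard Meyn--Tweedie type uniqueness result applies and forces any invariant probability measure of the sticky Zig-Zag process to coincide with $\mu$. Because the Feller and strong Markov properties are already established in Theorem~\ref{thm: Sticky Zig-Zag main result}, it suffices to control long-time behaviour (non-evanescence) and accessibility.

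First I would work with the time-$h$ skeleton chain $\{(X(nh),V(nh))\}_{n \ge 0}$ on $E$, for some fixed $h>0$. Using the Feller property I would show $\mu$-irreducibility: for any open set $O \subset E$ with $\mu(O)>0$ and any initial state $(x,v) \in E$, there exists $n$ with $P^n((x,v),O) > 0$. The key point is that the only departure from the ordinary Zig-Zag dynamics is the sojourn on the frozen boundaries $\fF_i$, but since each thaw event has strictly positive rate $\kappa_i a_i$, sojourns on $\fF_i$ are almost surely finite; after unfreezing the process continues with its pre-freeze velocity and behaves locally like an ordinary Zig-Zag. Therefore controllability arguments for the ordinary Zig-Zag (as in \textcite{bierkens2019ergodicity}) can be transferred, combined with the observation that by choosing an appropriate sequence of reflection events one can either enter or avoid each frozen boundary, reaching any prescribed configuration of active coordinates.

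Next, I would use Assumption~\ref{ass: Assumption ergodicity} on $\Psi$ to construct a Lyapunov function $W\colon E \to [1,\infty)$ such that $\cA W \le -c W + d \mathbf{1}_K$ for some compact $K \subset E$; candidates are of the form $W(x,v) = \exp(\beta \Psi(x)) \phi(x,v)$ with $\phi$ adapted from the ordinary Zig-Zag Lyapunov function, with care taken at the sticky boundaries (on $\fF_i$ the generator acts by finite difference, and the functional form can be made such that the decrease inherited from unfreezing plus the pre-existing decrease away from $\fF_i$ yields the drift). This gives non-evanescence and a small-set/petite-set structure on compact subsets of $E$.

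Combining $\mu$-irreducibility with the drift inequality, by Meyn--Tweedie (see also the treatment in \textcite{Davis1993}, Section~34) the process is positive Harris recurrent, hence admits a unique invariant probability measure; since $\mu$ is already invariant by Theorem~\ref{thm: Sticky Zig-Zag main result}, uniqueness identifies it with $\mu$ and yields ergodicity. The main obstacle I expect is the accessibility/controllability step: because of the nonstandard topology on $\overline\R$ (two copies of zero) and because reflections only change the sign of a single component in Zig-Zag, one must carefully construct, for each pair of states $(x,v)$ and a target open set $O$, a finite sequence of reflection and thaw events that routes the process into $O$, in particular across all possible ``models'' $\alpha(x,v) \subset \{1,\dots,d\}$. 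Once this combinatorial routing is spelled out, the remaining Lyapunov computation on $\fF_i$ should follow the ordinary Zig-Zag template with only cosmetic modifications.
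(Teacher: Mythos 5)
Your proposal is a genuinely different route from the paper's, and while plausible in outline, the paper's argument is considerably leaner and sidesteps the two obstacles you flagged. The paper does \emph{not} construct a Lyapunov function for the sticky process at all. Instead, it exploits the fact that the null model $\bm 0 = (0,\dots,0)$ is an \emph{atom} carrying positive mass under $\mu$. Its argument (Appendix~A.5) runs as follows: (i) for every $\alpha \subset \{1,\dots,d\}$, the ordinary $|\alpha|$-dimensional Zig-Zag on the sub-model $\cM_\alpha$ is ergodic under Assumption~\ref{ass: Assumption ergodicity}, citing \textcite{bierkens2019ergodicity} (Proposition~\ref{prop: ergodicity of subzigzags}); (ii) Harris recurrence of the sticky process to the null set $\fS = \cap_i \fF_i$ is shown iteratively --- starting from $x_0 \in \cM_\alpha$, the time $T_2$ until some free coordinate first hits zero is a.s.\ finite by (i), the time $T_1$ until some frozen coordinate thaws is an independent exponential, and $\PP(T_2 < T_1) > 0$; applying the Markov property along a decreasing chain of sub-models gives $\PP(\tau_\fS < \infty) = 1$ (Proposition~\ref{prop: harris recurrent}); (iii) skeleton irreducibility relative to $\overline\delta_0$ is immediate, since upon reaching $\bm 0$ the process remains there for $\text{Exp}(\sum_j \kappa_j)$ time with positive probability of exceeding $\Delta$; and (iv) \textcite[Theorem~6.1]{meyn1993stability} is invoked with Harris recurrence $+$ invariant probability $+$ irreducible skeleton to conclude.

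Compared to your route (drift/Lyapunov $+$ $\mu$-irreducibility via controllability $\Rightarrow$ positive Harris recurrence): the paper ``inherits'' all quantitative tail control from the ordinary Zig-Zag ergodicity theorem rather than reproving it for the sticky generator, which would indeed require nontrivial care at the frozen boundaries (your candidate $W = e^{\beta\Psi}\phi$ would need the finite-difference term $\kappa_i a_i(W(T_i(x,v))-W(x,v))$ on $\fF_i$ to align with the drift away from $\fF_i$, and it is not obvious this works uniformly). The paper also replaces your general controllability argument --- routing the process through arbitrary model configurations --- with a much weaker claim: reach the single point $\bm 0$. Because $\bm 0$ is an accessible atom, that suffices. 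One further subtlety your sketch glosses over, which the paper handles explicitly, is that $\fS$ contains $2^d$ distinct states (each coordinate can be at $0^+$ or $0^-$); the paper's proof of Proposition~\ref{prop: harris recurrent} includes a second iterative step to show that from any point of $\fS$ one can reach the desired copy of zero in each coordinate with positive probability, by releasing a single coordinate at a time, using the $1$-dimensional ordinary Zig-Zag ergodicity to return it to zero from the other side, and conditioning on the other coordinates remaining frozen. Your approach would need to address the same point in the irreducibility step.
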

\begin{proof} See Appendix~\ref{app: ergodicity of the szz}. 
\end{proof}

% [Furthermore, we expect the measure $\mu$ in equation \eqref{eq: Sticky Zig-Zag invariant measure} to be the \emph{unique} stationary measure of the process. This is because irreducibility of the sticky Zig-Zag follows from irreducibility of the ordinary Zig-Zag which was proved in \textcite{bierkens2019ergodicity}, under mild conditions on $\Psi$. Moreover, the set $\fF_1\cap\fF_2\cap\dots\cap\fF_d$ (the set where all the particles are stuck at 0) is recurrent (with recurrence time given by Remark~\ref{rmk: recurrence time} below). Hence,  by , $\mu$ is the $\emph{unique}$ stationary measure of the process.]}

The following remark establishes a formula for the recurrence time of the Sticky Zig-Zag to the null model, and may serve as guidance in design of the probabilistic model or the choice of the  parameter $\kappa_i$, here assumed  for simplicity to be all equal. 
\begin{remark} \label{rmk: recurrence time}
    \emph{(Recurrence time of the Sticky Zig-Zag to zero)} The expected time to leave the position $\bm0 = (0,0,\dots,0)$ for a $d$-dimensional  Sticky Zig-Zag with unit velocity components is $\frac{1}{\kappa d}$ (since each coordinate leaves 0 according to an exponential random variable with parameter $\kappa$). 
    A simple argument given in Appendix~\ref{app: recurrence time to 0} shows that 
    the expected time of the process to return to the null model is 
    \begin{equation}
        \label{eq: recurrence time}
    \frac{1- \mu(\{\bm 0 \})}{d\kappa \mu(\{\bm 0\})}.
    \end{equation}
     \end{remark}

\subsection{Extension: sticky Zig-Zag sampler with subsampling method}\label{sec: Subsampling}
Here we  address the problem of sampling a $d$-dimensional target measure when the log-likelihood is a sum of $N$ terms, when $d$ and $N$ are large. Consider for example a regression problem where both the number of covariates and the number of experimental units in the  dataset are large. In this situation  full evaluation of the log-likelihood and its gradient is prohibitive. However, PDMP samplers can still be used with the exact subsampling technique (e.g.~\cite{bierkens2019}) as this allows for substituting the  gradient of the log-likelihood (which is required for deriving the reflection times) by an estimate of it which is cheaper to evaluate, without introducing any bias on the output of the sampler.

The subsampling technique for Sticky Zig-Zag samplers requires to find an unbiased estimate of the gradient of $\Psi$ in  \eqref{eq: target measure}. To that end, assume the following decomposition:
\begin{equation}
    \label{eq: decomposition partial derivatives}
    \partial_{x_i} \Psi(x) =  \left(\sum_{j = 1}^{N_i} S(x, i, j)\right), \quad \forall x \in \overline \RR^d, \, i = 1,2,\dots, d, 
\end{equation}
for some scalar valued function $S$. This assumption on $\Psi$ is satisfied for example for the setting with a spike-and-slab prior and a likelihood that is a product of factors, such as for likelihoods of (conditionally) independent observations.

For fixed $(x,v)$ and $x^* \in \RR^d$, for each $i \in \alpha(x,v)$ the random variable \[N_i\left(S (x, i, J) - S (x^*, i, J )\right) + \partial_{x_i} \Psi(x^*), \quad J \sim \text{Unif}(\{1,2,\dots,N_i\})\] is an unbiased estimator for $\partial_{x_i} \Psi(x)$. Define the Poisson rates \[\tilde \lambda_{i,j}(x, v) = \left(v_i N_i(S(x, i, j) - S(x^*, i, j)) + v_i\partial_{x_i}\Psi(x^*)\right)^+\] and, for each $i \in \alpha$, define the bounding rate
\begin{equation*}
\overline \lambda_i(t, x, v) \ge \tilde \lambda_{i, j}(\varphi(t, x, v)), \quad t\ge 0, \,\forall j \in \{1,2,\dots,N_i\}, 
\end{equation*}
which is specified by the user and such that Poisson times with inhomogeneous rate $\tau \sim \text{Poiss}(s\to \overline \lambda_i(s,x,v))$ can be simulated (see Appendix~\ref{app: simulating sticky PDMPs} for details on the simulation of Poisson times).

The Sticky Zig-Zag with subsampling has the following dynamics:
\begin{itemize}
    \item the deterministic dynamics and the sticky events are identical to the ones of the Sticky Zig-Zag sampler presented in Section~\ref{sec: sticky PDMP samplers>Sticky Zig-Zag sampler};
    \item a \emph{proposed} reflection time equals $\min_{i\in \alpha(x,v)} \tau_i$, with $\{\tau_i\}_{i\in \alpha(x,v)}$ being independent inhomogeneous Poisson times with rates   $s \to \overline \lambda_i(s,x,v)$;
    \item at the proposed reflection time $\tau$ triggered by the $i$th Poisson clock, the process reflects its velocity according to the rule $(x,v) \to(x,v[i, -v_i])$ with probability $\tilde \lambda_{i, J}(\varphi(\tau, x, v))/\overline \lambda_i(\tau, x, v)$ where $J \sim \text{Unif}(\{1,2,\dots,N_i\})$.
\end{itemize}
 
\begin{proposition}\label{prop: invariant subsampler}
  The Sticky Zig-Zag with subsampling has a unique stationary measure given by Equation~\eqref{eq: Sticky Zig-Zag invariant measure}.
\end{proposition}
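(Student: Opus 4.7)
The plan is to imitate the invariance proof used for the non-subsampled sticky Zig-Zag (Theorem~\ref{thm: Sticky Zig-Zag main result}), exploiting the fact that subsampling alters only the reflection rates, and that the excess rate introduced by Jensen's inequality is symmetric in $v_i$ and therefore harmless under integration against $\mu$.

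First I would describe the effective generator of the subsampled process. Since the deterministic flow and the sticky/thaw mechanism are identical to those of Section~\ref{sec: sticky PDMP samplers>Sticky Zig-Zag sampler}, the only modification is the reflection component. A standard thinning argument (marking Poisson proposals with bounds $\overline\lambda_i$ and accepting with probability $\tilde\lambda_{i,J}/\overline\lambda_i$) shows that the actual reflection events form an inhomogeneous Poisson process with rate
\[
\tilde m_i(x,v) \;=\; \mathbb E_J\!\left[\tilde\lambda_{i,J}(x,v)\right] \;=\; \mathbb E_J\!\left[\bigl(v_i N_i(S(x,i,J)-S(x^*,i,J)) + v_i\partial_{x_i}\Psi(x^*)\bigr)^+\right].
\]
Consequently the subsampled process is a sticky PDMP on $E$ whose extended generator $\widetilde{\cA}$ differs from $\cA$ only in the reflection term, with $\lambda_{i,\mathrm{refl}}(x,v) = (v_i\partial_{x_i}\Psi(x))^+$ replaced by $\tilde m_i(x,v)$.

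Next I would exploit the unbiasedness assumption~\eqref{eq: decomposition partial derivatives}. Writing $a_J(x,v) := v_i N_i(S(x,i,J)-S(x^*,i,J)) + v_i\partial_{x_i}\Psi(x^*)$, we have $\mathbb E_J[a_J(x,v)] = v_i \partial_{x_i}\Psi(x)$ and $a_J(x,v[i\colon -v_i]) = -a_J(x,v)$. Hence
\[
\tilde m_i(x,v) - \tilde m_i(x,v[i\colon -v_i]) \;=\; \mathbb E_J[a_J^+ - a_J^-] \;=\; v_i \partial_{x_i}\Psi(x),
\]
which is exactly $\lambda_{i,\mathrm{refl}}(x,v) - \lambda_{i,\mathrm{refl}}(x,v[i\colon -v_i])$. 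Therefore the excess rate $\gamma_i(x,v) := \tilde m_i(x,v) - \lambda_{i,\mathrm{refl}}(x,v)$ is non-negative (by Jensen) and symmetric under the flip $v_i \mapsto -v_i$.

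Now I would verify invariance. For $f \in D$ as in Proposition~\ref{prop: gen and ext gen of sticky zz},
\[
\int (\widetilde{\cA} - \cA) f \, \dd\mu \;=\; \sum_{i=1}^d \int \gamma_i(x,v)\bigl(f(x,v[i\colon -v_i]) - f(x,v)\bigr)\, \dd\mu(x,v).
\]
Inspecting the explicit form of $\mu$ in \eqref{eq: Sticky Zig-Zag invariant measure}, the measure is invariant under the involution $(x,v)\mapsto(x,v[i\colon -v_i])$ (the sum over $u\in\cV$ is symmetric, and the sticky atom masses $\1_{v_i>0}\delta_{0^-}+\1_{v_i<0}\delta_{0^+}$ are swapped into one another under this flip without changing total mass). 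Combined with the $v_i$-symmetry of $\gamma_i$, the change-of-variables $v\leftrightarrow v[i\colon -v_i]$ shows that each summand vanishes. Invoking Theorem~\ref{thm: Sticky Zig-Zag main result} we already know $\int \cA f\, \dd\mu = 0$, so $\int \widetilde{\cA} f\, \dd\mu = 0$, and by Theorem~3.37 of \textcite{Liggett2010} $\mu$ is stationary. Uniqueness follows along the same lines as Theorem~\ref{thm: Ergodicity of szz}, since the extra rate $\gamma_i$ does not destroy the irreducibility/Harris-recurrence arguments used there (the process can still reach every region accessible by the original sticky Zig-Zag).

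The main obstacle, and the only place where care is truly needed, is the verification that the subsampled process is itself a bona fide sticky PDMP with the claimed generator $\widetilde{\cA}$ (so that Dynkin-type arguments apply). This requires the standard thinning construction to be compatible with the sticky/thaw mechanism: proposals generated during frozen excursions must be discarded or, equivalently, $\alpha(x,v)$ must be re-evaluated at each proposed event time. Once this is codified (as in Appendix~\ref{app: simulating sticky PDMPs}), the symmetry-based cancellation above delivers invariance with essentially no further work.
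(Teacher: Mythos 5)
Your argument is correct and matches the approach the paper itself points to (the proof of Theorem~4.1 in \textcite{bierkens2019}, adapted to the sticky state space): the effective reflection rate after thinning is $\tilde m_i = \mathbb E_J[\tilde\lambda_{i,J}]$, and the unbiasedness of the estimator yields exactly the skew-symmetry $\tilde m_i(x,v)-\tilde m_i(x,v[i\colon -v_i])=v_i\partial_{x_i}\Psi(x)$ needed for invariance, while the sticky/thaw terms are untouched. Your packaging via the excess rate $\gamma_i$ and the already-established relation $\int\cA f\,\dd\mu=0$ is a minor reorganisation of the same computation. One small caveat worth making explicit: the flip $v_i\mapsto -v_i$ does not fix $\mathfrak F_i$, so $\gamma_i$ is flip-symmetric only off $\{x_i\in\{0^-,0^+\}\}$; this is harmless because $\gamma_i$ inherits the indicator $\ind_{i\in\alpha(x,v)}$ and hence vanishes on $\mathfrak F_i$, which is precisely where the atomic part of $\mu$ lives, so the change-of-variables cancellation in your display is unaffected.
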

The proof of Proposition~\ref{prop: invariant subsampler} follows with a similar argument made in the proof of \textcite[Theorem~4.1]{bierkens2019}. The number of computations required by the Sticky Zig-Zag with subsampling to compute the next event time with respect to the quantity $N$ is  $\mathcal{O}(1)$ (since $\partial_{x_i} \Psi(x^*)$ can be pre-computed). This advantage comes at the cost of introducing `shadow event times', which are event times where the velocity component does not reflect. In case the posterior density satisfies a Bernstein-von-Mises theorem, the advantage of using subsampling over the standard samplers has been empirically shown and informally argued for in  \textcite[Section 5]{bierkens2019} and \textcite[Section 3]{bierkens2020boomerang} for large $N$ and when choosing $x^*$ to be the mode of the posterior density.

\section{Performance comparisons for Gaussian models}\label{sec:runtime analysis}
In this section we discuss the performance of the Sticky Zig-Zag sampler in comparison with a Gibbs sampler.
The sticky Zig-Zag sampler includes new coordinates randomly but uses gradient information to find which coordinates are zero. By comparing to a Gibbs sampler that just  proposes models at random, we show that it is an efficient scheme of exploration. As the Gibbs sampler requires closed form expression of Bayes factors between different (sub-)models (Equation~\eqref{eq; Bayes_factors} below), we consider Gaussian models.
The comparison is motivated by considering two samplers that do not require model specific proposals or other tuning parameters.
In specific cases such as the target models considered below, the Gibbs sampler could be improved  by carefully choosing a problem-specific proposal kernel in between (sub-)models, see for example \textcite{zanella2018scalable} and \textcite{liang2021adaptive} -- something we don't consider here.

The comparison is primarily
in relation to the dimension $d$, average number of active particles and sample size $N$ of the problem.  It is well known that the performance of a Markov chain Monte Carlo method is given by both the computational cost of simulating the algorithm and the convergence properties of the underlying process. In Section~\ref{sec: main results} we consider both these aspects and compare the results obtained for the sticky Zig-Zag sampler with those relative to the Gibbs sampler. The results are summarised in Table~\ref{tab: 1} and Table~\ref{tab: 2}. The technical details of this section are given in Appendix~\ref{app: details of section 4}. 

\subsection{Gibbs sampler}\label{sec: gibbs sampler}
% Methods based on the Gibbs sampler for variable selection can be adopted only when it is possible to compute in closed form the Bayes factors, i.e. the ratio between posterior probabilities of different (sub-)models.  
We can use a set of active indices $\alpha$ to define a model, as the corresponding set of non-zero values in $\R^d$:
 \[ \cM_\alpha := \{x\in \RR^d \colon x_i = 0, i \notin \alpha\} \quad \text{for $\alpha \subset \{1,2,\dots,d\}$.}\]
For every set of indices $\alpha \subset \{1,2,\dots,d\} $ and for every $j$, the Bayes factors relative to two neighbouring (sub-)models  (those differing by only one coefficient) for a measure as in Equation~\eqref{eq: target measure} are given by
 \begin{equation}\label{eq; Bayes_factors}
 B_j(\alpha) =  \frac{\mu( \cM_{\alpha \cup \{j\}})}{\mu(
\cM_{\alpha \setminus\{ j\}})} = \frac{\kappa_j\int_{\RR^{|\alpha \cup \{j\}|}} \exp(-\Psi(y)) \dd x_{\alpha \cup \{j\} }}{\int_{\RR^{|\alpha \setminus \{j\}|}} \exp(-\Psi(z)) \dd x_{\alpha \setminus \{j\}}}, 
 \end{equation}
   where $y = \{x \in \RR^d \colon x_i = 0,\, i \notin (\alpha \cup \{j\}) \}$, $z = \{x \in \RR^d \colon x_i = 0, \, i \notin (\alpha \setminus \{j\})]$.
 The Gibbs sampler starting in $(x, \alpha)$, with $x_i \ne 0$ only if $i \in \alpha$ for some set of indices $\alpha \subset \{1,2,\dots,d\} $, iterates the following two steps: 
\begin{enumerate}
    \item Update $\alpha$ by choosing randomly $j \sim \text{Unif}(\{1,2,\dots,d\})$ and set $\alpha \leftarrow \alpha \cup \{j\}$  with probability $p_j$ where $p_j$ satisfies 
    $p_j/(1-p_j) = B_j(\alpha)$, 
        otherwise set $\alpha \leftarrow \alpha \setminus \{j\}$.
    \item Update the free coefficients $x_\alpha$ according to the marginal probability of $x_{\alpha}$ conditioned on $x_i = 0$ for all $i\in \alpha^c$.
\end{enumerate} 
In Appendix~\ref{app: bayes_factors}, we give an analytical expressions for the right hand-side of Equation~\eqref{eq; Bayes_factors} and the conditional probability in step 2 when $\Psi$ is a quadratic function of $x$. For logistic regression models, neither step 1 nor step 2 can be directly derived and the Gibbs samplers makes use of a further auxiliary P\'{o}lya-Gamma random variable $\omega$ which has to be simulated at every iteration and makes the computations of step 1 and step 2 tractable, conditionally on $\omega$ (see \cite{polson2013bayesian} for details). 

\subsection{Runtime analysis and mixing times}\label{sec: main results}

The ordinary Zig-Zag sampler can greatly profit in the case of models with a sparse conditional dependence structure between coordinates by employing local versions of the standard algorithm as presented in \textcite{bierkens2020piecewise}. In Appendix~\ref{app: simulating sticky PDMPs} we discuss how to simulate sticky PDMPs and derive similar local algorithms relative to the sticky Zig-Zag. Also the Gibbs sampler algorithm, as described in Section~\ref{sec: gibbs sampler}, benefits  when the conditional dependence structure of the target is sparse. In Appendix~\ref{app: runtimes of the algorithms} we analyse the computational complexity of both algorithms. In the analysis, we drop the dependence on $(x,v)$  and we assume that the size of $\alpha(t) := \{i \colon x_i(t) \ne 0\}$ fluctuates around a typical value $p$ in stationarity. Thus $p$ represents the number of non-zero components in a typical model, and can be much smaller than $d$ in sparse models.

Table~\ref{tab: 1} summarises the results obtained of both algorithms in terms of the sample size $N$ and $p$ when the conditional dependence structure between the coordinates of the target is full and the sub-sampling method presented in Section~\ref{sec: Subsampling} cannot be employed (left-column) and when there is sparse dependence structure and subsampling can be employed (right-column). Our findings are validated by numerical experiments in Section~\ref{sec: examples} (Figure~\ref{fig: runtime of heart}, Figure~\ref{fig:logistic_comparison}). 
 
\begin{table}[ht!]
\centering
\begin{tabular}{ c c c  } 
 Algorithm & Worst case & Best case \\ \hline
 Sticky Zig-Zag &
% $(p + (d- p)\kappa) pN$ 
$p^2 N$
 & 
% $p + (d-p)\kappa$ 
$p$
 \\ 
 Gibbs sampler & $p(p^2 + N)$ & $p(\sqrt{p} + N)$ \\ 
 \hline
\end{tabular}

\caption{Computational scaling of the Sticky Zig-Zag algorithm and the Gibbs sampler for variable selection for $p$ and sample size $N$. Worst case is when the target density does not present any conditional independence structure and the subsampling method for the Sticky Zig-Zag cannot be employed; best case when the target measure presents a relevant conditional independence structure and subsampling can be employed.}
\label{tab: 1}
\end{table}

We now turn our focus on the mixing time of both the underlying processes. Given the different nature of dependencies of the two algorithms, a rigorous and theoretical comparison of their mixing times is difficult and outside the scope of this work. We therefore provide an heuristic argument for two specific scenarios where we let both algorithms be initialized at $x \sim \cN_d(0, I) \in \RR^d$, hence in the full model, and assume that the target $\mu$ assigns most of its probability mass to the null model $\cM_{\emptyset}$.
Then we derive the expected hitting time to $\cM_{\emptyset}$ for both processes. 
The two scenarios differ as in the former case the target $\mu$ is supported in every sub-model so that the process can reach the point $(0,0,\dots,0)$ by visiting any sequence of sub-models while in the latter case the measure $\mu$ is supported in a single nested sequence of sub-models. Details of the two scenarios are given in Appendix~\ref{app: mixing}. Table~\ref{tab: 2} summarizes the scaling results (in terms of dimensions $d$) derived in the two cases considered. 
\begin{table}[ht!]
\centering
\begin{tabular}{ c c c  } 
 Algorithm & $\mu$ supported on every model & $\mu$ supported on a nested sequence \\ \hline
 Sticky Zig-Zag & $\log(d)$ & $d$  \\ 
 Gibbs sampler & $d\log(d)$ & $d^2$ \\ 
 \hline
\end{tabular} q
\caption{Scaling relative to the dimension $d$ of the expected time (number of iteration for the Gibbs sampler) to travel from the full model (initialized as a standard Gaussian random variable) to the null model (which is the mode of the target). The results are for targets which are supported in every model and for targets supported on a single sequence of nested sub-models.}
\label{tab: 2}
\end{table}

\section{Examples}\label{sec: examples}
In this section 
 we apply the Sticky Zig-Zag sampler and, when possible, compare its performance with the Gibbs sampler in four different problems of varying nature and difficulty:
\begin{itemize}
    \item[\ref{subse: learning networks}] \emph{(Learning networks of stochastic differential equations)} A system of interacting agents where the dynamics of each agent are given by a stochastic differential equation. We aim to infer the interactions among agents. This is an example where the likelihood does not factorise and the number of parameters increases quadratically with the number of agents. We demonstrate the Sticky Zig-Zag sampler under a spike-and-slab prior on the parameters that govern the interaction and compare this with the Gibbs sampler.
    \item[\ref{sec: example>Spatially structured sparsity}] \emph{(Spatially structured sparsity)} An image denoising problem where the prior incorporates  that a large part of the image is black (corresponding to sparsity), but also promotes positive correlation among neighbouring pixels. Specifically, this examples illustrates that the Sticky Zig-Zag sampler can be employed  in high dimensional regimes (the showcase is in dimension one million) and for sparsity promoting priors other than factorised priors such as  spike-and-slab priors. 
    \item[\ref{subsec: examples>logistic regression}] \emph{(Logistic regression)} The logistic regression model where both the number of covariates  and the sample size are large, while assuming the coefficient vector to be sparse.  This is an non-Gaussian optimal scenario where the Sticky Zig-Zag sampler can be employed with subsampling technique achieving $\cO(1)$ scaling with respect to the sample size. 
    \item[\ref{subsec: sparse precision matrix}] \emph{(Estimating a sparse precision matrix)} The setting where $N$ realisations of independent  Gaussian vectors with  precision matrix of the form $ X X'$ are observed. Sparsity is assumed on the off-diagonal elements of the lower-triangular matrix $X$. What makes this example particularly interesting is that the gradient of the log-likelihood explodes in some hyper-planes, complicating the application of  gradient-based Markov chain Monte Carlo methods.  
\end{itemize}
In all cases we simulate data from the model and assume the parameter to be sparse (i.e.\ most of its elements are assumed to be zero) and high dimensional. In case a spike-and-slab prior is used, the slabs are always chosen to be zero-mean Gaussian with (large) variance $\sigma_0^2$. 
The sample sizes, parameter dimensions and additional difficulties such as correlated parameters or non-linearities  which are considered in this section illustrate the computational efficiency of our method (and implementation) in a wide range of settings.   In all examples we used either the local or the fully local algorithm of the Sticky Zig-Zag as detailed in Appendix~\ref{app: simulating sticky PDMPs} with velocities in the set $\cV = \{-1,+1\}^d$. Comparisons with the Gibbs sampler are  possible for Gaussian models and the logistic regression model. Our implementation of the Gibbs sampler is taking advantage of model sparsity. Because of its computational overhead, when such comparisons are included,  the dimensionality of the problems considered has been reduced. The performance of  the two algorithms is compared by running the two algorithms for approximately the same computing time. As performance measure we consider the squared error as a function of the computing time:
\begin{equation}
    \label{eq: squared error}
    c \mapsto \cE_{\text{s}}(c) := \sum_{i=1}^d (p^{\text{s}}_i(c) - \overline{p}_i)^2,
\end{equation}
where $c$ denotes computing time (we use $c$ rather than $t$ as the latter is used as time index for the Zig-Zag sampler).  
In the displayed expression, we first compute $\overline{p}_i$, which is an approximation to the posterior probability of the $i$th coordinate being nonzero. This quantity can either be obtained by running the Sticky Zig-Zag sampler or the Gibbs sampler (if applicable) for a very long time. As we show the Sticky Zig-Zag sampler to converge faster, especially in high dimensional problems, we use this sampler in approximating this value. We stress that the same result could be obtained by running the Gibbs sampler for a very long time.  More precisely, we compute for each coordinate of the Sticky Zig-Zag sampler the fraction of time  it is nonzero. In $\cE_{\text{s}}(c)$, the value of $\overline{p}_i$ is compared to $p^{\text{s}}_i(c)$ which is 
 the fraction of time (or fraction of samples in case of the Gibbs sampler) where $x_i$ is nonzero  using computational budget $c$ and sampler `s'.  All the experiments were carried out with a conventional laptop with Intel core i5-10310 processor and 16GB DDR4 RAM. Pre-processing time and memory allocation of both algorithms are comparable.

\subsection{Learning networks of stochastic differential equations.}\label{subse: learning networks}
In this example we consider a stochastic model for  $p$ autonomously moving agents (``boids'') in the plane. The dynamics of the   location of the $i$th agent is assumed to satisfy the stochastic differential equation
\begin{equation}\label{eq: agent i}
\dd U_i(s) = -\lambda U_i(s)\dd s + \sum_{j \ne i} x_{i,j}(U_{j}(s) - U_{i}(s)) \dd s + \sigma \dd W_i(s), \qquad 1\le i \le p 
\end{equation}
where, for each $i$, $(W_i(s))_{0\le s\le T}$ is an independent $2$-dimensional Wiener process. 
We assume the trajectory of each agent is observed continuously over a fixed interval $[0,T]$. This implies $\sigma>0$ can be considered known, as it can be recovered without error from the quadratic variation of the observed path. For simplicity we will also assume the mean-reversion parameter $\lambda>0$ to be known.  Let  $x = \{x_{i,j} \colon i \ne j\} \in \RR^{p^2 - p}$ denote the unknown parameter.
  If $x_{i,j} > 0$, agent $i$  has the tendency to follow agent $j$, on the other hand, if $x_{i,j} < 0$, agent $i$ tends to avoid agent $j$. Hence, estimation of $x$ aims at inferring which agent follows/avoids other agents. We will study this problem from a Bayesian point of view assuming sparsity of $x$, incorporated via the prior using a spike and slab prior.  This problem has been studied previously  in  \textcite{bento2010learning} using $\ell_1$-regularised least squares estimation.

  Motivation for studying this problem can be found in  \textcite{10.1145/37401.37406} and the presentation at \textcite{dancing}.  An  animation of the trajectories of the agents in time can be found at \textcite{boidanimation}.

Suppose $U_i(s)=(U_{i,1}(s), U_{i,2}(s))$ and let 
$Y(s) = (U_{1,1}(s),\ldots, U_{p,1}(s), U_{1,2}(s),\ldots, U_{p,2}(s))$ 
denote the vector obtained upon concatenation of all $x$-coordinates and $y$-coordinates of all agents. 
Then, it follows from Equation~\eqref{eq: agent i} that $\dd Y(s) = C(x) Y(s) \dd s + \sigma \dd W(s)$, 
where $W(s)$ is a Wiener process in $\RR^{2p}$. Here,  $C(x)=\mbox{diag}(A(x), A(x))$ where
\[
A(x) = \begin{bmatrix}
    -\lambda -  \overline{x}_1  & x_{1,2} & x_{1,3} & \dots\\
    x_{2,1} & -\lambda - \overline{x}_2 &  x_{2,3} & \\
    x_{3,1} & & \ddots & \\
    \vdots & & & 
\end{bmatrix}
\]
with $\overline{x}_i = \sum_{j \ne i} x_{i,j}$.
If $\mathbb{P}_x$ denotes the measure on path space of $Y_T:=(Y(s),\, s\in [0,T])$ and $\mathbb{P}_0$ denotes the Wiener-measure on $\RR^{2p}$, then it follows from Girsanov's theorem that 
\begin{equation}
    \label{eq: log-girsanov}
\ell(x):= \log \frac{\mathbb{P}_x}{\mathbb{P}_0}(Y_T)   =   \frac{1}{\sigma^{2}}\int_0^T (C(x) Y(s))' \dd Y(s) - \frac{1}{2\sigma^{2}} \int_0^T \|C(x) Y(s) \|^2 \dd s.
\end{equation}
As we will numerically only be able to store the observed sample path on a fine grid, we approximate the  integrals appearing in the  log-likelihood $\ell(x)$ using a standard Riemann-sum approximation of It\^o integrals (see e.g.\ \cite[Ch. IV, sec. 47]{rogers2000diffusions2}) and time integrals. We assume $x$ to be sparse which is incorporated by choosing a spike-and-slab  prior for $x$ as in Equation~\eqref{eq: spike-and-slab prior}. The posterior measure is of the form of \eqref{eq: target measure} with $\kappa$ and $\Psi(x)$ as in \eqref{eq: equivalence spike-and-slab}. 
  As $x\mapsto \Psi(x)$ is quadratic, the reflection times of the Sticky Zig-Zag sampler can be computed in closed form.

\medskip 

\textbf{Numerical experiments:}
In our numerical experiments we  fix $p = 50$ (number of agents), $T = 200$ (length of time-interval), $\sigma = 0.1$ (noise-level) and $\lambda = 0.2$ (mean-reversion coefficient). We set the parameter $x$ such that each agent has one agent that tends to follow and one agent that tends to avoid. Hence, for every $i$, we set $x_{i,j}$ to be zero for all $j \ne i$, except for 2 distinct indices $j_1,j_2 \sim \text{Unif}(\{1,2,\dots,d\}\setminus i)$ with $x_{i,j_1}x_{i,j_2}<0$. The parameter $x$ is very sparse and it is highly nontrivial to recover its value.  We then simulate $Y_T$ using  Euler forward discretization scheme, with step-size equal to $0.1$ and initial configuration $Y(0)\sim \cN_{2p} (0,I)$. 

The prior weights $w_1 = w_2 = \dots = w_d$ ($w_i$ being the prior probability of the $i$th coordinate to be nonzero) are conveniently chosen to  equal the proportion of non-zero elements in the true (data-generating) parameter vector $x$.  The variance of each slab was taken to be  $\sigma^2_0 = 50$.
We ran the Sticky Zig-Zag sampler with final clock $500$, where the algorithm was initialized in the full-model with no coordinate frozen at 0 at the posterior mean of the Gaussian density proportional to $\Psi$.

Figure~\ref{fig:boids} shows the discrepancy between the parameters used during simulation (ground truth) and the estimated posterior median.  In this figure, from the (sticky) Zig-Zag trajectory of each element $x_{i,j}$ ($i\neq j$) we  collected their values at time  $t_i =i 0.1 $ and  subsequently computed the median of the those values. We conclude that all parameters which are strictly positive (coloured in pink) are recovered well. At the bottom of the figure (black points and crosses), $25$ are incorrectly identified as either being zero or negative.  In this experiment, the Sticky Zig-Zag sampler outperforms the Gibbs sampler considerably.

% \medskip 

In Figure~\ref{fig:boids_comparisons} we compare the performance of the Sticky Zig-Zag sampler with the Gibbs sampler. Here, all the parameters (including initialisation) are as above, except
now the number of agents is taken as $p = 20$. Both $c \mapsto \cE_{\text{Zig-Zag}}(c)$ and $c \mapsto \cE_{\text{Gibbs}}(c)$, with $c$ denoting the computational budget,  are computed for $c \in [0,10]$. For this, the final clock of the Zig-Zag was set to $10^4$ and the number of iterations for the  Gibbs sampler was set to $1.2\times10^4$. For obtaining $\bar{p}_i$ the Sticky Zig-Zag sampler was run  with  final clock $5\times10^4$ (taking approximately 50 seconds computing time). 

\begin{figure}[ht!]
    \centering
    \includegraphics[width=0.7\linewidth]{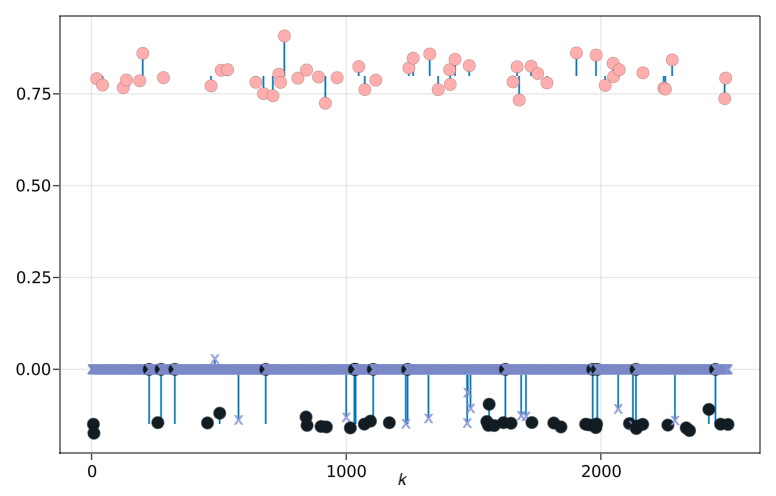}
    \caption{Posterior median estimate of $x_k$ (where $k$ can be identified with $(i,j)$) versus $k$ computed using the Sticky Zig-Zag sampler.  Thin vertical lines indicate distance to the truth. True zeros are plotted with the symbol $\times$, others are plotted as points. With $p = 50$ agents, the dimension of the problem is $d = 2450$ }
    \label{fig:boids}
\end{figure}
\begin{figure}[ht!]
    \centering
    \includegraphics[width=0.45\linewidth]{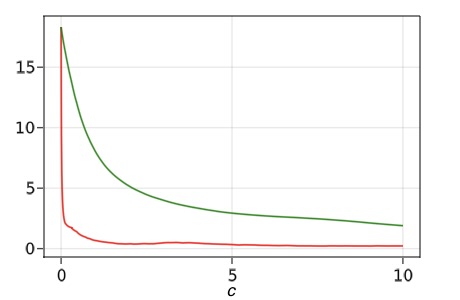}
    \caption{Squared error of the marginal inclusion probabilities (Equation~\ref{eq: squared error}) $c \to \cE_{\text{zig-zag}}(c)$ (red) and $c \to \cE_{\text{gibbs}}(c)$(green) where $c$ represent the computing time in seconds. With $p=20$ agents the dimension of the problem is $p(p-1)/2 = 380$.}
    \label{fig:boids_comparisons}
\end{figure}

\subsection{Spatially structured sparsity}\label{subsec: spatially structured sparsity}\label{sec: example>Spatially structured sparsity}
We consider the problem of denoising a  spatially correlated, sparse signal. The signal is assumed to be an $n\times n$-image. Denote the observed pixel value at location $(i,j)$  by $Y_{i,j}$ and assume 
\[
Y_{i,j} = x_{i,j} + Z_{i,j}, \quad Z_{i,j} \stackrel{\operatorname{i.i.d.}}{\sim}  \operatorname{N}(0,\sigma^2),\qquad  i, j \in \{1, \dots, n\}. 
\] 
The ``true signal'' is given by $x=\{x_{i,j}\}_{i,j}$ and this is the parameter we aim to infer, while assuming $\sigma^2$ to be known. We view $x$ as a vector in $\RR^d$, with $d=n^2$ but use both linear indexing $x_k$ and Cartesian indexing $x_{i,j}$ to refer to the component at index $k = n(i-1) + j$. The log-likelihood of the parameter $x$ is given by $\ell(x) = C + \sigma^{-2} \sum_{i=1}^n \sum_{j=1}^n |x_{i,j}-Y_{i,j}|^2$, with $C$ a constant not depending on $x$.

We consider the following prior measure 
\[ \mu_0(\dd x) = \exp\left(-\frac12 x'\Gamma x\right) \prod_{i=1}^d \left(\dd x_i + \frac{1}\kappa \delta_0(\dd x_i)\right).
\]
The Dirac masses in the prior encapsulate sparseness in the underlying signal and an appropriate choice of $\Gamma$ can promote smoothness. Overall, the prior encourages \emph{smoothness, sparsity and local clustering of zero entries and non-zero entries}. As a concrete example, consider 
$\Gamma = c_1\Lambda + c_2 I$  where $\Lambda$ is the graph Laplacian of the pixel neighbourhood graph:
the pixel indices $i, j$ are identified with the vertices $V=\{(i, j) \colon$ $(i,j)\in\{1, \ldots, n\}^2\}$ of the $n \times n$ -lattice with edges $E=\{\{v, v^{\prime}\}: (v, v^\prime) = ((i, j), (i^{\prime}, j^{\prime})) \in V^2$, $|i-i^{\prime}|+|j-j^{\prime}|= 1\}$ (using the set notation for edges). Thus, edges connect a pixel to its vertical and horizontal neighbours. Then
\begin{equation*}
\lambda_{v, v^{\prime}}=\left\{\begin{array}{ll}
\operatorname{degree}(v) & v=v^{\prime} \\
-1 & \left\{v, v^{\prime}\right\} \in E \\
0 & \text { otherwise }
\end{array}\right.
\end{equation*}
and 
$
\Lambda = (\Lambda_{k,l})_{k,l \in\{ 1, \dots, n^2\}}$
 with $\Lambda_{(i-1)n + j, (k-1)n + l} = \lambda_{(i,j), (k,l)}$,  for $\quad i,j,k,l \in \{1, \dots, n\}$.

This is a prior which is applicable in similar situations as the fused Lasso in \textcite{https://doi.org/10.1111/j.1467-9868.2005.00490.x}.
\medskip

\textbf{Numerical experiments:}
We assume that pixel $(i, j)$
corresponds to a physical location of size $\Delta_1 \times \Delta_2$ centered at $u(i,j) = u_0 + (i \Delta_1, j \Delta_2) \in \RR^2$. To numerically illustrate our approach,  we use a heart shaped region given by $
x_{i,j} =  5\max(1 - h(u(i,j)), 0)
$
where 
$h\colon \RR^2 \to [0, \infty)$  is defined by 
$
 h(u_1, u_2) = u_1^2+\left(\frac{5u_2}{4}-\sqrt{|u_1|}\right)^2 $, 
$u_0 = (-4.5, -4.1)$, $n = 10^3$ and $\Delta_1 = \Delta_2 = 9/n$.
In the example, about 97\% of the pixels of the truth are black. The dimension of the parameter equals  $10^6$.
Figure~\ref{fig:heartresults}, top-left, shows the observation $Y$ with $\sigma^2 = 0.5$  and the ground truth. 

As the ordinary Sticky Zig-Zag sampler would require storing and ordering $1$ million elements in the priority queue 
we ran the Sticky Zig-Zag sampler with sparse implementation as detailed in Remark~\ref{rmk: sparse implementation}. 
For this example, we have  $\Psi(x) = \ell(x)  +  0.5x'\Gamma x$. We took $c_1 = 2, c_2 = 0.1$ in the definition of $\Gamma$  and chose the  parameters $\kappa_1 = \kappa_2 = \dots = \kappa_d = 0.15$ for the smoothing prior. The reflection times are computed by means of a thinning scheme, see Appendix~\ref{app: spatially structured} for details. We set  the final clock of the Sticky Zig-Zag sampler to $500$.
Results from running the sampler are summarized in Figure~\ref{fig:heartresults}.

In Figure~\ref{fig: runtime of heart}, the runtimes of the Sticky Zig-Zag sampler and Gibbs sampler are shown (in a log-log scale)  for different values of $n^2$ (dimensionality of the problem), the  final clock was fixed to $T = 500$ ($10^3$  iteration for the Gibbs sampler). All the other parameters are kept fixed as described above.   The results agree well with the scaling results of Table~\ref{tab: 1}, rightmost column.

In Figure~\ref{fig: heart_comparison} we show $t \rightarrow \cE_{\text{Zig-Zag}}(t)$ and $t \rightarrow \cE_{\text{Gibbs}}(t)$ for $t$ ranging from 0 to $5$, in case  $n = 20$. Both samplers were initialized at the posterior mean of the Gaussian density proportional to $\Psi$ (hence, in the full-model with no coordinates set to 0). In this experiment, the Sticky Zig-Zag sampler outperforms the Gibbs sampler considerably.

\begin{figure}[ht!]
    \centering
        \centering
    \includegraphics[width=0.45\linewidth]{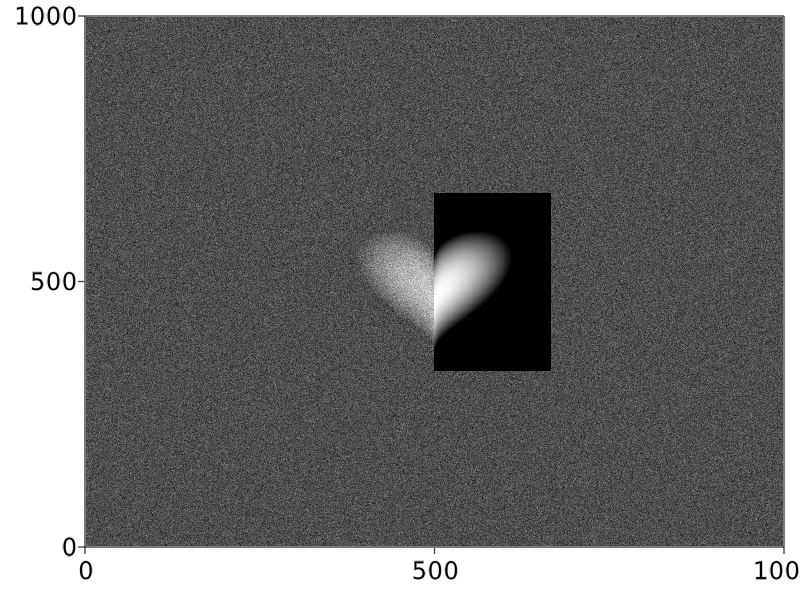}
    \includegraphics[width=0.45\linewidth]{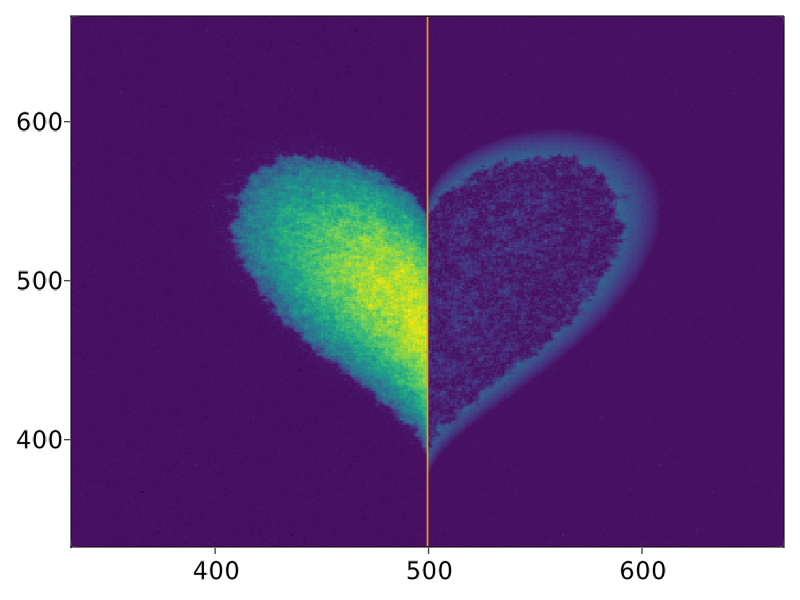}
    
    \includegraphics[width=0.45\linewidth]{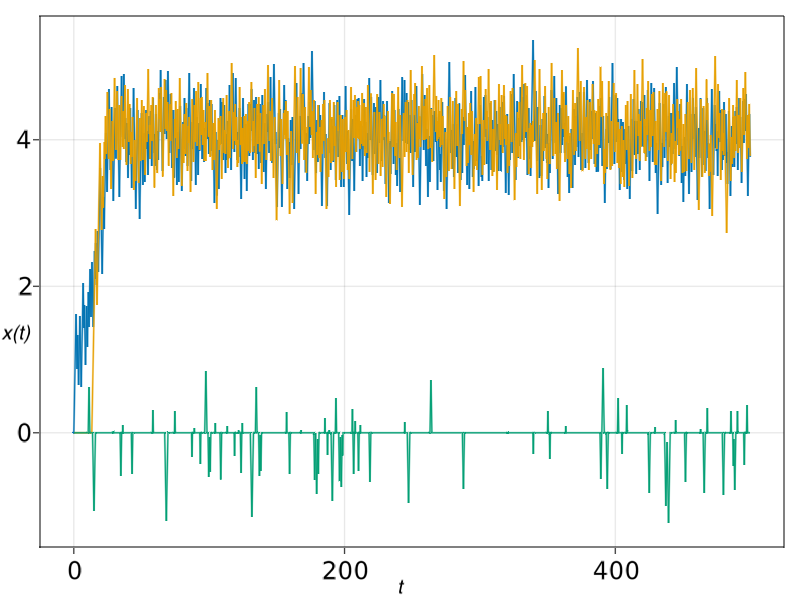}
    \includegraphics[width=0.45\linewidth]{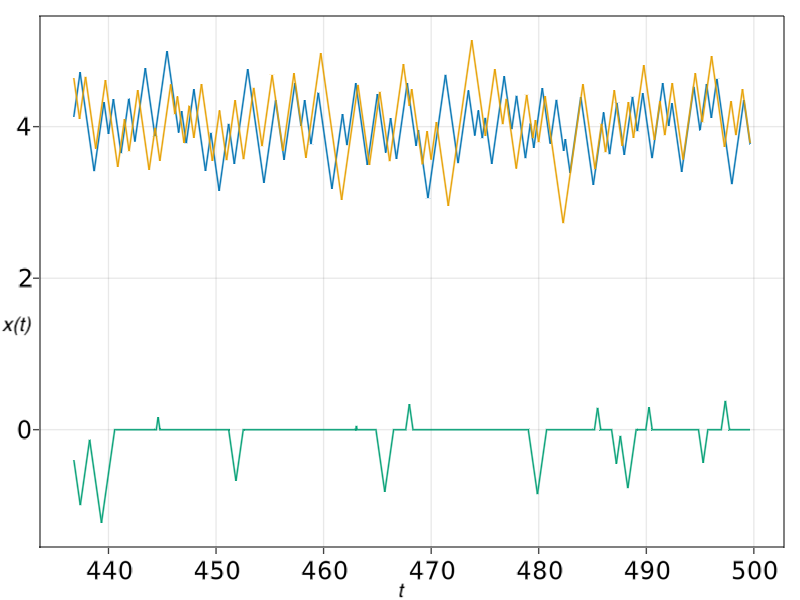}

    \caption{
    Top-left: observed $1\,000 \times  1\,000$ image of a heart corrupted with white noise, with part of the  ground truth inset. Top-right, left half: posterior mean estimated from the trace of the Sticky Zig-Zag sampler (detail). Top-right, right half: mirror image showing the absolute error between the posterior mean and the ground truth in the same scale (color gradient between blue (0) and yellow (maximum error)). 
    Bottom: trace plot of 3 coordinates; on the left the full trajectory is shown whereas on the right only the final $60$ time units are displayed. 
    The traces marked with blue and orange lines belong to neighbouring coordinates (highly correlated) from the center, the trace marked with green belongs to a coordinate outside the region of interest.}
    \label{fig:heartresults}
\end{figure}

\begin{figure}[ht!]
    \centering
        \includegraphics[width=0.5\linewidth]{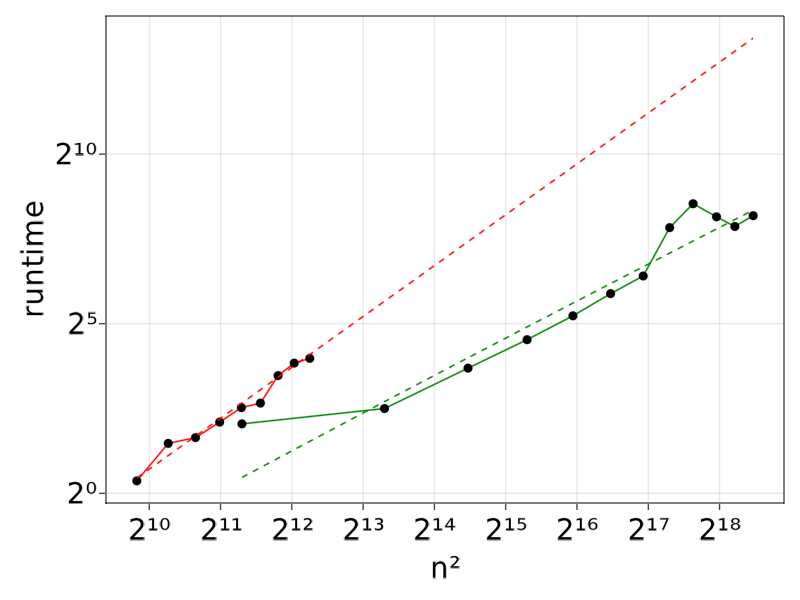}
    \caption{Runtime comparison  of the Sticky Zig-Zag sampler (green) and the Gibbs sampler (red) for the  example in Subsection~\ref{sec: example>Spatially structured sparsity}. The horizontal axis displays the dimension of the problem, which is $n^2$. The vertical axis shows runtime in seconds. The runtime is evaluated at $n^2 = 50^2, 100^2, \dots, 600^2$ for the sticky Zig-Zag sampler and at $n^2 = 40^2,45^2, \dots, 70^2$ for the Gibbs sampler.  Both plots are on a log-log scale. The dashed curves shows the theoretical scaling (including a log-factor for the priority queue insertion):  $x \mapsto c_1 x\log(x)$ (green) and $x \mapsto c_2  x^{3/2}$ (orange), with $c_1$ and $c_2$ chosen conveniently.}
    \label{fig: runtime of heart}
\end{figure}

\begin{figure}[ht!]
    \centering
    \includegraphics[width=0.45\linewidth]{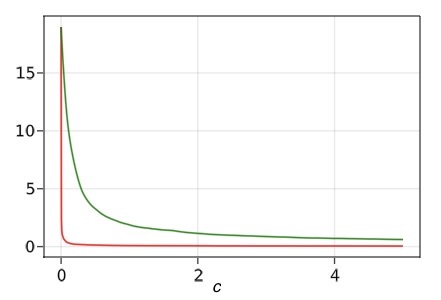}
    \caption{Squared error of the marginal inclusion probabilities (Equation~\ref{eq: squared error}) $c \to \cE_{\text{zig-zag}}(c)$ (red) and $t \to \cE_{\text{gibbs}}(c)$(green) where $c$ represent the computational time in seconds; right-panel: zoom-in near 0. Here the dimension of the problem is $n^2 = 400$.}
    \label{fig: heart_comparison}
\end{figure}

\subsection{Logistic regression}\label{subsec: examples>logistic regression}
Suppose $\{0,1\} \ni Y_i \mid x \sim \mbox{Ber}(\psi(x^T a_i))$ with $\psi(u) =(1 + e^{- u})^{-1}$. $a_i \in \RR^d$ denotes a vector of covariates and $x \in \RR^d$ a parameter vector. Assume $Y_1,\ldots, Y_N$ are independent, conditionally on $x$. The  log-likelihood is equal to
\[
 \ell(x)= \sum_{j=1}^N \left(\log \left(1 + e^{ \langle  a_{j}, x\rangle}\right) - y_j\langle a_{j}, x\rangle \right)
\]
We assume a spike-and-slab prior of \eqref{eq: spike-and-slab prior} with zeromean Gaussian slabs and  (large) variance $\sigma_0^2$. Then the posterior can be written as in Equation~\eqref{eq: target measure}, with $\Psi$ and $\kappa$ as in Equation~\eqref{eq: equivalence spike-and-slab}. 
\medskip

\textbf{Numerical experiments:}
We consider two categorical features with 30  levels each and 5 continuous features.
For each observation, an  independent random level of each discrete feature and a random value of the continuous features,  $\cN(0,0.1^2)$ is drawn. Let the design matrix $A\in \mathbb{R}^{N\times d}$ be the matrix where the $i$-th row is the vector $a_i$. $A$ includes the levels of the discrete features in dummy encoding and the interaction terms between them also in dummy encoding scaled by $0.3$ (960 columns), and the continuous features in the final 5 columns. This implies that the dimension of the parameter equals $d = 965$. We then generate $N =50d= 48250$ observations using as ground truth  sparse coefficients obtained by setting  $x_i = z_i \xi_i$ where $z_i \stackrel{\operatorname{i.i.d.}}{\sim} \text{Bern}(0.1)$ and $\xi_i\stackrel{\operatorname{i.i.d.}}{\sim} \cN(0, 5^2)$, where $\{z_i\}$ and $\{\xi_i\}$ are independent.

 We run the sticky ZigZag with subsampling and bounding rates derived in Appendix~\ref{app: logistic regression}. We chose $w_1 = w_2 = \dots = w_d = 0.1$ and $\sigma_0^2 = 10^2$ and ran the Sticky Zig-Zag sampler for $100$ time-units. The implementation makes use of a sparse matrix representation of $A$, speeding up the computation of inner products $\langle  a_{j}, x\rangle$.  Figure~\ref{fig:logistic_results} reveals that while perfect recovery is not obtained (as was to be expected), most nonzero/zero features \textit{are} recovered correctly. 
 
 In a second numerical experiment we compare the computing time of the Sticky Zig-Zag sampler and Gibbs sampler (as proposed in \cite{polson2013bayesian})  as we vary the number of observations ($N$). In this case, we reduce the dimension of the parameter by restricting to $2$ categorical variables, including their pairwise interactions, augmented by 3 ``continuous'' predictors (leading to the parameter vector $x \in \RR^9$). 
 For each sample size $N$ we ran the Gibbs sampler for $1000$ iterations and the  Sticky Zig-Zag sampler  for $1000$ time units. Our interest here is not to compare the computing time of the samplers for a fixed value of $N$, but rather the scaling of each algorithm with $N$.   Figure~\ref{fig:logistic_comparison} shows that the computing time for the Sticky Zig-Zag sampler is roughly constant when varying $N$. On the contrary, the computing time increases linearly  with $N$ for the Gibbs sampler. This is consistent with  the  theoretical scaling results presented in Table~\ref{tab: 1} (rightmost column). We remark that qualitatively similar results would be obtained if we would have fixed the number of iterations of the Gibbs sampler and endtime of the Zig-Zag sampler to different values. 
 
\begin{figure}[ht!]
    \begin{minipage}[t]{0.55\linewidth}
    \centering
    \includegraphics[width=\linewidth]{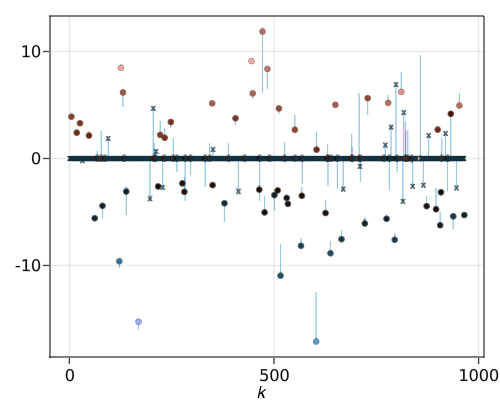}
    \caption{Results for the logistic regression  coefficients  derived with the Sticky Zig-Zag sampler with subsampling. Description as in caption of Figure~\ref{fig:boids}. The dimension of this problem is $d = 965$.}
    \label{fig:logistic_results}
\end{minipage}%
    \hfill%
\begin{minipage}[t]{0.4\linewidth}
\centering
    \includegraphics[width=0.73\linewidth]{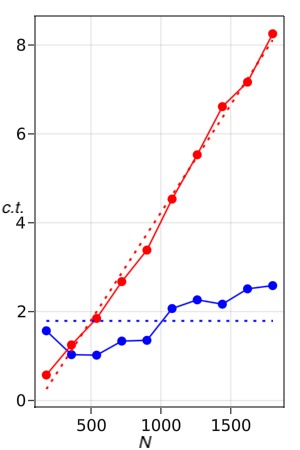}
        \caption{Logistic regression example: computing time in seconds versus number of observations. Solid red line: Gibbs samplers with $ 10^3$ iterations. Solid blue line: Sticky Zig-Zag samplers with subsampling ran for $10^3$ time units.  The dashed lines correspond to the  scaling results displayed in  Table~\ref{tab: 1}. Here, the dimension of the problem is fixed to $d = 9$.}
    \label{fig:logistic_comparison}
\end{minipage} 
    
\end{figure}

\subsection{Estimating a sparse precision matrix}\label{subsec: sparse precision matrix}
Consider  \[Y_i \mid X \stackrel{\operatorname{i.i.d.}}{\sim}  \cN_p\left(0, (X X')^{-1}\right), \quad i=1,2,\dots,N\]
for some unknown lower triangular sparse matrix $X \in \RR^{p\times p}$.  We aim to infer the lower-triangular elements of $X$ which we concatenate to obtain the parameter vector $x :=\{X_{i,j} \colon 1\le j \le i \le p\} \in \RR^{p(p+1)/2}$. This class of problems is important as the precision matrix $X X'$  unveils the conditional independence structure of $Y$, see for example  \textcite{shi2021bayesian}, and reference therein, for details.

We impose a prior measure on $x$ of the product form $\mu_0(\dd x) = \bigotimes_{i=1}^p \bigotimes_{j=1}^i  \mu_{i,j}(\dd x_{i,j})$ where
\[
\mu_{i,j}(\dd x_{i,j}) = \begin{cases}
\pi_{i,j}(x_{i,j}) \ind_{(x_{i,j >0 })} \dd x_{i,j}& \quad i = j,\\
w  \pi_{i,j}(x_{i,j}) \dd x_{i,j} + (1-w)\delta_0(\dd x_{i,j}) & \quad i \ne j,\\ 
\end{cases}  
\]
and $\pi_{i,j}$ is the univariate Gaussian density with mean $c_{i,j} \in \RR$ and  variance $\sigma_{0}^2>0$.

This prior induces sparsity on  the lower-triangular off-diagonal elements of  $X$  while preserving strict positive definiteness of  $X X'$ (as the elements on the diagonal are restricted to be positive). 

The posterior in this example is of the form 
\[ \mu(\dd x) \propto  \exp(-\Psi(x))\Big(\bigotimes_{i=1}^p \bigotimes_{j=1}^{i-1} (\dd x_{i,j} + \frac{1}{\kappa_{i,j}} \delta_0(\dd x_{i,j}))\Big)\bigotimes_{k=1}^p \dd x_{k,k} \]
 with \[
\Psi(x) = \frac12 \sum_{i=1}^N Y_i' X X 'Y_i -N \sum_{i = 1}^p\log(x_{i,i}) + \sum_{i=1}^p \sum_{j=1}^{i-1} \frac{(x_{i,j}- c_{i,j})^2}{2\sigma^2_{0}} + \sum_{i=1}^p \frac{(x_{i,i}- c_{i,i})^2}{2\sigma^2_{0}}
\]
and $\kappa_{i,j}= \pi_{i,j}(0) w/(1-w)$.
In particular, the posterior density is not of the form as 
given in  Equation~\eqref{eq: target measure}, as the diagonal elements cannot be zero and have a marginal density relative to the Lebesgue measure, while the off-diagonal elements are marginally mixtures of a Dirac and a continuous component. Notice that, for any $i = 1,2,\dots, p$, as $x_{i,i} \downarrow 0$,  $\exp(-\Psi(x))$ vanishes and $\nabla \Psi(x)\to \infty$. This makes the sampling problem challenging for gradient-based algorithms.

\medskip

\textbf{Numerical experiments:}
We apply the Sticky Zig-Zag sampler where the reflection times are computed by using a thinning and superposition scheme for inhomogeneous Poisson processes, see Appendix~\ref{app:sparse_precision_matrix} for the details.

We simulate realisations $y_1, \ldots, y_N$ with precision matrix $X X'$
 a tri-diagonal matrix with diagonal $(0.5, 1, 1,\dots, 1, 1, 0.5) \in \RR^p$ and  off-diagonal $(-0.3,-0.3,\dots,-0.3)\in \RR^{p-1}$. In the prior we chose $\sigma_{0}^2 = 10$ and $c_{i,j} = \ind_{(i=j)}$ and for $1 \le j \le i \le p$ and $w=0.2$.
   
  We fixed $N = 10^3$ and $p = 200$ and  ran the Sticky Zig-Zag sampler for $600$ time-units. We initialized the algorithm at $x(0) \sim \cN_{p(p+1)/2}(0,I)$ and set a burn-in of 10 unit-time. The left panel of Figure~\ref{fig:precision}   shows the error between $X X'$ (the ground truth) and $\overline X \, \overline X'$ where $\overline X$ is posterior mean of the lower triangular matrix estimated with the sampler. The error is concentrated on the non-zero elements of the matrix while the zero elements are estimated with essentially no error. The right panel of Figure~\ref{fig:precision}  shows the trajectories of two representative non-zero elements of $X$. The traces show qualitatively that the process converges quickly to its stationary measure.
  In this case, comparisons with the Gibbs sampler are not possible as there is no closed form expression for the Bayes factors of Equation~\eqref{eq; Bayes_factors}.
\begin{figure}[ht!]
\begin{center}
\includegraphics[width=0.95\linewidth]{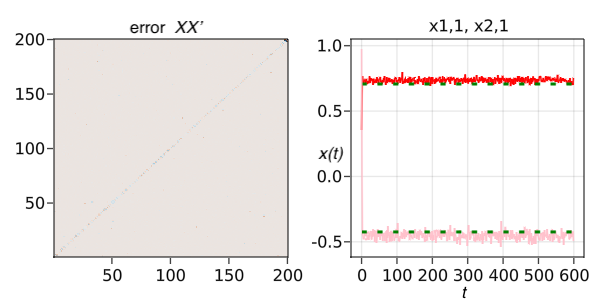}
\end{center}
\caption{Left: error between the true precision matrix and the precision matrix obtained with the estimated posterior mean of the lower-triangular matrix (colour gradient between white (no error) and black (maximum error)). Right: traces of two non-zero coefficients ($x_{1,1}$ in red and $x_{2,1}$ in pink) of the lower triangular matrix. Dashed green lines are the ground truth. Here, the dimension of each vector $Y_i$ is $p = 200$ and the dimension of the problem is $p(p+1)/2 = 20\,100$.}\label{fig:precision}
\end{figure}

\section{Discussion}\label{sec: discussions}
The sticky Zig-Zag sampler inherits some limitations from the ordinary Zig-Zag sampler: 

Firstly, if it is not possible to simulate the reflection times according to the Poisson rates in Equation~\eqref{eq: poisson rate zigzag}, the user needs to find and specify upper bounds of the Poisson rates from which it is possible to simulate the first event time (see Appendix~\ref{app: simulating sticky PDMPs} for details). This procedure is referred to as \emph{thinning} and remains the main challenge when simulating the Zig-Zag sampler. Furthermore, the efficiency of the algorithm deteriorates if the upper bounds are not tight.

Secondly, the Sticky Zig-Zag sampler, due to its continuous dynamics,  can experience difficulty traversing regions of low density, in particular it will have difficulty reaching 0 in a coordinate if that requires passing through such a region. 

Finally, the process can set to 0 (and not 0) only one coordinate at a time, hence failing to be ergodic for measures not supported on neighbouring sub-models.  For example, consider the space $\RR^2$ and assumes that the process can visit either the origin $(0,0)$ or the full space $\RR^2$ but not the coordinate axes $\{0\}\times\RR \cup \RR \times \{0\}$. Then the process started in $\RR^2$ hits the origin with probability $0$, hence failing to explore the subspace $(0,0)$.

% Measures which are not supported on neighbouring sub-models are also not covered here. For example, consider the space $\RR^2$ and assumes that the process can visit either the origin $(0,0)$ or the full space $\RR^2$ but not the coordinate axes $\{0\}\times\RR \cup \RR \times \{0\}$. Then the process started in $\RR^2$ hits the origin with probability $0$, hence failing to explore the subspace $(0,0)$. 

In what follows, we outline promising research directions deferred to future work.  
\subsection{Sticky Hamiltonian Monte Carlo}
The ordinary Hamiltonian Monte Carlo (HMC) process as presented by \textcite{neal2011mcmc} can be seen as a piecewise deterministic Markov processes with deterministic dynamics equal to
\begin{equation}\label{eq: hamiltonian dynamics}
\dot x =  v, \qquad \dot v = - \nabla \Psi(x)  
\end{equation}
where $\nabla \Psi$ is the gradient of the negated log-density relative to the Lebesgue measure. At random exponential times with constant rate, the velocity component is refreshed as $v \sim \cN(0,I)$ (similarly to the refreshment events in the bouncy particle sampler). By applying the same principles outlined in Section~\ref{sec: sticky PDMP samplers}, such process can be made sticky with Equation~\eqref{eq: target measure} as its stationary measure.
 
Unfortunately, in most cases, the dynamics in \eqref{eq: hamiltonian dynamics} cannot be integrated analytically so that a sophisticated numerical integrator is usually employed and a Metropolis-Hasting steps compensates for the bias of the numerical integrator (see \cite{neal2011mcmc} for details). These two last steps makes the process effectively a discrete-time process and its generalization with sticky dynamics is not anymore trivial. 
\subsection{Extensions}
The setting considered in this work does not incorporate some relevant classes of measures:
\begin{itemize}
    \item Posteriors given by prior measures which freely choose prior weights for each (sub-)model. This limitation is mainly imputed to the parameter $\kappa =(\kappa_1,\kappa_2,\dots,\kappa_d)$ which here does not depend on the location component $x$ of the state space. While the theoretical framework built can be easily adapted for letting $\kappa$ depend on $x$, it is currently unclear to us the exact relationship between $\kappa$ and the posterior measure in this more general setting. 
    \item  Measures which are not supported on neighbouring sub-models are also not covered here. 
    % For example, consider the space $\RR^2$ and assumes that the process can visit either the origin $(0,0)$ or the full space $\RR^2$ but not the coordinate axes $\{0\}\times\RR \cup \RR \times \{0\}$. Then the process started in $\RR^2$ hits the origin with probability $0$, hence failing to explore the subspace $(0,0)$.  
    To solve this problem, different dynamics for the process should be developed which allow the process to jump in space and set multiple coordinates to 0 (and not 0) at a time. 
\end{itemize}

% \textbf{Declaration of conflicting interests:} The authors declare that there is no conflict of interest.

\textbf{Acknowledgement:}
this work is part of the research programme \textit{Bayesian inference for high dimensional processes} with project number 613.009.034c, which is (partly) financed by the Dutch Research Council (NWO) under the  \textit{Stochastics -- Theoretical and Applied Research} (STAR) grant. J.~Bierkens acknowledges support by the NWO for the research project \textit{Zig-zagging through computational barriers} with project number 016.Vidi.189.043. 

\printbibliography 

\appendix

\clearpage 

\section{Details of the Sticky Zig-Zag sampler}\label{app: details of the sticky zz sampler}
\subsection{Construction}\label{app : construction sticky zz}
\newcommand{\stimes}{\!\times\!}
In this section we discuss how the Sticky Zig-Zag can be constructed as a \emph{standard} PDMP in the sense of \textcite{Davis1993}.
The construction  is a bit tedious, but the underlying idea is simple: the Sticky Zig-Zag process has the dynamics of a ordinary Zig-Zag process until it reaches a freezing boundary $\mathfrak F_i =\{(x,v) \in E\colon x_i = 0^-, \, v_i >0  \, \text{ or } \, x_i = 0^+, \, v_i < 0\}$ of $E = \overline\RR^d \times \cV$, with $\overline \R = (-\infty,0^-] \sqcup [0^+,\infty)$ which has two copies of $0$.  Then it immediately changes dynamics and evolves as a lower dimensional ordinary Zig-Zag process \emph{on the boundary}, at least until an unfreezing event happens or upon reaching yet another freezing boundary in the domain of the restricted process. 

Davis' construction allows a standard PDMP  to make instantaneous jumps at boundaries of open sets, but puts restrictions on further behaviour at that boundary. 
We circumvent these restrictions by first splitting up the space  $\RR^d \times \cV$ into disconnected components in a way somewhat different than the construction of $E$ as presented in Section~\ref{sec: sticky PDMP samplers}. Only at a later stage we recover the definition of $E$.

Define the set
\[ K = \{ \belowtozero ,  \aboveaway , \belowaway, \abovetozero, \zerofromabove, \zerofrombelow \} \]
and 
\[
|K| = \{\atzero, \awayfromzero, \towardzero\}
\]
(note that $|K|$ does not denote the cardinality of the set $K$). 
Define the functions $k \colon \RR \times \RR  \to K$ and $|k| \colon \RR \times \RR \to |K|$ by
\begin{center}
    \begin{tabular}[ht!]{|c|c|c|c|}
         \hline $(x,v)$ & $k(x,v)$ & at $(x,v)$  the process is... & $|k|(x,v)$ \\ 
         \hline \hline $x>0, v>0$ & $\aboveaway$ & ...moving away from 0 with positive velocity & $\awayfromzero$\\
         \hline $x<0, v<0$ & $\belowaway$ & ...moving away from 0 with negative velocity  & $\awayfromzero$ \\
         \hline $x>0, v<0$ & $\abovetozero$ & ...moving toward 0 with negative velocity  & $\towardzero$\\
         \hline $x<0, v>0$ & $\belowtozero$ & ...moving toward 0 with positive velocity  & $\towardzero$\\
         \hline $x = 0, v>0$ & $\zerofrombelow$ & ...at 0 with positive velocity & $\atzero$\\
         \hline $x = 0, v<0$ & $\zerofromabove$ & ...at 0 with negative velocity  & $\atzero$ \\
         \hline
    \end{tabular}
\end{center}

If $(x,v) \in \RR^d \times \mathcal{V}$, then extend $k \colon \overline\RR^d \times \mathcal{V} \to K^d$ and $|k| \colon \overline\RR^d \times \mathcal{V}\to |K|^d$ by applying the map $k$ and $|k|$ coordinatewise.

For each $\ell \in K^d$ define
\[ \tilde{E}_\ell^\circ =\{ (\ell, x, v) \colon k(x,v) = \ell \} \] 

Note that for $\ell \neq \ell'$ the sets $\tilde{E}_\ell^\circ$ and $\tilde{E}_{\ell'}^\circ$ are disjoint. 
The set $\tilde{E}_\ell^\circ$
 is open under the metric introduced in \textcite{Davis1993}, p.58, which sets the distance between two points $(\ell, x,v)$ and $(\ell',x',v')$ to 1 if $\ell \ne \ell'$. We denote the induced topology on $\tilde E$ by $\tilde \tau$.
 $\tilde E^\circ_\ell$ is a subset of $\RR^{2d}$ of dimension $d_\ell = \sum_{i =1}^d \1_{|\ell_i| \ne \atzero}$, since the velocities are constant in $E^\circ_\ell$ and the position of the components $i$ where $\ell_i = \atzero $ are constant as well in $\tilde E^\circ_\ell$ ($\tilde E^\circ_\ell$ is isomorphic to an open subset of $\RR^{d_\ell}$).
 
The sets which contain a singleton, i.e.\ $|\tilde E^\circ_\ell| = 1$,  are those sets $\tilde E^\circ_\ell$ such that $|\ell_i(x,v)| = \atzero$ for all  $i = 1,2,\dots, d$ and are open as they contain one isolated point, but will have to be treated a bit differently. Then $\tilde E^\circ = \bigcup_{\ell \in K^{d}} \tilde E^\circ_\ell$ is the tagged space of open subsets of $\RR^{d_\ell}$ used in \textcite[Section 24]{Davis1993}.

$\tilde E^\circ$ separates the space into isolated components of varying dimension. In each component, the Sticky Zig-Zag process behaves differently and essentially as a lower dimensional Zig-Zag process.

Let $\partial\tilde E_\ell^\circ$ denote the boundary of $\tilde E_\ell^\circ$ in the embedding space $\RR^{d_\ell}$ (where the velocity components are constant in $\tilde E_\ell^\circ$), with elements written $(\ell, x, v)$.
Some points in $\partial\tilde E_\ell^\circ$ will also belong to the state space $\tilde E$ of the Sticky Zig-Zag process, but only the entrance-non-exit boundary points:
\[
\tilde E = \bigcup_\ell \tilde E_\ell, \quad \tilde E_\ell = \tilde E_\ell^\circ \cup\{(\ell, x, v) \in \partial \tilde E^\circ_\ell \colon x_i = 0 \Rightarrow  |\ell_i| \ne \towardzero \text{ for all }i \}.
\]
(This corresponds to the definition of the state space in \cite[Section 24]{Davis1993}, only that we use knowledge of the flow.)

The remaining part of the boundary is
\[
 \Gamma = \bigcup_\ell \Gamma_\ell\subset \bigcup_\ell \partial \tilde E_\ell^\circ , \quad  \Gamma_\ell = \{(\ell, x, v) \in \partial \tilde E^\circ_\ell, \exists i \colon x_i = 0, |\ell_i|= \towardzero \},
\]
with $\tilde E \cap \Gamma = \varnothing$ so that $\Gamma$ is not part of the state space $\tilde E$. Any trajectory approaching $\Gamma$, jumps back into $\tilde E$ just before hitting $\Gamma$. 
If $\tilde E^\circ_\ell$ is a singleton $(|\tilde E^\circ_\ell| = 1)$, then $ \Gamma_\ell = \emptyset$ and  $\tilde E_\ell = \tilde E^\circ_\ell$ (atoms). 

\begin{lemma}
A bijection 
$\iota\colon \tilde E \to E$ is given by
\[
\iota((\ell, \tilde x, v)) =  (x,  v)
\]
where
\[x_i =
\begin{cases}
0^{+} \,(0^{-}) & \ell_i=\zerofromabove\, (\ell_i= \zerofrombelow)\\
0^{+} \,(0^{-}) & \ell_i=\aboveaway \,(\ell_i= \belowaway), \, \tilde x_i = 0 \\% freshly thawed
% if k_i = \pm 1, x_i \ne 0, cadlag!
\tilde x_i & \text{otherwise.}
\end{cases}
\]

\end{lemma}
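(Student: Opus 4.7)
The plan is to verify bijectivity by explicit coordinate-wise case analysis, exploiting the fact that $\iota$ acts independently on each pair $(\ell_i, \tilde x_i, v_i)$ and leaves $v$ untouched.

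First, I would check that $\iota$ is well-defined, i.e.\ that $(x,v) := \iota(\ell,\tilde x, v) \in E = \overline\R^d \times \cV$ for every $(\ell,\tilde x, v) \in \tilde E$. Since $v$ is unchanged, $v \in \cV$. For each coordinate $i$, the prescription for $x_i$ splits into three branches: the atom labels $\ell_i \in \{\zerofromabove, \zerofrombelow\}$ (which force $\tilde x_i = 0$ and send $x_i$ to $0^+$ or $0^-$ accordingly), the labels $\ell_i \in \{\aboveaway, \belowaway\}$ with $\tilde x_i = 0$ (sent to $0^+$ or $0^-$), and the generic case $\tilde x_i \ne 0$ where $x_i = \tilde x_i$ is interpreted in the appropriate half-line $(-\infty, 0^-)$ or $(0^+, \infty)$. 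The case $\ell_i \in \{\abovetozero, \belowtozero\}$ with $\tilde x_i = 0$ does not arise because such points belong to $\Gamma$, and $\tilde E \cap \Gamma = \varnothing$ by construction.

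Next, I would construct the inverse $\iota^{-1}$ explicitly. Given $(x,v) \in E$, for each $i$ set $\tilde x_i = x_i$ if $x_i \in (-\infty,0^-) \cup (0^+,\infty)$, identified with a nonzero real, and $\tilde x_i = 0$ if $x_i \in \{0^+, 0^-\}$. Then assign the label $\ell_i$ by reading off the side of zero from $x_i$ together with the sign of $v_i$: if $x_i > 0$, then $\ell_i = \aboveaway$ or $\abovetozero$ according to $v_i > 0$ or $v_i < 0$; if $x_i < 0$ symmetrically; if $x_i = 0^+$, then $\ell_i = \aboveaway$ when $v_i > 0$ and $\ell_i = \zerofromabove$ when $v_i < 0$; if $x_i = 0^-$, symmetrically with $\belowaway$ and $\zerofrombelow$. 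I would then verify that the resulting point lies in $\tilde E$ rather than $\Gamma$ by checking the entrance-non-exit condition ($x_i = 0 \Rightarrow |\ell_i| \ne \towardzero$), which holds because the labels above never take the values $\abovetozero$ or $\belowtozero$ when $\tilde x_i = 0$.

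Finally, I would verify that the composition $\iota \circ \iota^{-1}$ and $\iota^{-1} \circ \iota$ give the respective identities by matching the branches of the two case analyses. This is mechanical: each branch of $\iota$ in the definition corresponds to exactly one branch of the constructed inverse. The only subtlety, and the main thing to be careful about, is the apparent overlap at $\tilde x_i = 0$, where both $\aboveaway$ and $\zerofromabove$ map to $x_i = 0^+$; these are distinguished by the sign of $v_i$, so the labels are pinned down uniquely, and no two distinct preimages collapse. This confirms $\iota$ is a bijection.
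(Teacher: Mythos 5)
Your proof is correct and follows essentially the same approach as the paper's: check well-definedness ($\iota(\tilde E)\subset E$), then construct the preimage of an arbitrary $(x,v)\in E$ coordinate-by-coordinate by reading off the label $\ell_i$ from the copy of zero (or the sign of $x_i$) together with the sign of $v_i$, noting that the entrance-non-exit condition rules out $|\ell_i|=\towardzero$ when $\tilde x_i=0$. You are somewhat more explicit than the paper in writing out the inverse map and verifying both compositions, where the paper contents itself with exhibiting a preimage.
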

\begin{proof}

Recall that $\alpha(x,v) := \{i\in \{1,2,\dots, d\} \colon (x, v) \notin  \mathfrak F_i \}$ and $\alpha^c$ denotes its complement. 
First of all, notice that $\iota(\tilde E) \subset E$. Now let $(x,v)\in E$ be given. We construct $e \in \tilde E$ such that $(x,v) = \iota(e)$. If there is at least one $x_j = 0^\pm$ with $j \notin \alpha(x,v)$, then take $e = (\ell, \tilde x,v) \in \tilde E\setminus \tilde E^\circ$
as follows (entrance-non-exit boundary): for $i \in \alpha^C$ we have $|\ell_i| = \atzero, \, \tilde x_i = 0$, while for all $i \in \alpha$ with $x_i = 0^\pm$, we have $|\ell_i| = \awayfromzero, \, \tilde x_i = 0$. Then $\iota(e) = (x, v)$. Otherwise, $e =(k(\tilde x, v), \tilde x,v)) \in \tilde E^\circ$ (interior of an open set) and $\iota(e) = (x, v)$ where $\tilde x_i = 0$ for all $i \in \alpha(x, v)$ and $\tilde x_i = x_i$ otherwise.

\end{proof}

Having constructed the state space, we proceed with the process dynamics. 
Firstly, the deterministic flow (locally Lipschitz for every $\ell \in K$) is determined by the functions $\tilde \phi_\ell\colon [0,\infty)\times \tilde E^\circ_\ell \to \tilde E^\circ_\ell$ which for the sticky ZigZag process are given by 
\[\tilde \phi(t, \ell, x, v) = (\ell, x', v), \quad \forall (\ell, x, v) \in E,
\]
with $x_i + v_i t (\1_{|\ell_i| \ne \atzero}), i= 1,2,\dots,d$
and determines the vector fields  
\begin{equation*}
    \mathfrak X_\ell \tilde f(\ell, x, v) = \sum_{i =1}^d  \1_{|\ell_i| \ne \atzero} v_i \partial_{x_i} f(\ell, x, v), \quad f \in C^1(\tilde E).
\end{equation*}
Sometimes we write $\tilde \phi_k(t, x, v) = \tilde \phi(t, k, x, v)$ for convenience.
Next, further state changes of the process are instantaneous, deterministic jumps from the boundary $\Gamma$ into $\tilde E$ 
\[
\cQ^{\rm f}(((\ell, x,v), \cdot)) =  \delta_{(k(x,v), x, v)}, \quad (\ell, x, v) \in \Gamma\\
\]
and random jumps at random times corresponding to unfreezing events 
\[
\cQ^{\rm s}((\ell, x,v), \cdot) =
\frac{\sum_i \lambda_i^{\rm s}(\ell,x,v) \delta_{(\ell[i\colon \ell_i'], x, v)}}{\sum_{i} \lambda_i^{\rm s}(i,x,v)}
\]
with $\ell_i' = \aboveaway$ if $\ell_i = \zerofrombelow$ and $\ell_i' = \belowaway$ if $\ell_i =\zerofromabove$,  and random reflections
\[\cQ^{\rm r}((\ell,x,v), \cdot) = \frac{\sum_{i} \lambda^{\rm r}_i(\ell,x,v) \delta_x  \delta_{v[i\colon -v_i]}\delta_{\ell}}{\sum_{i}\lambda_i^{\rm r}(\ell, x, v)} \]
with
\[
\lambda_i^{\rm s}(\ell, x, v)= \1_{|\ell_i| = \atzero} \kappa_i
\]
and
\[
\lambda^{\rm r}_i(\ell,x,v) = \1_{\ell_i \ne \atzero} \left((v_i \partial_i \Psi(x))^+ + \lambda_{0, i}(x)\right), \quad i=1,2,\dots,d.
\]
 
Then $\lambda  \colon \tilde E \to \RR^+$
\[
\lambda(\ell, x, v) =\sum_{i=1}^d\lambda_i^{\rm r}(\ell, x, v) +  \lambda_i^{\rm s}(i,x,v)
\]
and  a Markov kernel  $\cQ\colon (\tilde E \cup 
\Gamma , \cB(\tilde E \cup 
\Gamma )) \to [0,1]$ by 
\[
\cQ((\ell,x,v), .) = 
\begin{cases}
\frac{\sum_{i}\lambda_i^{\rm r}(\ell, x, v)}{\lambda(\ell, x, v)}\cQ^{\rm r}((\ell,x,v), .) + \frac{\sum_{i}\lambda_i^{\rm s}(\ell, x, v)}{\lambda(\ell, x, v)} \cQ^{\rm s}((\ell,x,v), .) & (\ell, x, v) \in \tilde E,\\
\cQ^{\rm f}((\ell,x,v), .) & (\ell,x,v) \in \Gamma.
\end{cases}
\]
\begin{proposition}
  
 $\mathfrak X, \lambda, \cQ$ satisfy the \emph{standard conditions} given in \textcite[Section 24.8]{Davis1993}, namely \begin{itemize}
     \item For each $\ell \in K,\, \mathfrak X_\ell$ is a locally Lipschitz continuous vector field and determines the deterministic flow $\tilde\phi_\ell:\tilde E_\ell \to \tilde E_\ell $ of the PDMP.
    \item $\lambda \colon \tilde E \to \RR^+$ is measurable and such that $t \to \lambda(\tilde\phi_\ell(t, x, v))$ is integrable on $[ 0, \epsilon(\ell,x,v))$, for some $\epsilon>0$, for each $\ell,x,v$.  
    \item $\cQ$ is measurable and such that $\cQ((\ell, x, v), \{(\ell, x, v)\}) = 0$
    \item The expected number of events up to time $t$, starting at $(\ell,x,v)$ is finite for each $t>0, \forall (\ell,x,v) \in \tilde E$
     \end{itemize} 
\end{proposition}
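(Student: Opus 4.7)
My plan is to verify the four standard conditions of \textcite[Section 24.8]{Davis1993} one by one on each component $\tilde E^\circ_\ell$ of the decomposition, exploiting that within each $\tilde E^\circ_\ell$ the dynamics reduce to a $d_\ell$-dimensional affine flow with piecewise-continuous rates, and then combining with the deterministic jumps at $\Gamma$.

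For the first bullet, I would simply observe that in the coordinates of $\tilde E^\circ_\ell$ the vector field $\mathfrak X_\ell$ reads $\dot x_i = \mathbf 1_{|\ell_i|\neq \atzero}\, v_i$ and $\dot v_i = 0$, i.e.\ it is constant (in particular globally Lipschitz) on each component, and the induced flow $\tilde\phi_\ell(t,x,v)$ is the obvious linear translation, mapping $\tilde E^\circ_\ell$ into itself up to the boundary hitting time. For the third bullet I would unpack $\cQ$: on $\Gamma$ the kernel $\cQ^{\rm f}$ is a Dirac mass at $(k(x,v),x,v) \in \tilde E$, which is disjoint from $\Gamma$ by construction; for $\cQ^{\rm r}$ each atom changes one $v_i$ to $-v_i$, hence changes state; for $\cQ^{\rm s}$ each atom changes the label from $|\ell_i|=\atzero$ to $|\ell_i|=\awayfromzero$. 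Measurability is inherited from the fact that the indicators $\mathbf 1_{|\ell_i|=\atzero}$, $\mathbf 1_{|\ell_i|\neq\atzero}$, the velocity components, and $(v_i \partial_i\Psi(x))^+$ are all Borel measurable when $\Psi\in C^1$.

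For the second bullet, I would use that $\lambda(\ell,x,v)=\sum_i(\lambda_i^{\rm r}+\lambda_i^{\rm s})(\ell,x,v)$ is the sum of an indicator times $\kappa_i$ (which is bounded) and the continuous function $(v_i \partial_i\Psi(x))^+$ restricted to $\tilde E^\circ_\ell$. Along the flow $t\mapsto \tilde\phi_\ell(t,x,v)$, which is affine, $\partial_i\Psi$ pulls back to a continuous function of $t$ on $[0,\epsilon)$ where $\epsilon=\epsilon(\ell,x,v)$ is the first exit time from $\tilde E^\circ_\ell$, so $t\mapsto \lambda(\tilde\phi_\ell(t,x,v))$ is continuous and hence integrable on compact sub-intervals of $[0,\epsilon)$.

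The main obstacle is the fourth bullet (non-explosion), because the process experiences three kinds of events (reflections, thawings, and deterministic freezings at $\Gamma$) and one must ensure their total count has finite expectation on every finite interval. The key observation is that the deterministic flow satisfies $|x_i(t)|\le |x_i(0)|+a_i t$, so starting from $(\ell,x,v)$, in time $T$ the trajectory stays in the compact box $K_T=\prod_i[-|x_i|-a_iT,|x_i|+a_iT]$ regardless of which events occur; since $\Psi\in C^1$ this bounds $\lambda_i^{\rm r}\le C_T:=\max_i a_i\sup_{y\in K_T}|\partial_i\Psi(y)|$, and $\lambda_i^{\rm s}\le\kappa_i$ trivially. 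A domination argument then compares the jumps of the non-$\Gamma$ type (reflections plus thawings) against a Poisson process with rate $\sum_i(C_T+\kappa_i)$, giving them finite expectation. For the deterministic freezings at $\Gamma$, I would argue coordinate-by-coordinate: after the $i$th coordinate hits $\Gamma$, it can only re-hit $\Gamma$ on the same or opposite side after either a thawing of coordinate $i$ followed by a reflection of coordinate $i$, or a thawing followed directly by drift on the opposite side which itself requires a new reflection to return. Hence the count of $\Gamma$-crossings in coordinate $i$ on $[0,T]$ is bounded by $1+$(thawings of $i$)$+$(reflections of $i$) on $[0,T]$, which is integrable by the previous step. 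Summing over $i$ yields finite expected total number of events and thus non-explosion.
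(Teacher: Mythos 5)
Your proof is correct, and it is substantially more detailed than the paper's own argument. The paper disposes of the first three bullets silently and proves the last bullet in one sentence: it simply observes that the \emph{purely deterministic} flow (ignoring random events) hits $\Gamma$ at most $d$ times before reaching the singleton $(0,\dots,0)$ and stopping, and then implicitly appeals to Davis's non-explosion criterion (which combines finiteness of the deterministic $\Gamma$-hits with local boundedness of $\lambda$). Your argument for the fourth bullet reconstructs that criterion from scratch: you bound the position on $[0,T]$ in a compact box $K_T$ (the velocity components have fixed modulus $a_i$, so any sequence of reflections, thawings, or freezings leaves $|x_i(t)|\le|x_i(0)|+a_iT$), dominate the intensity there by $\sum_i(C_T+\kappa_i)$ via continuity of $\nabla\Psi$, control the random jump count by Poisson domination, and then recover finiteness of the deterministic $\Gamma$-crossings by the combinatorial observation that each re-entry of coordinate $i$ into $\Gamma$ must be preceded by a thaw of $i$ followed by a reflection of $i$. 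That counting step is exactly what makes the paper's one-line observation ("at most $d$ hits absent random events") actually deliver condition (iv) \emph{with} random events, and it is the genuinely non-trivial part of your write-up. The two minor things to fix: (1) you dropped the auxiliary rate $\lambda_{0,i}$ that appears in the appendix's definition of $\lambda_i^{\rm r}$ (harmless, since the paper takes $\lambda_{0,i}=0$ for the Zig-Zag and otherwise only requires $\lambda_{0,i}$ continuous, so it is absorbed into $C_T$); (2) in checking $\cQ^{\rm r}((\ell,x,v),\{(\ell,x,v)\})=0$ it is cleaner to note that the velocity coordinate is part of the state of $\tilde E$ and is changed to $v[i:-v_i]\ne v$, rather than arguing via the label $\ell$ (as written, the paper's kernel $\cQ^{\rm r}$ carries $\delta_\ell$ unchanged). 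Neither affects correctness.
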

To see the latter, remember that for any initial point $(\ell, x, v) \in \tilde E$, the deterministic flow (without any random event) hits $\Gamma$ at most $d$ times before reaching the singleton $(0,0,\dots,0)$ and being constant there. 

\subsection{Strong Markov property}\label{app: strong markov property sticky zz}
\begin{proposition}  (Part of Theorem~\ref{thm: Sticky Zig-Zag main result})
Let $(\tilde Z_t)$ be a Zig-Zag process on $\tilde E$ with characteristics $\mathfrak X, \lambda, \cQ$. Then $Z_t = \iota(\tilde Z_t)$ is a strong Markov process. 
\end{proposition}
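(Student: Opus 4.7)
The plan is to invoke Davis's general construction directly: having already verified in the previous proposition that $(\mathfrak X, \lambda, \cQ)$ satisfy the standard conditions of \textcite[Section 24.8]{Davis1993}, Theorem~25.5 of that monograph yields that $\tilde Z_t$ is a strong Markov process on $(\tilde E, \tilde \tau)$. What is genuinely at stake here is therefore only the transfer of this property along the bijection $\iota$ from $\tilde E$ to $E$.

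First I would verify that $\iota\colon (\tilde E, \cB(\tilde E,\tilde\tau)) \to (E, \cB(E,\tau))$ is a bimeasurable bijection. Bijectivity was shown in the preceding lemma. For measurability one exploits the disconnected structure of $\tilde E = \bigsqcup_{\ell \in K^d} \tilde E_\ell$: each $\tilde E_\ell$ is (after discarding the constant velocity coordinates and the components pinned at $0$) Borel-isomorphic to an open subset of $\R^{d_\ell}$, and on each piece the restriction of $\iota$ is just the identity on the moving coordinates composed with the insertion of the appropriate $0^\pm$ label for each frozen index. The image $\iota(\tilde E_\ell)$ is a Borel subset of $E$ under the natural topology $\tau$, and the restrictions $\iota|_{\tilde E_\ell}$ and $\iota^{-1}|_{\iota(\tilde E_\ell)}$ are both Borel measurable. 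Gluing the finitely many pieces $\ell \in K^d$ gives bimeasurability of $\iota$ on all of $\tilde E$.

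Second I would transfer the strong Markov property. Since $\iota$ is a Borel isomorphism, the natural filtrations coincide: $\cF^Z_t := \sigma(Z_s, s \le t) = \sigma(\iota(\tilde Z_s), s \le t) = \sigma(\tilde Z_s, s \le t) =: \cF^{\tilde Z}_t$. In particular, every $(\cF^Z_t)$-stopping time $T$ is an $(\cF^{\tilde Z}_t)$-stopping time. For bounded Borel $f\colon E \to \R$, put $\tilde f := f \circ \iota$, which is bounded and Borel on $\tilde E$. Applying the strong Markov property of $\tilde Z$ at $T$,
\[
\E\bigl[f(Z_{T+s}) \mid \cF^Z_T\bigr]
= \E\bigl[\tilde f(\tilde Z_{T+s}) \mid \cF^{\tilde Z}_T\bigr]
= \bigl(P^{\tilde Z}_s \tilde f\bigr)(\tilde Z_T)
= \bigl(P^Z_s f\bigr)(Z_T),
\]
where in the last step the semigroup identity $P^Z_s f = (P^{\tilde Z}_s (f\circ\iota))\circ \iota^{-1}$ follows from the intertwining $Z_t = \iota(\tilde Z_t)$ together with bimeasurability of $\iota$. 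This is exactly the strong Markov property for $Z$ on $(E,\tau)$.

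The main obstacle I anticipate lies entirely in the bimeasurability step, since the metric of \textcite{Davis1993} on $\tilde E$ makes distinct pieces $\tilde E_\ell$ topologically disconnected, whereas under the natural topology $\tau$ on $E$ the half-lines $(-\infty,0^-]$ and $[0^+,\infty)$ are joined to each other through the dynamics (via the transfer maps $T_i$); one must therefore be careful that collapsing the label $\ell$ into the sign of $x_i$ near zero does not create a non-Borel image. The decomposition into the finitely many $\tilde E_\ell$, each essentially a Borel subset of Euclidean space, together with the explicit formula for $\iota$ should render this verification routine.
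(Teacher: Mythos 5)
Your argument is correct and tracks the paper's own proof closely: both invoke Davis's Theorem~25.5 to obtain the strong Markov property of $\tilde Z_t$ and then transport it through the bijection $\iota$ to $Z_t = \iota(\tilde Z_t)$. The only differences are cosmetic — the paper cites Rogers and Williams (Lemma~14.1) to effect the transfer and is terse about measurability, whereas you carry out the transfer directly, making explicit both the bimeasurability of $\iota$ (which is not a homeomorphism but is a Borel isomorphism, as your piecewise analysis over the finitely many $\tilde E_\ell$ shows) and the resulting coincidence of the natural filtrations.
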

\begin{proof}
By \textcite{Davis1993}, Theorem 26.14, the domain of the extended generator of the process $(\tilde Z_t)$ with characteristics $\mathfrak X, \lambda, \cQ$ is  
\begin{align*}
    \cD(\tilde \cA) = \{f \in \cM(\tilde E); \, &t \to f(\tilde\phi_\ell(t,x, v)) \text{ $\tilde \tau$-absolutely continuous } \forall (\ell,x,v) \in \tilde E, t = [0, t_\Gamma(\ell,x,v)); \\
    &f(\ell, x, v) = f(\kappa(x,v),x,v),\quad (\ell, x, v)\in \Gamma\},
\end{align*}
with 
\[
t_{\Gamma}(\ell, x, v) =  \inf\{0\le t \colon  \tilde\phi_\ell(t, x, v) \in \tilde \Gamma\}
\]
and
\[
\tilde \cA f(\ell,x,v) = \mathfrak X_\ell f(\ell,x,v) + \lambda(\ell,x,v) \int_{\tilde E} (f(\ell',x',v') - f(\ell,x,v)) Q(\ell,x,v, \dd(\ell,x,v)). 
\]

The strong Markov property of $(\tilde Z_t)$ follows by \textcite{Davis1993}, Theorem 25.5. Denote by $(\tilde P_t)_{t\ge0}$ the Markov transition semigroup of $(\tilde Z_t)$ and let $(P_t)_{t\ge0}$ be a family of probability kernels on $E$ and such that for any bounded measurable function  $f \colon E \to \RR$ and any $t\ge0$,
\[
\tilde P_t (f \circ \iota) = (P_t f)\circ \iota.
\]
Then $(P_t)_{t\ge0}$ is the Markov transition semigroup of the process $Z_t = (\iota(\tilde Z_t))$. By \textcite{rogers2000diffusions}, Lemma 14.1, and since any stopping time for the filtration of $(\tilde Z_t)$ is a stopping time for the filtration of $(Z_t)$, $Z_t$ is  a \emph{strong} Markov process.

\end{proof}

\subsection{Feller property}\label{app: feller property sticky zz}
Given an initial point $\ell, x,v \in \tilde E$, let  
\[
t_{\Gamma_1}(\ell, x, v) =  \inf\{0\le t \colon \tilde \phi_\ell(t, x, v) \in \tilde \Gamma\}
\]
and define the \emph{extended deterministic flow} $\tilde \varphi\colon \tilde E \to \tilde E$ by setting $\varphi(0, \ell, x, v) = (\ell, x, v)$ and recursively by
\[
\tilde \varphi(t, \ell, x , v) = \begin{cases}  \tilde \varphi_\ell(t, x, v) & t < t_{\Gamma_1},  \\
\tilde  \varphi (t - t_{\Gamma_1}, k(x',v'), x', v') \, & t \ge  t_{\Gamma_1} 
\end{cases} 
\]
with $(\ell', x',v') = \lim_{t \to t_{\Gamma_1}}\tilde \varphi_\ell(t, x, v) \in \Gamma$. 

Observe that $t \to \iota(\tilde \varphi(t, \ell,x,v))$ is continuous on $(E, \tau)$. 
Define also
\[
\Lambda(t, \ell, x,v) = \int_0^t \lambda(\tilde \varphi(s, \ell,x,v)) \dd s.
\]
Notice that, while $(\ell, x,v) \to \lambda(\ell,x,v)$ has discontinuities at the boundaries $\Gamma$, $(\ell, x,v) \to \Lambda(\ell,x,v)$ is continuous. Denote by $T_1$ the first random event (so excluding the deterministic jumps). Then for functions $f \in B(\tilde E)$ and $\psi \in B(\RR^+ \times \tilde E)$, set $z(t) = (\ell(t), x(t), v(t))$ and define
\[
\tilde G\psi(t,\ell,x,v) = E[f(z(t))\1_{t< T_1} + \psi(t- T_1, z(t))\1_{t\ge T_1}].
\]
We have that
\begin{equation} \label{eq: operator G}
           \tilde G\psi(t, \ell, x, v) = f(\tilde \varphi(t,\ell, x, v ))  \times  \mathcal{T} 
           \end{equation}
          with 
          \[ \mathcal{T} =       \sum_i \int_0^{t} \ind_{t \in [ t^\Gamma_i, t^\Gamma_{i+1})} \int_{x',v'} \psi(t - s, \ell, x, v)\cQ((\ell, \dd x', \dd v'), \tilde \varphi(s, \ell,x,v))\lambda(\tilde \varphi(s,\ell,x,v))e^{-\Lambda(s,\ell,x,v)} \dd s.\]
The Feller property holds if, for each fixed $t$ and for $f \in C_b( E)$, we have that $(x,v) \to P_t f(x,v)$ is continuous (and bounded follows easily). This is what we are going to prove below, by making a detour in the space $\tilde E$, using the bijection $\iota$ and adapting some results found in \textcite[Section 27]{Davis1993}, for the process $\tilde Z_t$.
\begin{theorem}
 (Part of Theorem~\ref{thm: Sticky Zig-Zag main result}) $Z_t$ is a Feller process. 
\end{theorem}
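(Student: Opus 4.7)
The plan is to adapt Davis's iterative semigroup construction (\textcite[Section 27]{Davis1993}) but working with continuity in the topology $\tau$ on $E$ rather than Davis's topology $\tilde\tau$ on $\tilde E$. The key observation is that the deterministic jumps at $\Gamma$ encoded by $\cQ^{\rm f}$ are genuine discontinuities in $(\tilde E,\tilde\tau)$ but are continuous transitions in $(E,\tau)$: by its very design, the extended flow $\tilde\varphi$ satisfies that $t\mapsto \iota(\tilde\varphi(t,\cdot))$ is $\tau$-continuous, which is precisely the point of having introduced two copies of zero.

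First I would define $\Phi_t(x,v) := \iota(\tilde\varphi(t,\iota^{-1}(x,v)))$ and prove it is jointly continuous on $[0,\infty)\times E$. For the sticky Zig-Zag this reduces to a coordinatewise check: along each coordinate the trajectory is piecewise linear with velocity either $v_i$ (active) or $0$ (frozen), the first freezing time of coordinate $i$ is the continuous function $|x_i|/|v_i|$ of the initial condition on $(\overline\RR\times\cV,\tau)$ (note $0^+$ and $0^-$ are separated in $\tau$), and a finite composition of such affine pieces is continuous. A similar argument shows the integrated rate $\Lambda_E(t,x,v) := \Lambda(t,\iota^{-1}(x,v))$ is jointly continuous, using that $\lambda$ is bounded on $\tau$-bounded sets and $\partial_i\Psi$ is continuous.

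Next, define $\bar G\psi(t,x,v) := \tilde G(\psi\circ\iota)(t,\iota^{-1}(x,v))$ for $\psi\in C_b([0,\infty)\times E)$. Using (a) continuity of $\Phi_s$ in $(s,x,v)$, (b) continuity of $\Lambda_E$, (c) continuity of the rates $\lambda_i^{\rm r}$, $\lambda_i^{\rm s}$ pushed through $\iota$ to $E$, and (d) bounded convergence for the integral over the finite interval $[0,t]$, I would show that $\bar G$ maps $C_b([0,\infty)\times E)$ into itself. The standard PDMP argument then expresses $P_tf$ as the limit of iterates $\bar G^n f$; since the total rate $\lambda$ is locally bounded and the number of random events in $[0,t]$ is stochastically dominated by a Poisson variable, the convergence is uniform on bounded time intervals. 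A uniform limit of continuous bounded functions is continuous, so $P_tf\in C_b(E)$ for every $f\in C_b(E)$.

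The main obstacle is the first step: verifying continuity of $\Phi_t$ at initial conditions $(x_n,v_n)\to(x,v)$ in $(E,\tau)$ that have different active-index sets $\alpha(x_n,v_n)$, in particular at approximating sequences from the frozen boundaries $\fF_i$. One must show that the Lebesgue measure of the set of times in $[0,t]$ during which coordinate $i$ has been frozen varies $\tau$-continuously with the initial condition, and an analogous statement for the cumulative intensity. The design of $\tau$, together with the specific form $\kappa_i|v_i|$ of the unfreezing rate (which makes the holding-time law scale naturally with the flow speed), are what force this continuity; once it is in place the remaining steps are essentially mechanical.
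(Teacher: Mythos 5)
Your proposal follows essentially the same route as the paper's proof in Appendix~\ref{app: feller property sticky zz}: both adapt Davis's iterative semigroup construction (Section~27 of Davis 1993), rely on the $\tau$-continuity of the extended deterministic flow $\tilde\varphi$ and of the cumulative intensity $\Lambda$, and close via uniform convergence of the iterates using boundedness of $\lambda$ (equivalently, Poisson domination of the number of events on $[0,t]$). The only cosmetic difference is that you push the operator forward to $E$ as $\bar G = \tilde G(\cdot\circ\iota)\circ\iota^{-1}$, whereas the paper works on $\tilde E$ with the subclass of ``$\tau$-continuous'' functions $f$ for which $f\circ\iota\in C_b(E)$ — the underlying argument is identical.

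One small caution: your remark that the specific form $\kappa_i|v_i|$ of the unfreezing rate ``forces'' continuity of the flow is a misattribution — unfreezing is a \emph{random} state change via $\cQ$, not part of the extended deterministic flow $\tilde\varphi$, so it plays no role in the continuity of $\Phi_t$; the continuity comes entirely from the topology $\tau$ and the fact that the deterministic freezing jump at $\Gamma$ is, by construction of $E$, a $\tau$-continuous transition. This does not affect the validity of the proof, but the emphasis is misplaced.
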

\begin{proof}
Take $f \in C_b(\tilde E)$ such that $f\circ \iota \in C_b(E)$.
Call those functions on $\tilde E$ $\tau$-continuous.
We want to show that $\tilde P$ preserves $\tau$-continuity.
Notice that $\tau$-continuous functions on $\tilde E$ are such that
\[\lim_{t \to t_\Gamma} f(\tilde \varphi(t, \ell,x,v)) =  f(\tilde \varphi(t_\Gamma, \ell,x,v))), \quad (\ell,x,v) \in \tilde E.\]
For $\tau$-continuous functions $f$ and for a fixed $t$, the first term on the right hand side of \eqref{eq: operator G} $(\ell,x,v) \to f(\tilde \varphi(t,\ell,x,v))$ is clearly continuous. Also the second term is continuous since is of the form of an integral of a piecewise continuous function. Therefore, for any $t\ge0, \, \psi(t, \cdot) \in B(\tilde E)$ and $\tau$-continuous function $f$, we have that $(\ell,x,v)\to \tilde G \psi(t,\ell,x,v)$ is continuous. Clearly, the (similar) operator 
\[\tilde G_n \psi_\ell(t, x, v) = E_x[f(\tilde \varphi_\ell(t,x,v))\1_{t< T_n} + \psi(t- T_n, \tilde \varphi_\ell(t, x,v))\1_{t\ge T_n}],\]
with $T_n$ denoting the $n$th random time, is continuous as well for any fixed $n,\, t,\, \psi(t, \cdot) \in B(\tilde E)$ and $\tau$-continuous function $f$. 
By applying Lemma 27.3 in \textcite{Davis1993} we have that for any $\psi(t, \cdot) \in B(\tilde E)$  
\[|\tilde G_n \psi_\ell(t, x, v) - \tilde P_t f(x, v)| \le 2\max(\|\psi\|\|f\|) P(t\ge T_n).
\]
Finally, if $\lambda$ is bounded, then we can bound  $P(t\ge T_n)$ by something which does not depend on $(\ell, x,v)$ and goes to 0 as $n \to \infty$ so that $\tilde G_n \psi \to \tilde P_t f$  uniformly on $\ell,x,v \in \tilde E$ under the supremum norm. This shows that, for any $t$, $\tilde P_t$ (and therefore $P_t$) preserves $\tau$-continuity.
\end{proof}
\begin{remark}
The proof of the Feller and Markov property follow similarly for the Bouncy Particle and the Boomerang sampler.  
\end{remark}

\subsection{The extended generator of \texorpdfstring{$Z_t$}{Zt}}\label{app: the extended generator of sticky zig zag}

  Let $f \in \cD(\cA)$ if $\tilde f \in \cD(\tilde \cA)$ and $ f \circ \iota = \tilde f$. Then $f\in \cD(\cA)$ are $\tau$-absolutely continuous functions along full deterministic trajectories on $E$:
\begin{align*}
    \cD(\cA) = \{f \in \cM(E); \, &t \to f(\varphi(t,x, v)) \text{ $\tau$-absolutely continuous } \forall (x,v);  \\
    & \lim_{t \to 0} f(x[i\colon 0^+ + t], v) = f(x[i\colon 0^+], v);\\
    & \lim_{t \to 0} f(x[i\colon 0^- - t], v) = f(x[i\colon 0^-], v)\}.
\end{align*}
For those functions $f \in \cD(\cA)$ with $ f \circ \iota = \tilde f$ we have that \[\tilde \cA \tilde f(\ell,\tilde x, v) = \cA f(x,v) =  \sum_{i=1}^N \cA_i f(x,v)\]
with
\[
\cA_i f(x,v) = \begin{cases} \kappa_i (f(T_i(x,v)) - f(x,v)) & (x,v) \in \mathfrak F_i, \\
v_i \partial_{x_i} f(x,v) + \lambda_i(x,v)(f(x, v[i\colon -v_i]) - f(x,v)), & \text{otherwise,}
\end{cases}
\]
and
\[
\lambda_i(x,v) = (v_i \partial_i \Psi(x))^+ + \lambda_{0,i}(x), \quad i=1,2,\dots,d,
\]
for positive functions $\lambda_{0,i}$.

Denote  the space of compactly supported functions on $E$ which are continuously differentiable in their first argument by $C^1_c(E)$. Define $C_b(E) = \{f \in C(E) \colon f \text{ is bounded}\}$ and
$ D = \{f \in C^1_c( E), \cA f \in C_b( E)\}
$. The following proposition shows that the operator $\cA$ restricted to $D$ coincides with the infinitesimal generator of the ordinary Zig-Zag process restricted to $D$.
\begin{proposition}\label{prop: ext generator restricted resable standard zigzag generator}
We have
\begin{multline*}
D = \{f \in C_c^1( E) \colon v_i \kappa_i \left( f(T_i(x, v)) - f(x, v)\right) \\ = v_i \partial_i f(x,v) + \lambda_i (x,v)(f(x, v[i\colon-v_i])) - f(x, v)), (x, v) \in \mathfrak{F}_i \ \text{for all $i =1,\dots, d$}\}.
\end{multline*}
 For $f \in D$, $\cA f = \cL f$, where $\cL f = \sum_{i=1}^d \cL_i f$ with
\[\Ge_i f(x, v) = v_i \partial_{x_i} f(x, v) + \lambda_i(x,v)\left(f(x, v[i\colon-v_i]) - f(x, v)\right).
\]  

\end{proposition}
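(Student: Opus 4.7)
My proof plan is to identify exactly where $\cA f$ can fail to be continuous when $f \in C^1_c(E)$, translate the continuity requirement into the matching identity characterising $D$, and observe that under this identity the sticky formula on $\mathfrak F_i$ collapses to the standard Zig-Zag coordinate-generator, whence $\cA f = \cL f$.

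First I would verify that $C^1_c(E) \subset \cD(\cA)$: the one-sided limits at $x_i = 0^\pm$ required in the definition of $\cD(\cA)$ are automatic from $C^1$ regularity on each half of the doubled real line, and $t \mapsto f(\varphi(t,x,v))$ is absolutely continuous since the deterministic flow is piecewise affine in each coordinate and $f$ is $C^1$. Hence the decomposition $\cA f = \sum_i \cA_i f$ is well defined as a piecewise function on $E$.

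Next I would localise the possible discontinuities. On $E \setminus \mathfrak F_i$, $\cA_i f(x,v) = v_i \partial_{x_i} f(x,v) + \lambda_i(x,v)(f(x, v[i\colon -v_i]) - f(x,v))$ is continuous and bounded, using $f \in C^1_c$ together with continuity of $\lambda_i$; on $\mathfrak F_i$, $\cA_i f(x,v) = a_i \kappa_i (f(T_i(x,v))-f(x,v))$ depends continuously on the free coordinates $(x_{-i}, v)$ while the $i$-th coordinate is pinned at $0^\pm$; and the cross-terms $\cA_j f$ with $j \ne i$ are continuous across $\mathfrak F_i$. Therefore the only possible discontinuities of $\cA f$ arise from jumps of $\cA_i f$ across each $\mathfrak F_i$. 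In the topology $\tau$ a point $(x,v) \in \mathfrak F_i$ is approached along the deterministic flow from a single side (from $x_i \uparrow 0^-$ if $v_i>0$, from $x_i \downarrow 0^+$ if $v_i<0$), so by $C^1$ regularity the off-boundary expression has well-defined one-sided limit equal to $v_i \partial_{x_i} f(x,v) + \lambda_i(x,v)(f(x, v[i\colon -v_i]) - f(x,v))$.

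The condition $\cA f \in C_b(E)$ then reduces to equality of this one-sided limit with the boundary value $a_i \kappa_i (f(T_i(x,v))-f(x,v))$ at every $(x,v)\in \mathfrak F_i$ and every $i$. A short case analysis over the two possibilities $v_i>0, x_i=0^-$ and $v_i<0, x_i=0^+$ rewrites $a_i \kappa_i (f(T_i(x,v))-f(x,v))$ as the single expression $v_i\kappa_i(f(T_i(x,v)) - f(x,v))$ modulo the orientation convention $a_i = |v_i|$, which is exactly the matching identity in the statement; this gives the characterisation of $D$. The converse is immediate: if the identity holds on every $\mathfrak F_i$ then each $\cA_i f$ extends continuously across $\mathfrak F_i$, so $\cA f \in C(E)$, and boundedness follows from compact support of $f$ together with continuity of the integrands. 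Finally, on $D$ the matching identity allows us to replace the sticky term on $\mathfrak F_i$ by $\cL_i f$, while on the complement $\cA_i f = \cL_i f$ by definition; summing over $i$ yields $\cA f = \cL f$.

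The delicate step is the third one, where the geometry of the doubled-zero topology must be handled carefully: one has to track which side of $0^\pm$ is the incoming side under the flow, confirm that in both sign cases of $v_i$ on $\mathfrak F_i$ the boundary sticky term can be written in the single normalised form appearing in the statement, and interpret the one-sided derivative $\partial_{x_i} f$ from the correct side. Everything else is routine bookkeeping.
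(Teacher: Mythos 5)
The paper states this proposition without proof, treating it as an immediate consequence of the piecewise definition of $\cA$ and the one-sided structure of the doubled-zero topology. Your argument supplies exactly those details and is correct: for $f\in C^1_c(E)$ the only possible discontinuity of $\cA f$ is a jump of $\cA_i f$ at the frozen boundary $\mathfrak F_i$, the cross-terms $\cA_j f$ ($j\neq i$) being continuous there; since a point of $\mathfrak F_i$ is approached from a single side in the $i$th coordinate, the requirement $\cA f\in C_b(E)$ is precisely the stated matching identity; and once that identity holds, the boundary branch of $\cA_i$ coincides with $\cL_i$ on $\mathfrak F_i$ while the two agree off $\mathfrak F_i$ by definition, so $\cA f=\cL f$.

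One small caveat, which you implicitly notice: the prefactor on the left of the matching identity should really be $a_i\kappa_i=|v_i|\kappa_i$ (matching both the thaw intensity $\kappa_i|v_i|$ and the generator formula $a_i\kappa_i\bigl(f(T_i(x,v))-f(x,v)\bigr)$ given in Section 2.3), not $v_i\kappa_i$; for $(x,v)\in\mathfrak F_i$ with $x_i=0^+$, $v_i<0$, these differ by a sign. Your phrase ``modulo the orientation convention $a_i=|v_i|$'' correctly flags this, but it is worth being explicit that the discrepancy sits in the proposition as printed (the paper's Appendix formula for $\cA_i$ even drops the $a_i$ entirely), not in your derivation. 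Writing the matching condition with $|v_i|\kappa_i$ throughout would make the case analysis clean and the claim exact.
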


\begin{proposition}\label{prop: ext generator, domain}
   (Proposition~\ref{prop: gen and ext gen of sticky zz}) The extended generator of the process $(Z(t))$ is given by $\mathcal A$ with domain $\mathcal D(\mathcal A)$.
\end{proposition}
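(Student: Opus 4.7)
The plan is to obtain the extended generator on $E$ by transferring Davis's general formula for the extended generator of the standard PDMP $(\tilde Z_t)$ on $\tilde E$ through the bijection $\iota\colon \tilde E \to E$ constructed in Appendix~\ref{app : construction sticky zz}. Since we have already verified that $(\mathfrak X, \lambda, \cQ)$ satisfy Davis's standard conditions, \textcite[Theorem~26.14]{Davis1993} applies directly and yields both $\mathcal D(\tilde{\mathcal A})$ and the formula for $\tilde{\mathcal A}$ displayed in Appendix~\ref{app: strong markov property sticky zz}. The task then reduces to rewriting that domain and that operator in the coordinates of $E$.

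First, I would fix the correspondence $f \leftrightarrow \tilde f := f\circ \iota$ and observe that, because $\iota$ is a homeomorphism from $\tilde E$ equipped with $\tilde\tau$ onto $(E,\tau)$ away from $\Gamma$, the condition ``$t\mapsto \tilde f(\tilde\varphi_\ell(t,x,v))$ is $\tilde\tau$-absolutely continuous on $[0,t_\Gamma)$'' translates into ``$t\mapsto f(\varphi(t,x,v))$ is $\tau$-absolutely continuous on $[0,\infty)$,'' provided the one-sided limits built into the definition of $\mathcal D(\mathcal A)$,
\begin{equation*}
\lim_{t\downarrow 0} f(x[i\colon 0^++t],v) = f(x[i\colon 0^+],v), \qquad \lim_{t\downarrow 0} f(x[i\colon 0^--t],v) = f(x[i\colon 0^-],v),
\end{equation*}
hold. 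The point of these one-sided limits is precisely to absorb Davis's boundary identification requirement on $\Gamma$: at the boundary $\Gamma_\ell$, Davis forces $\tilde f(\ell, x,v) = \tilde f(k(x,v),x,v)$, and under $\iota$ this collapses to the continuity of $f$ at the endpoints $0^+$ (resp. $0^-$) along any trajectory reaching them from the interior of $[0^+,\infty)$ (resp. $(-\infty,0^-]$). Hence $f\in\mathcal D(\mathcal A)$ if and only if $\tilde f\in\mathcal D(\tilde{\mathcal A})$.

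Second, I would compute $\tilde{\mathcal A}\tilde f$ at a point $(\ell,x,v)\in\tilde E$ and check that it equals $\mathcal A f(\iota(\ell,x,v))$ by splitting coordinates into the active and frozen groups. For indices $i$ with $|\ell_i|\ne\atzero$ (active) the vector field $\mathfrak X_\ell$ contributes $v_i\partial_{x_i}f$ and the random-jump part contributes the reflection term $\lambda_i(x,v)(f(x,v[i\colon -v_i])-f(x,v))$, matching the second branch of $\mathcal A_i$. For indices $i$ with $|\ell_i|=\atzero$ (frozen) the flow does not move the $i$th coordinate and the only random transition available is the thawing event with rate $\kappa_i$ sending $\ell_i=\zerofromabove$ to $\aboveaway$ or $\zerofrombelow$ to $\belowaway$. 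Under $\iota$ this is exactly the transfer $T_i$, and since the thawing rate in the sticky Zig-Zag is $\kappa_i|v_i|=\kappa_i a_i$, we recover the first branch of $\mathcal A_i$. Summing over $i$ gives $\tilde{\mathcal A}\tilde f = (\mathcal A f)\circ\iota$.

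The main obstacle, and the step worth writing out carefully, is keeping the bookkeeping between Davis's deterministic jumps on $\Gamma$ and the sticky-boundary behaviour clean: the deterministic jump at $\Gamma$ never appears explicitly in $\mathcal A$, because it is rendered invisible on $E$ by identifying the endpoints through $\tau$ and absorbing the identification into the continuity requirement in $\mathcal D(\mathcal A)$; only the unfreezing transitions $T_i$ survive as genuine jumps. Once this is verified, Proposition~\ref{prop: ext generator restricted resable standard zigzag generator} ensures that on the subset $D\subset \mathcal D(\mathcal A)$ the operator $\mathcal A$ agrees with the infinitesimal generator $\mathcal L$ of the ordinary Zig-Zag, completing the identification and proving the proposition.
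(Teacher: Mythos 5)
Your overall strategy is the same as the paper's: obtain the extended generator on $E$ by pushing Davis's result for the standard PDMP $(\tilde Z_t)$ on $\tilde E$ through the bijection $\iota$, and the coordinate-by-coordinate translation of the domain and of the operator formula (active vs.\ frozen coordinates; absorbing the $\Gamma$-boundary identification into the one-sided continuity conditions and the topology $\tau$) is done correctly and matches the preparatory work in Appendix~\ref{app: the extended generator of sticky zig zag}.

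The closing step, however, is a genuine gap. Showing that $\tilde f\in\cD(\tilde\cA)\Leftrightarrow f\in\cD(\cA)$ and that $\tilde\cA\tilde f = (\cA f)\circ\iota$ pins down what the operator \emph{would be} under the transfer, but it does not by itself establish that $\cA$ with domain $\cD(\cA)$ \emph{is} the extended generator of the process $Z_t=\iota(\tilde Z_t)$. The extended generator is defined through the martingale problem: $\cA$ with domain $\cD(\cA)$ is the extended generator of $Z_t$ precisely when, for every $f\in\cD(\cA)$,
\[
f(Z_t)-f(Z_0)-\int_0^t \cA f(Z_s)\,\dd s
\]
is a local martingale. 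Davis's Theorem~26.14 gives the corresponding statement for $\tilde\cA$, $\cD(\tilde\cA)$ and $\tilde Z_t$; what remains is to note that, with $\tilde f=f\circ\iota$ and $\tilde\cA\tilde f=(\cA f)\circ\iota$, the process above is identically $\tilde f(\tilde Z_t)-\tilde f(\tilde Z_0)-\int_0^t\tilde\cA\tilde f(\tilde Z_s)\,\dd s$, and the local martingale property transfers because $\iota$ is a measurable bijection so the two natural filtrations coincide. That one observation is the entire content of the paper's proof of this proposition, and it is missing from yours. Moreover, invoking Proposition~\ref{prop: ext generator restricted resable standard zigzag generator} to ``complete the identification'' is the wrong move: that proposition only compares the restriction of $\cA$ to the smaller class $D$ with the ordinary Zig-Zag generator, and in the paper it is used later (for the stationarity argument via a core), not to identify the extended generator.
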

\begin{proof}
    This is to verify that if $f \in \cD(\tilde \cA)$ and $\tilde \cA$ solve the martingale problem, i.e are such that 
     \[
      f(\ell(t), x(t), v(t)) - f(\ell, x, v) + \int_0^t \cA f(\ell(s), x(s), v(s) \dd s, \quad \forall (\ell,x,v)\in \tilde E 
      \]
    is a local martingale (\cite[Section 24]{Davis1993}) on $\tilde E$, then $f \circ \iota \colon f \in \cD(\tilde \cA)$ and $\cA$ solve the martingale problem on $E$ (for any local martingale $Z_t$ on $\tilde E$, $\iota(Z_t)$ is a local martingale on $E$).
\end{proof}
By the Feller property, the extended generator is an extension of the generator defined as 
\[
\mathcal{L}f(x,v) :=  \lim_{t\downarrow 0}\frac{\E[f(X_t,V_t) \mid X_0 = x, V_0 = v] - f(x, v)}{t}
\]
for a sufficient regular class of functions $f$ for which this limit exists uniformly in $x$ (see \cite[Section 3]{Liggett2010}, for more details).  Then, $D = \{f \in \cD(\cA)\colon f \in C_b^1, \,\cA f \in C_b(E)\}$ is a core for $\cA$ (as in \cite[Definition 3.31]{Liggett2010}). Let $\cL$ be the restriction of $\cA$ on $D$. By \textcite[Theorem 3.37]{Liggett2010}, $\mu$ is a stationary measure if, for all $f \in D$: 
\[
\int \cL f \dd \mu  = 0.
\]
\subsection{Remaining part of the proof}\label{app: remaining proofs of section 2}
\textbf{Invariant measure of the Sticky Zig-Zag process:}
We check here that the sticky $d$-dimensional Zig-Zag process as presented in Section~\ref{sec: sticky PDMP samplers>Sticky Zig-Zag sampler} taking values in $E$ with discrete velocities in $\mathcal{V} = \{v \colon |v_i| = a_i, \forall i \in \{1,2,\dots, d\}\}$ and with extended generator $\cA$ is such that 
\[
\int \Ge f(x, v)  \mu( \dd x, \dd v) = 0
\]
for all $f \in D = \{f \in C^1_c( E), \cA f \in C_b( E)\}$. Here, $\cL$ is the extended generator $\cA$ restricted to $D$ (See Poposition~\eqref{prop: ext generator restricted resable standard zigzag generator}).
For any-1 $f \in D$,
define $\lambda^+_i := \lambda_i(x, v[i\colon, a_i]), \,\lambda^-_i := \lambda_i(x, v[i\colon, -a_i]),\, f^+_i := f(x, v[i\colon a_i]),\, f^-_i := f(x, v[i\colon -a_i]),f^+_i(y) := f(x[i\colon y], v[i\colon a_i]),\, f^-_i(y) := f(x[i\colon y], v[i\colon -a_i]), .$ Also write the measure $\rho(\dd x_i, v_i) := \dd x_i + \frac1\kappa\left( \1_{v_i<0}\delta_0^+(\dd x_i) + \1_{v_i>0}\delta_0^-(\dd x_i)\right)$. We see that 

\begin{align*}
 \int \Ge_i &f \dd \mu =  \\
 &\sum_{v \in \mathcal{V}^{-i}} \left(\int_{\RR^{d-1}} \left(\int^{\infty}_{0^+} + \int_{-\infty}^{0^-}\right)\left( a_i \partial_{x_i} f^+_i + \lambda_i^+(f_i^- -  f_i^+)\right)\exp(-\Psi(x))\dd x_i  \prod_{j\ne i}\rho(\dd x_j, v_j )\right)\\ 
&+ \sum_{v \in \mathcal{V}^{-i}} \left(\int_{\RR^{d-1}} \left(\int^{\infty}_{0^+} + \int_{-\infty}^{0^-}\right)\left( -a_i \partial_{x_i} f^-_i + \lambda_i^-(f_i^+ - f_i^-)\right)\exp(-\Psi(x))\dd x_i  \prod_{j\ne i}\rho(\dd x_j, v_j ) \right)\\ 
&+ \sum_{v \in \mathcal{V}^{-i}}\left( \int_{\RR^{d-1}} a_i \left(f_i^+(0^+) - f_i^+(0^-)\right)\exp(-\Psi(x[i\colon 0])) \prod_{j\ne i}\rho(\dd x_j, v_j ) \right) \\
&+\sum_{v \in \mathcal{V}^{-i}}\left( \int_{\RR^{d-1}} -a_i \left(f_i^- (0^-) - f_i^-(0^+)\right)\exp(-\Psi(x[i\colon 0])) \prod_{j\ne i}\rho(\dd x_j, v_j )\right).
\end{align*}
By integration by parts  we have that $\left(\int^{\infty}_{0^+} + \int_{-\infty}^{0^-}\right)\left( \partial_{x_i} f(x, v)\exp(-\Psi(x)) \right)\dd x_i$ is equal to 
\[
  \left(f(x[i\colon 0^-],v) - f(x[i\colon 0^+],v)\right)\exp(-\Psi (x[i\colon 0])) + \left(\int^{\infty}_{0^+} + \int_{-\infty}^{0^-}\right)\left( \partial_i \Psi(x) f(x, v) \exp(- \Psi(x))\right) \dd x_i 
\]
so that $\int \Ge_i f \dd \mu$ is equal to
\begin{align*} 
\sum_{v \in \mathcal{V}^{-i}} &\left(\int_{\RR^{d-1}} \left(\int^{\infty}_{0^+} + \int_{-\infty}^{0^-}\right)\left( a_i \partial_{x_i} \Psi(x) + \lambda_i^+ - \lambda_i^- \right)f_i^-\exp(-\Psi(x))\dd x_i  \prod_{j\ne i}\rho(\dd x_j, v_j ) \right)\\ 
&+\sum_{v \in \mathcal{V}^{-i}} \left(\int_{\RR^{d-1}} \left(\int^{\infty}_{0^+} + \int_{-\infty}^{0^-}\right)\left( -a_i \partial_{x_i} \Psi(x) + \lambda_i^- - \lambda_i^+ \right)f_i^+\exp(-\Psi(x))\dd x_i  \prod_{j\ne i}\rho(\dd x_j, v_j ) \right)\\ 
&+ \sum_{v \in \mathcal{V}^{-i}}\left( \int_{\RR^{d-1}} a_i \left(f_i^+(0^+) - f_i^+(0^-)\right)\exp(-\Psi(x[i\colon 0])) \prod_{j\ne i}\rho(\dd x_j, v_j )\right)\\
&+\sum_{v \in \mathcal{V}^{-i}}\left( \int_{\RR^{d-1}} -a_i \left(f_i^-(0^-) - f_i^-(0^+) \right)\exp(-\Psi(x[i\colon 0])) \prod_{j\ne i}\rho(\dd x_j, v_j )\right)\\
&+ \sum_{v \in \mathcal{V}^{-i}}\left( \int_{\RR^{d-1}} a_i \left(f_i^+(0^-) - f_i^+(0^+)\right)\exp(-\Psi(x[i\colon 0])) \prod_{j\ne i}\rho(\dd x_j, v_j )\right)\\
&+\sum_{v \in \mathcal{V}^{-i}}\left( \int_{\RR^{d-1}} -a_i \left(f_i^-(0^+) - f_i^-(0^-)\right)\exp(-\Psi(x[i\colon 0])) \prod_{j\ne i}\rho(\dd x_j, v_j ) \right) = 0,
\end{align*}
where we used that $-v_i\partial_i\Psi(x) + \lambda_i(x, v) - \lambda_i(x, F_i(v)) = 0, \, \forall (x, v) \in E$.

\subsection{Ergodicity of the sticky Zig-Zag process}\label{app: ergodicity of the szz}
In this section, we prove that the sticky Zig-Zag is ergodic. As the argument partially relies on the ergodicity results of the ordinary Zig-Zag sampler (\cite{bierkens2019ergodicity}), we start by making similar assumptions on $\Psi$ as appearing in that paper.
\begin{assumption}\label{ass: Assumption ergodicity}
    \emph{(Assumptions of \cite[Theorem~1)]{bierkens2019ergodicity}} Let $\Psi$ satisfy the following conditions: \begin{itemize}
        \item $\Psi \in \cC^3(\RR^d)$,
        \item $\Psi$ has a non degenerate local-minimum, 
        \item For some constants $c > d, \, c' \in \RR$, $\Psi(x) > c\ln(|x|) - c'$, for all $x \in \RR^d$.
        % \item
        % \[
        %     \lim_{|x| \to \infty} \frac{\max(1, \|\text{Hess}\Psi(x)\|)}{|\nabla \Psi(x)|} = 0  \text{}
        % \]
    \end{itemize}
\end{assumption}
For every set $\alpha \subset \{1,2,\dots, d\}$, we define the sub-space $\cM_\alpha = \{x \in \RR^d \colon x_i = 0, \, i \notin \alpha\}$ and define the $|\alpha|$-dimensional ordinary Zig-Zag process $(Z_t^{(\alpha)})_{t\ge0}$, with $|\alpha| \le d$, on the sub-space $\cM_\alpha\times \{-1,+1\}^{\alpha}$ and with reflection rates $\lambda_i(x,v) = \max(0, v_i \partial_i \Psi(x))$, $x \in \mathcal M_{\alpha}$,  $i \in \alpha$.
\begin{proposition}\label{prop: ergodicity of subzigzags}Suppose $\Psi$ satisfies Assumption~\ref{ass: Assumption ergodicity}. Then for every set $\alpha \subset \{1,2,\dots, d\}$, $(Z^{(\alpha)}_t)_{t\ge 0}$ is ergodic with unique invariant measure with density $\left. \exp(-\Psi(x)) \right|_{\mathcal M_{\alpha}}$ relative to $\text{Leb}(\mathcal M_{\alpha})(\dd x) \otimes \text{Uniform}(\{-1,+1\}^{\alpha})(\dd v)$. Furthermore, some skeleton chain of each process is irreducible.
\end{proposition}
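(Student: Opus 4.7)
The idea is to observe that for each fixed $\alpha \subset \{1,\dots,d\}$, the process $(Z^{(\alpha)}_t)_{t \ge 0}$ is literally an ordinary Zig-Zag process on $\mathbb{R}^{|\alpha|}$ (after identifying $\mathcal M_\alpha$ with $\mathbb{R}^{|\alpha|}$ in the obvious way) whose potential is the restriction
\[
\Psi_\alpha(y) := \Psi(\iota_\alpha(y)), \qquad y \in \mathbb{R}^{|\alpha|},
\]
where $\iota_\alpha \colon \mathbb{R}^{|\alpha|} \to \mathbb{R}^d$ embeds $y$ into $\mathcal M_\alpha$ by setting the coordinates indexed by $\alpha^c$ to zero. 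Indeed the reflection rate $\lambda_i(x,v) = (v_i \partial_i \Psi(x))^+$ for $i \in \alpha$ and $x \in \mathcal M_\alpha$ equals $(v_i \partial_{y_i} \Psi_\alpha(y))^+$ where $y = [x]_\alpha$, and the deterministic flow preserves $\mathcal M_\alpha$ because the coordinates in $\alpha^c$ carry neither position nor velocity. The proposition is then an immediate consequence of Theorem~1 of \textcite{bierkens2019ergodicity} applied to $\Psi_\alpha$, provided $\Psi_\alpha$ satisfies Assumption~\ref{ass: Assumption ergodicity}.

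\textbf{Verification of the assumptions for $\Psi_\alpha$.} The regularity $\Psi_\alpha \in C^3(\mathbb{R}^{|\alpha|})$ is immediate from $\Psi \in C^3(\mathbb{R}^d)$ because $\iota_\alpha$ is linear. For the growth condition, for every $y \in \mathbb{R}^{|\alpha|}$ one has $|\iota_\alpha(y)| = |y|$, so
\[
\Psi_\alpha(y) = \Psi(\iota_\alpha(y)) \ge c \log(|y|) - c'
\]
with the same constants $c > d \ge |\alpha|$ and $c'$ as in Assumption~\ref{ass: Assumption ergodicity}. For the non-degenerate local minimum: since $\Psi_\alpha$ is $C^3$ and coercive in the sense above, it attains its infimum on $\mathbb{R}^{|\alpha|}$ at some $y_\alpha^\star$. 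A standard genericity/density argument (e.g.\ by a small $C^2$-perturbation compatible with the growth bound, or by invoking Morse-theoretic regularity which the non-degenerate local minimum of $\Psi$ itself ensures generically on the full space) shows that one may take $y_\alpha^\star$ to be a non-degenerate local minimum of $\Psi_\alpha$. Once these three items hold, \textcite[Theorem~1]{bierkens2019ergodicity} gives both the ergodicity of $Z^{(\alpha)}$ with unique invariant measure proportional to $\exp(-\Psi_\alpha(y))\, \dd y \otimes \mathrm{Uniform}(\{-1,+1\}^{|\alpha|})$, and the irreducibility of some skeleton chain. Pulling this back through $\iota_\alpha$ yields the invariant law displayed in the proposition.

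\textbf{Main obstacle.} The only nontrivial step is the non-degenerate local minimum for $\Psi_\alpha$: the assumption on $\Psi$ does not automatically transfer to arbitrary coordinate restrictions, since the minimiser of $\Psi$ need not lie in $\mathcal M_\alpha$ and the Hessian of $\Psi_\alpha$ at its minimiser is a principal submatrix of the Hessian of $\Psi$, which can be singular even when the full Hessian is not. I would handle this either by (i) strengthening Assumption~\ref{ass: Assumption ergodicity} so that every restriction $\Psi_\alpha$ inherits a non-degenerate local minimum (which is harmless in all examples considered in the paper), or (ii) observing, as in the proof of \textcite{bierkens2019ergodicity}, that the role of the non-degenerate local minimum is merely to produce a point at which the skeleton chain satisfies a local minorisation, and that coercivity plus $C^3$-regularity of $\Psi_\alpha$ suffice for this via a Morse-type perturbation argument. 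Either way, the remaining conclusions — Feller regularity, invariance of the claimed measure (already derived for $\alpha = \{1,\dots,d\}$ in Appendix~\ref{app: remaining proofs of section 2} and inherited coordinate-wise), and irreducibility of a skeleton chain — then follow directly from the corresponding statements in \textcite{bierkens2019ergodicity}.
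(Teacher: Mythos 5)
Your proposal takes the same route as the paper: identify $(Z^{(\alpha)}_t)$ with an ordinary Zig-Zag process on $\RR^{|\alpha|}$ driven by the potential $\Psi_\alpha = \Psi\circ\iota_\alpha$, verify Assumption~\ref{ass: Assumption ergodicity} for $\Psi_\alpha$, and invoke \textcite[Theorems~1 and 5]{bierkens2019ergodicity}. The paper's own proof is a one-liner: it simply asserts that if Assumption~\ref{ass: Assumption ergodicity} holds on $\RR^d$ then it holds on each $\cM_\alpha$, and concludes. The regularity and the growth condition clearly do transfer (and your verification of the growth bound, using $|\iota_\alpha(y)|=|y|$ and $c>d\ge|\alpha|$, is exactly the right observation).

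Where you add value is in flagging — correctly — that the \emph{non-degenerate local minimum} condition does not transfer automatically. The paper glosses over this. The global minimiser of $\Psi$ typically lies off $\cM_\alpha$, so the restricted potential $\Psi_\alpha$ may attain its minimum at a point where the Hessian of $\Psi$ is not positive definite, and hence the restricted Hessian (the $\alpha$-block of the Hessian of $\Psi$ at that point) may be degenerate. One small imprecision in your wording: you say the issue is that ``a principal submatrix of the Hessian of $\Psi$ ... can be singular even when the full Hessian is not''; that is not the right mechanism, since any principal submatrix of a positive-definite matrix is positive definite. The real issue is that the Hessian of $\Psi$ at the minimiser of $\Psi_\alpha$ is evaluated at a different point than the minimiser of $\Psi$, and at that other point the full Hessian (hence its $\alpha$-block) can fail to be positive definite. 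Neither your proposal nor the paper actually closes this gap: the ``Morse-type perturbation argument'' you gesture at would modify the law, and strengthening the assumption would change the statement of the proposition. It is fair, though, to note this is a genuine lacuna common to both proofs, not something you were supposed to resolve unaided.
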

\begin{proof} If Assumption~\ref{ass: Assumption ergodicity} holds on $\RR^d$, then it holds on any the sub-space $\cM_\alpha, \, \alpha \subset \{1,2,\dots, d\}$, for functions $x \mapsto \Psi(x)$, $x \in \mathcal M_{\alpha}$. Proposition~\ref{prop: ergodicity of subzigzags} follows from the ergodic theorem of ordinary Zig-Zag processes (\cite[Theorem~1 and Theorem~5]{bierkens2019ergodicity}).
\end{proof}
Next, we show that, for any initial position $(x,v) \in E$, the sticky Zig-Zag process is Harris recurrent to the set where all coordinates are stuck at 0.
Denote the measure $\overline \delta_{0}(\dd x, \dd v) = \bigotimes_{i=1}^d (\delta_{0^+, -1}(\dd x_i, \dd v_i) + \delta_{0^-, +1}(\dd x_i, \dd v_i))$, the set $\fS = \cap_{i=1}^d \fF_i$ and the first hitting time $\tau_{A} = \inf \{t>0 \colon Z_t \in A\}$, where $Z_t = (X_t, V_t)$ is the sticky Zig-Zag process. 
\begin{proposition}\label{prop: harris recurrent}\emph{(Harris recurrence)}
Suppose $\Psi$ satisfies Assumption~\ref{ass: Assumption ergodicity}. Then for any initial state $Z_0 = z_0 \in E$, we have that ${\PP(\tau_\fS < \infty) = 1}$. 
\end{proposition}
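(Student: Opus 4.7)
The plan is to establish Harris recurrence by induction on the number of frozen coordinates, using the ergodicity of the ordinary lower-dimensional Zig-Zag processes on $\mathcal{M}_\alpha$ provided by Proposition~\ref{prop: ergodicity of subzigzags} together with the fact (Proposition~\ref{prop: ext generator restricted resable standard zigzag generator}) that on the active coordinates the generator of the sticky process coincides with the generator of the ordinary Zig-Zag. Throughout, I freely use the strong Markov property of the sticky Zig-Zag (Theorem~\ref{thm: Sticky Zig-Zag main result}).

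As a base case, suppose $\alpha(X_0,V_0)=\{1,\dots,d\}$, i.e.\ no coordinate is frozen. Until the first freezing event, the sticky Zig-Zag is indistinguishable in law from the ordinary $d$-dimensional Zig-Zag with potential $\Psi$. Under Assumption~\ref{ass: Assumption ergodicity}, Proposition~\ref{prop: ergodicity of subzigzags} gives positive (Harris) recurrence of that process with an invariant probability measure whose spatial marginal is absolutely continuous on $\RR^d$. Hence every neighbourhood of the origin is visited infinitely often a.s. Since each coordinate trajectory $t\mapsto X_i(t)$ is piecewise linear and continuous, visiting neighbourhoods of $0$ together with positive recurrence forces $t\mapsto X_i(t)$ to change sign, and by the intermediate value theorem it hits $\{x_i=0\}$ in finite time a.s. Therefore at least one coordinate freezes in finite time.

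For the inductive step, I would fix $m<d$ and assume the result for all starting states with at most $m$ active coordinates. From a state with $|\alpha|=m+1$ active coordinates, I set up a ``race'' between freezing and unfreezing. Between events, the active coordinates evolve exactly as the ordinary Zig-Zag process $Z^{(\alpha)}$ on $\mathcal{M}_\alpha$; by the base-case argument applied to $Z^{(\alpha)}$, the stopping time $\sigma=\inf\{t>0:\exists i\in\alpha,\ X_i(t)=0\}$ is a.s.\ finite. Independently, the first unfreezing event occurs at an exponential time $\rho$ with rate $\sum_{j\in\alpha^c}\kappa_j|v_j|\le K:=\sum_j \kappa_j\max_{u\in\cV}|u_j|$. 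If $\sigma<\rho$, the number of frozen coordinates strictly increases and the strong Markov property together with the induction hypothesis gives $\tau_\fS<\infty$ a.s.\ from the new state. If $\rho<\sigma$, we fall back to $m-1$ frozen coordinates and re-apply the argument, obtaining a dominating birth--death chain on $\{0,1,\dots,d\}$ with absorbing state $d$, which reaches $d$ in finite time almost surely provided birth probabilities are bounded below.

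The main obstacle is precisely this uniform lower bound on $\PP(\sigma<\rho)$: if the active coordinates start far from the origin, $\sigma$ can be large and the bound $\PP(\sigma<\rho)\ge \E[e^{-K\sigma}]$ becomes delicate. The resolution I would pursue is a Foster--Lyapunov argument for $Z^{(\alpha)}$, which is available under Assumption~\ref{ass: Assumption ergodicity} since the tail condition $\Psi(x)\ge c\ln|x|-c'$ with $c>d$ implies exponential ergodicity towards a compact ``small'' set for the ordinary Zig-Zag (this is essentially how the ergodicity in Proposition~\ref{prop: ergodicity of subzigzags} is obtained). Composing the first return to this compact set with a uniform lower bound on $\PP(\sigma<\rho)$ inside the compact set, one gets a global positive lower bound on the freezing probability. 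Inserting this into the birth--death comparison yields $\PP(\tau_\fS<\infty)=1$ from any state in $E$, completing the induction.
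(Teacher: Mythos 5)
Your overall approach matches the paper's: iterate across sub-models, use Proposition~\ref{prop: ergodicity of subzigzags} (ergodicity of the ordinary sub-Zig-Zags) to get that the first time a free coordinate hits zero is a.s.\ finite, then set up a race against an independent exponential unfreezing clock and iterate via the strong Markov property. The ingredients you invoke---the intermediate value theorem to convert sign changes into hits of $\{x_i=0\}$, the observation that between events the sticky process on the active coordinates is exactly an ordinary lower-dimensional Zig-Zag, the $\pm1$ moves on the count of frozen coordinates---are precisely what the paper uses.

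Where you go further is in flagging the uniformity issue: the positive probability $\PP(T_2<T_1)>0$ of ``freezing before unfreezing'' at each excursion is state-dependent, so it is not immediate that the birth--death comparison chain is absorbed a.s. The paper's proof is terse at exactly this point; after asserting $T_2<\infty$ a.s.\ and $\PP(T_2<T_1)>0$ it concludes with ``by using the Markov structure of the process and iterating the same argument,'' with no explanation of why the iterated excursions eventually succeed. Your Foster--Lyapunov patch is a sensible direction for closing this gap, but two caveats. First, the tail condition $\Psi(x)>c\ln|x|-c'$ with $c>d$ gives only polynomial decay of $\exp(-\Psi)$ and hence positive Harris recurrence of the ordinary Zig-Zag, not geometric (exponential) ergodicity; the phrase ``exponential ergodicity'' overstates what Assumption~\ref{ass: Assumption ergodicity} provides, although your plan only needs the weaker conclusion. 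Second, composing ``return to a compact small set'' with the freeze/unfreeze race is itself delicate: an unfreeze event can interrupt the excursion before the compact set is reached, and so the uniform lower bound on $\PP(\sigma<\rho)$ obtained on the compact set does not propagate to every cycle of the comparison chain without a more careful accounting of the excursion structure. These are real issues that neither your sketch nor the paper's own proof fully resolves, but you have correctly identified the missing technical step and the tools one would reach for.
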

\begin{proof}
Let $x_0 \in \cM_\alpha$ for an arbitrary $\alpha \subset \{1,2,\dots,d\}$. 
Denote the random time of the first stuck coordinate $x_i, \, i \in \alpha^c$ leaving zero by $T_1 \sim \text{Exp}(\sum_{j \in \alpha^c} \kappa_j)>0$. 
% Assume that $|x_i| \le R$ for all $i \in \alpha^c$ and some $R>0$.
Denote the random time of the first `free' coordinate $x_i, \, i \in \alpha$ hitting zero by $T_2$. 

% There is positive probability that $T_1 > 2R$.

% By Proposition~\ref{prop: ergodicity of subzigzags}, $T_2 < R$ with positive probability.
% Clearly, $T_2 > R$ and, if $T_1 < R$, the process will jump at time $T_1$ to a model $\alpha', \, |\alpha'| = |\alpha| + 1$. However, there is a positive probability that $T_1 > R$. 
Notice that $T_1$ is independent of the trajectory on the subspace $\cM_\alpha$.
% Conditionally on the event $2R < T_1$, 
and the sticky Zig-Zag process behaves as an ordinary $|\alpha|$-dimensional Zig-Zag process in the subspace $\mathcal{M}_\alpha$ for time $t \in [0, \min(T_2, T_1)]$. 
By Proposition~\ref{prop: ergodicity of subzigzags}, $T_2$ is finite and $\PP(T_2 < T_1) > 0$.  
By using the Markov structure of the process and iterating the same argument for a sequence of sub-models $\cM_{\alpha_2},\, \cM_{\alpha_3},\, \dots, \cM_{\alpha_{|\alpha|-1}}$, with $|\alpha_j| + 1 =  |\alpha_{j+1}|$, we conclude that $P(\tau_{\fS} < \infty) =1$.

Now, consider a subset $S\subset \fS$, which contains points with some coordinates being either $(0^+, -1)$ or $(0^-, +1)$ (but not both). 
Let $\tau_{\fS}$ be the hitting time to the set $\fS$ of the sticky Zig-Zag  $Z(t)_{t>0}$. Denote by $\beta\subset\{1,2,\dots, d\}$ the set of indices for which the coordinate is stuck on the other copy of zero which is not in $S$. At time $Z(\tau_{\fS})$ the process will stay in the null model for a time $\Delta T \sim\text{Exp}(\sum_{j=1}^d \kappa_j)$.  At time $T + \Delta T$ a coordinate $i \in \beta$ is released with positive probability $\kappa_i / \sum_j \kappa_j$. Conditional on $\Delta T$ and on the event that the coordinate $i$ is released at time $T + \Delta T$, the sticky Zig-Zag behaves as a 1 dimensional ordinary Zig-Zag sampler until time $\tau_{\fS} + \Delta T + \min(\Delta T_1, \Delta T_2)$, where, similarly as before, $\Delta T_1 \sim \text{Exp}(\sum_{j\ne i} \kappa_j)$ (and it is independent from the trajectory of the free coordinate) and  $\Delta T_2$ is the hitting time to $0$ of the coordinate process $Z_i(\tau_{\fS} + \Delta T + t)_{t>0}$. By Proposition~\ref{prop: ergodicity of subzigzags}, $\Delta T_2$ is finite and $\PP(\Delta T_2 < \Delta T_1)>0$. By using the Markovian structure of the process and iterating this argument for all $i \in \beta$ we conclude that $\PP(\tau_{S} < \infty) = 1$.
\end{proof}

By \textcite[Theorem 6.1]{meyn1993stability}, the sticky Zig-Zag sampler is ergodic if it is Harris recurrent with invariant probability $\mu$ and if some skeleton of the chain is irreducible. For the latter condition, any skeleton $Z^{(\Delta)} = (Z(0)), Z(\Delta), Z(2\Delta), \dots)$ (with $\Delta > 0$) is irreducible relative to the measure $\overline \delta_0$ as the process, once it has reached the null model will  stay there for a time $\Delta T \sim\text{Exp}(\sum_{j=1}^d \kappa_j)$ and $\PP(\Delta T>\Delta) > 0$.

\subsection{Recurrence time of the sticky Zig-Zag to 0}\label{app: recurrence time to 0}
The recurrent time to the point $\bm 0 = (0,0,\dots,0)$ is derived with a simple heuristic argument. We assume the sticky Zig-Zag to have unit velocity components and to be ergodic with stationary measure $\mu$.  Clearly,  the expected time to leave $\bm 0$ is $(\kappa d)^{-1}$ since each coordinate leaves 0 according to an independent exponential random variable with parameter $\kappa$. Denote by $\tau_0$ the recurrent time to 0, i.e. the random time spent outside $\bm 0$ before returning to $\bm 0$. By ergodicity, the expectation of $\tau_0$ must satisfy the following equation
\[
\frac{(\kappa d)^{-1}}{\mu(\{\bm 0 \})} = \frac{\EE[\tau_0]}{1-\mu(\{\bm 0\})}.
\]

\clearpage 

\section{Other sticky PDMP samplers}\label{app: other sticky samplers} 
Here we extend the results presented in Section~\ref{sec: sticky PDMP samplers>Sticky Zig-Zag sampler} for two other Sticky PDMP samplers:  the sticky version of the Bouncy particle sampler (\cite{2015arXiv151002451B}) and the Boomerang sampler (\cite{bierkens2020boomerang}),  the latter having Hamiltonian deterministic dynamics invariant to a prescribed Gaussian measure.
 To visually assess the difference in sample paths, we show in Figure~\ref{fig:stickylook}  a typical realization of the Sticky Zig-Zag sampler, Sticky Bouncy particle sampler and Sticky Boomerang sampler.
 
\begin{figure}[ht!]
    \centering
\includegraphics[width=0.8\linewidth]{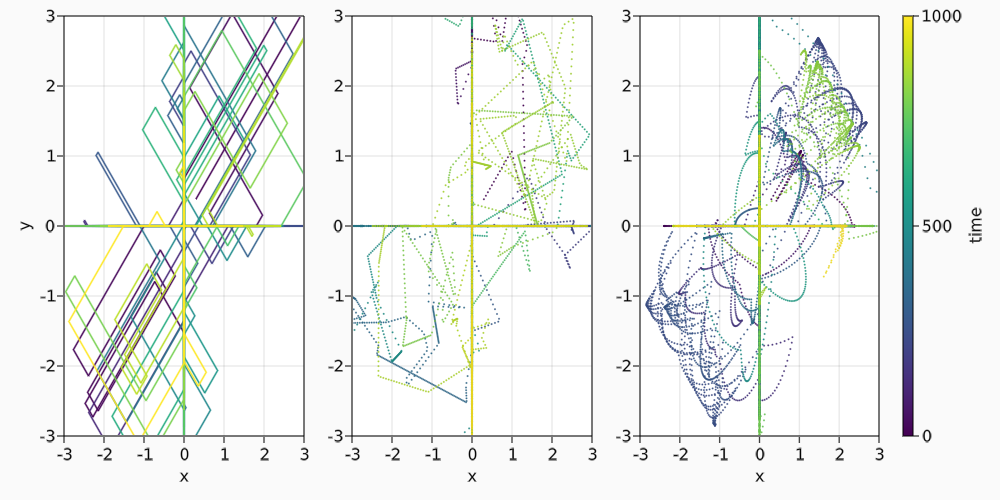}

    \caption{ $(x\text{-}y)$ phase portraits, of 3 different sticky PDMP samplers targeting the measure of Equation~\eqref{eq: target measure} with $\exp(-\Psi)$ being a mixture of two bivariate Gaussian densities centered respectively in the first and the third quadrant of the $x\text{-}y$ axes. Left: Sticky Zig-Zag sampler. Middle:  sticky Bouncy Particle sampler with refreshment rate equal to 0.1. Right:  sticky Boomerang sampler with refreshment rate equal to 0.1. For all the samplers, $\kappa_1=\kappa_2=0.1$ and the final clock was set to $T = 10^3$. As the sticky Bouncy Particle sampler and the Boomerang sampler don't have constant speed, we marked their continuous trajectories in the phase plots with dots. The distance of dots indicates the speed of traversal.} 
    \label{fig:stickylook}
\end{figure}

\subsection{Sticky Bouncy Particle sampler}\label{sec: Sticky Bouncy Particle sampler}
The inner product and the norm operator in the subspace determined by $A$ is denoted by $\langle x, v \rangle_A := \sum_{i \in A} x_i v_i$ and $\| x\|_A := \sum_{i \in A} x_i^2$ with the convention that $\langle \cdot,\cdot \rangle_{\{1,2,\dots,d\}} = \langle \cdot,\cdot \rangle$ and $\|\cdot\|_{\{1,2,\dots,d\}} = \|\cdot\|$. The deterministic dynamics of the sticky Bouncy Particle process are identical to that of  the Sticky Zig-Zag process, having piecewise constant velocity. For each $i \in \{1,2,\dots,d\}$, when the process hits a state $(x, v) \in \mathfrak F_i$, the $i$th coordinate $(x_i,v_i)$ sticks for an exponentially distributed time with  rate equal to $\kappa_i|v_i|$ while the other coordinates continue their flow until a reflection or refreshment event happens. A reflection occurs with an inhomogeneous rate equal to 
\[
 \lambda(x, v) = \max( 0, \langle v, \nabla \Psi(x) \rangle_{\alpha}), 
\]
where $\alpha$ is as defined in Equation~\eqref{eq:alpha}.
At reflection time the process jumps with a contour reflection of the active velocities   with respect to $\nabla \Psi$:
\[
(R_\Psi (x, v)v)_i = \begin{cases}
v_i & i \notin \alpha(x,v)\\
v_i - 2\frac{\langle \nabla \Psi(x), v\rangle_{\alpha}}{ \|\nabla \Psi(x)) \|^2_{\alpha}} \partial_i \Psi(x) & \text{else}.
\end{cases} 
\]
Similarly to the ordinary Bouncy Particle sampler, the sticky Bouncy Particle sampler refreshes its velocity component at exponentially distributed times with homogeneous rate equal to $\lambda_{\rm ref}$. This is necessary for avoiding pathological behaviour of the process (see \cite{2015arXiv151002451B}).
At  refreshment times, each coordinate renews its velocity component independently according to the following refreshment rule
\begin{equation}\label{eq: refresh rule bps}
v'_i \sim \begin{cases}Z_i & (x,v) \notin \mathfrak F_i, \\
                    \sign(v_i)|Z_i| & (x,v) \in \mathfrak F_i, \end{cases} 
\end{equation}
where $Z_i \stackrel{i.i.d.}{\sim} \mathcal{N}(0,1)$, independently of all random quantities. The refreshment rule coincides with the refreshment rule given in the ordinary Bouncy Particle sampler algorithm \textcite{2015arXiv151002451B} for the coordinates whose index is in the set $\alpha$. For the components which are stuck at $0$, the refreshment rule renews the velocity without changing its sign. This prevents the possibility for the $i$th stuck component to jump out the set $\mathfrak F_i$ (changing its label from frozen to active at refreshment time).   

The extended generator of the sticky Bouncy Particle sampler is given by
\begin{equation*}
    \cA f(x, v) =  \sum_{i=1}^d\mathcal{G}_i f(x, v) + \lambda(x,v)(f(x, R_\Psi (x,v) v) - f(x, v)) + \lambda_{\rm ref} \int \left(f(x, w) - f(x, v)\right)\varrho_{x,v}(w)\dd w
\end{equation*}
and
\begin{equation*}
    \mathcal{G}_i f(x, v) = \begin{cases}\!% alignment adjustment
 |v_i| \kappa_i \left(f(T_i(x, v)) - f(x, v)\right) & (x,v) \in \mathfrak F_i\\
 v_i \partial_{x_i} f(x, v)  & \text{else,}
\end{cases}
\end{equation*}
where
\[
\varrho_{x,v}(w) = \rho(w_{\alpha(x,v)})\prod_{i\in \alpha(x,v)^c}2\rho(w_i) \1
_{v_iw_i >0},
\]
for sufficient regular functions  $f\colon E\to \RR$ in the extended domain of the generator. Here, $\rho(y)$ is the standard normal density function evaluated at $y$.
\begin{proposition}\label{prop: sticky bps invariant measure}
  The $d$-dimensional sticky Bouncy Particle sampler is  invariant to the measure
    \begin{equation}
        \label{eq: target bps}
        \mu(\dd x, \dd v) =  \frac1C  \rho(v) \dd v \exp(-\Psi(x))\prod_{i = 1}^d\left(\dd x_i + \frac{1}{\kappa_i}\left( \1_{v_i>0} \delta_{0^-}(\dd x_i) + \1_{v_i<0} \delta_{0^+}(\dd x_i) \right)\right)
    \end{equation}
    for some normalization constant C.  
\end{proposition}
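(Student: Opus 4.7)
The plan is to mirror the strategy used in Appendix~\ref{app: remaining proofs of section 2} for the sticky Zig-Zag sampler, namely to verify that $\int \cA f\, \dd \mu = 0$ for every $f$ in a suitable core of the extended generator, from which \textcite[Theorem~3.37]{Liggett2010} gives invariance. I would split the generator as $\cA = \cG + \cA_{\rm refl} + \cA_{\rm ref}$, where $\cG = \sum_{i=1}^d \cG_i$ collects the coordinate-wise flow and sticky-jump terms, $\cA_{\rm refl}$ is the reflection operator with rate $\lambda$ and kernel $R_\Psi$, and $\cA_{\rm ref}$ is the refreshment operator with rate $\lambda_{\rm ref}$ and kernel $\varrho_{x,v}$; I would then check $\int \cA_{\rm ref} f\, \dd \mu = 0$ and $\int (\cG + \cA_{\rm refl}) f\, \dd \mu = 0$ separately.

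For the refreshment contribution, I would disintegrate $\mu$ with respect to $x$ and to the $\pm$-configuration of the frozen coordinates. The key observation is that on each stratum the conditional law of $v$ under $\mu$ has a density proportional to $\rho(v_\alpha)\prod_{j\in\alpha^c}\rho(v_j)$, restricted to the half-line of $v_j$ compatible with the Dirac side $0^\pm$; after normalization (the factor $2$ in the half-normal density) this coincides exactly with $\varrho_{x,v}$ as defined in Section~\ref{sec: Sticky Bouncy Particle sampler}. The resulting detailed-balance identity $\varrho_{x,v}(w)\rho(v) = \varrho_{x,w}(v)\rho(w)$ on the admissible sign-sector, combined with Fubini's theorem, makes the refreshment operator integrate to zero against $\mu$.

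For the flow-plus-reflection part, I would again partition the integration according to the active set $\alpha$: on each stratum $\mu$ reduces to $\rho(v)\,\dd v\,\exp(-\Psi(x))\prod_{i\in\alpha}\dd x_i\prod_{j\in\alpha^c}(1/\kappa_j)\delta_{0^\pm}(\dd x_j)$ together with the appropriate sign indicator on $v_j$. Integration by parts in each free coordinate $i\in\alpha$ converts $\int v_i\partial_{x_i}f\exp(-\Psi)\,\dd x_i$ into a volume term $\int f\,v_i\partial_{x_i}\Psi\exp(-\Psi)\,\dd x_i$ plus boundary values at $x_i = 0^\pm$. Summing over $i \in \alpha$, the volume terms produce $\int f\,\langle v,\nabla\Psi(x)\rangle_\alpha\exp(-\Psi(x))\rho(v)\,\dd v\,\dd x_\alpha$, which is exactly cancelled by $\int \cA_{\rm refl} f\,\dd\mu$: this uses the standard bouncy-particle identity $\lambda(x,R_\Psi v)-\lambda(x,v) = -\langle v,\nabla\Psi(x)\rangle_\alpha$ together with the fact that $R_\Psi$ acts as an isometry of the active subspace (leaving $\rho$ invariant) and as the identity on the frozen subspace.

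The remaining task, which I expect to be the main obstacle, is the multi-stratum bookkeeping that pairs the boundary-flux terms surviving the integration by parts with the sticky-jump terms $\cG_j f = |v_j|\kappa_j\bigl(f(T_j(x,v))-f(x,v)\bigr)$ coming from neighbouring strata where coordinate $j$ is frozen. The factor $1/\kappa_j$ sitting in the Dirac part of $\mu$ is precisely what cancels the rate $\kappa_j|v_j|$ in the sticky jump, leaving $|v_j|(f(T_j(x,v))-f(x,v))$; the transfer map $T_j$ sends $(0^-,v_j>0)$ to $(0^+,v_j>0)$ and vice versa, so its image sits in the stratum $\alpha \cup \{j\}$ and its sign structure matches the indicators in $\mu$, making each flux term pair with exactly one sticky-jump term of opposite sign. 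Relative to the sticky Zig-Zag calculation, the accounting is heavier here because the velocity is continuous and the reflection $R_\Psi$ is not coordinate-factorised, but once the reflection/flow cancellation has been carried out stratum by stratum the remaining boundary-sticky cancellation reduces to the same identity verified in Appendix~\ref{app: remaining proofs of section 2}.
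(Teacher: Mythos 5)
Your proposal matches the paper's proof in all essentials: the same split of the generator into sticky/flow, reflection and refreshment parts, the same BPS identities ($\langle\nabla\Psi, R_\Psi v\rangle_\alpha = -\langle\nabla\Psi, v\rangle_\alpha$ and $\|R_\Psi v\| = \|v\|$ so that $\rho$ is preserved), the same $v\leftrightarrow w$ symmetry for the refreshment term, and the same integration-by-parts mechanism in which the $1/\kappa_i$ factor in the Dirac part of $\mu$ cancels the $\kappa_i|v_i|$ sticky rate and the resulting $v_i\Delta_i f$ term absorbs the boundary flux. The only presentational difference is one of order: the paper performs the boundary-vs-sticky cancellation \emph{within} a single coordinate integral against $\dd x_i + \tfrac{1}{\kappa_i}\delta_0(\dd x_i, v_i)$ before expanding $\prod_{j\ne i}\rho(\dd x_j,v_j)$ into a sum over subsets (so there is no inter-stratum matching to do), whereas you propose to stratify first and then pair boundary terms of stratum $A$ with sticky-jump terms of stratum $A\setminus\{i\}$ — the ``heavier bookkeeping'' you flag is precisely what that reordering would cost you, and it is avoidable by following the paper's order.
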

\begin{proof}
The transition kernel $R_\Psi (x)$ satisfies the following properties:
\begin{equation*}\label{eq: bps property 1}
    \langle \nabla \Psi(x), R_\Psi (x,v)v \rangle_{\alpha} = - \langle \nabla \Psi(x), v \rangle_{\alpha}
\end{equation*}
and 
\begin{equation*}\label{eq: bps property 2}
\| R_\Psi(x,v)v\|^2 = \| v\|^2_{\alpha^c} + \| R_\Psi(x,v)v\|^2_{\alpha} = \| v\|^2_{\alpha^c } + \|v\|^2_{\alpha} = \| v\|^2
\end{equation*}
so, $\rho(R_\Psi^A(x)v) = \rho(v)$ ($\rho(x)$ here denotes the standard Gaussian density evaluated at $x$). Furthermore $\lambda$ satisfies
\begin{equation}
    \label{eq: property 1 bps}
-\langle v, \nabla \Psi(x) \rangle_{\alpha} + \lambda(x,v) - \lambda(x, R_\Psi(x, v)v) = 0 , \quad \forall (x,v) \in E.
\end{equation}
Let us check that the process satisfies  $\int \Ge f(x,v) \mu(\dd x, \dd v) = 0$, for all $f \in D = \{f \in C^1_c( E), \cA f \in C_b( E)\}$ where $\cL$ is the extended generator $\cA$ restricted to $D$.

First let us fix some notation:  denote $f_i(y) = f(x[i\colon y], v)$, $Rf(x,v) = f(x, R_\Psi(x, v)v)$ and $R\lambda(x,v) = \lambda(x, R_{\Psi}(x, v)v)$. Also write  
$\delta_0(\dd x_i, v_i) :=\1_{v_i<0}\delta_{0^+}(\dd x_i) + \1_{v_i>0}\delta_{0^-}(\dd x_i)$ and $ \Delta_i f(x, v) := f(x[i\colon 0^+],v) -f(x[i\colon 0^-],v))$. We have this preliminary result:
\begin{align}
\int &  \sum_{i=1}^d\cG_i f \dd \mu = \frac1C \sum_i \int \left( \cG_i f 
 \exp(-\Psi(x)) (\dd x_i + \frac1{\kappa_i}\delta_0(\dd x_i)) \right)   \prod_{j \ne i}\left(\dd x_j + \frac{1}{
 \kappa_j}\delta_{0}(\dd x_j, v_j)\right)  \rho(v) \dd v \nonumber\\
&= \frac1C \sum_i \int \left(v_i \partial_{x_i} f \exp(-\Psi(x)) \dd x_i +  v_i \Delta_i f \exp(-\Psi(x)) \delta_0(\dd x_i) \right)
  \prod_{j \ne i}\left(\dd x_j + \frac{1}{
 \kappa_j}\delta_{0}(\dd x_j, v_j)\right)  \rho(v) \dd v \label{eq: step 2 sticky bps proof}\\
&= \frac1C \sum_i  \int \left(v_i \partial_{x_i} \Psi(x) f(x, v)  \exp(-\Psi(x)) \dd x_i \right)
  \prod_{j \ne i}\left(\dd x_j + \frac{1}{
 \kappa_j}\delta_{0}(\dd x_j, v_j)\right)  \rho(v) \dd v \label{eq: step 3 sticky bps proof}\\
&= \frac1C  \sum_{A\subset \{1,\dots,d\}} \left(\sum_{i \in A}\left( \int v_i \partial_{x_i} \Psi(x) f(x, v)  \exp(-\Psi(x)) \dd x_A \right) \prod_{j \in A^c, }\frac{1}{\kappa_j} \delta_0(\dd x_j, v_j) \right) \label{eq: step 4 sticky bps proof}\\
&= \frac1C  \sum_{A \subset \{1\dots, d\}}\int \langle v, \nabla  \Psi(x[A^c\colon 0]) \rangle_A f(x[A^c\colon 0], v) \exp(-\Psi(x[A^c\colon 0]))   \dd x_A  \prod_{j \in A^c}\frac{1}{\kappa_j} \rho(v) \dd v \nonumber
\end{align}
Here from \eqref{eq: step 2 sticky bps proof} to \eqref{eq: step 3 sticky bps proof} we used integration by parts in the two half planes $(\infty, 0^+]$ and $[0^-, -\infty)$. 
For the equivalence of \eqref{eq: step 3 sticky bps proof} to \eqref{eq: step 4 sticky bps proof} note that placing $|A|$ balls in $d$ numbered boxes and marking one of them (say the ball in box $i$) is equivalent to placing a marked ball in box $i$ and distributing the remaining unmarked balls over the remaining boxes. Also notice that
\begin{align*}
     \int \lambda_{\rm ref} &\int (f(x,w) - f(x, v)) \varrho(w)\dd w \dd \mu =   \\
        &\frac 1 C \sum_{A\subset \{1,2,\dots,d\}}\lambda_{ref} \int \left( f(x, w) - f(x, v)\right) \exp(-\Psi(x))\dd x_A  \\
        &\qquad \qquad \qquad \qquad \times\prod_{i \in A^c}\frac1{\kappa_i}\delta_{0^-}(\dd x_i) \1_{v_i> 0}\1_{w_i> 0} 2^{|A^c|} \rho(v)\rho(w) \dd v \dd w  \\
        &\quad + \frac1C \sum_{A\subset \{1,2,\dots,d\}}\lambda_{\rm ref} \int \left( f(x, w) - f(x, v)\right) \exp(-\Psi(x))\dd x_A  \\
        &\qquad \qquad \qquad \qquad \times\prod_{i \in A^c}\frac1{\kappa_i}\delta_{0^+}(\dd x_i) \1_{v_i <0}\1_{w_i <0} 2^{|A^c|} \rho(v)\rho(w)\dd v \dd w,  
\end{align*}
which is equal to 0 by symmetry between $v$ and $w$.
Then 
\begin{align}
    \int \mathcal{L}f \dd \mu &= \frac1C  \sum_{A \subset \{1\dots, d\}} \int \langle v, \nabla  \Psi(x[A^c \colon 0])\rangle_A  \exp(-\Psi(x[A^c\colon 0])) f(x[A^c \colon 0], v)  \dd x_A \prod_{j \in A^c}\frac{1}{\kappa_j} \rho(v) \dd v \nonumber\\
    &\quad + \int (\lambda (x, R_\Psi(x,v)) - \lambda (x, v)) f(x,v) \mu(\dd x, \dd v) \nonumber\\
    &=  \frac1C  \sum_{A \subset \{1\dots, d\}} \int \langle v, \nabla  \Psi(x[A^c:0]) \rangle_A \exp(-\Psi(x[A^c\colon 0])) f(x[A^c\colon 0], v)  \dd x_A  \prod_{j \in A^c}\frac{1}{\kappa_j}  \rho(v) \dd v \label{eq: bps invariant measure 1}\\
    &\quad + \frac 1C \sum_{A \subset \{1,\dots,d\}} \int \left( \lambda (x[A^c\colon 0],R_\Psi v) - \lambda (x[A^c\colon 0],v)\right)f(x[A^c\colon 0], v)\exp(-\Psi(x[A^c\colon 0])) \dd x_A \label{eq: bps invariant measure 2}\\ 
    &\quad \times \prod_{j \in A^c}\frac{1}{\kappa_j}  \rho(v) \dd v \nonumber\\ 
    &= 0, \nonumber
\end{align}
where in Equation~\eqref{eq: bps invariant measure 1}-\eqref{eq: bps invariant measure 2} we used a change of variable $v' = R_\Psi(x,v)v$ and property \eqref{eq: property 1 bps}.
\end{proof}
\begin{remark}\label{remark: general refreshment bps}
In more generality, the transition kernel at refreshment times can be chosen as follows: with two refreshment transition densities $q^A$ and $q^F$ such that $q^{A}(w_A\mid v_A)\rho(v_A)$ and $q^{F}(w_F\mid v_F)\rho(v_F)$ for each $A \sqcup F = \{1, \dots, d\}$
are symmetric densities in $w, v$, the refreshment kernel
\[
\varrho_{x,v}(dy,dw) = q^A(w_{\alpha(x,v)}\mid w_{\alpha(x,v)}) q^{F}(w_{\alpha^c(x,v)}\mid w_{\alpha^c(c,v)})\delta_{\cF (x,v,w)}(\dd y) \dd w
\]
where
\[(\cF (x,v,w))_i = \begin{cases}
0^- & \text{if }  x_i = 0^+, v_i < 0, w_i > 0,\\
0^+ & \text{if } x_i = 0^-, v_i > 0, w_i < 0,\\
x_i & \text{else}
\end{cases}\]
leaves the target measure $\mu$ invariant.
\end{remark}
The transition kernels given in Remark~\ref{remark: general refreshment bps} satisfy the Equation~$\lambda_{\rm ref} \int f(x, w) - x(x, v) \varrho_{x,w} \dd w \dd \mu = 0$ and therefore, by similar computations as in the proof of Proposition~\ref{prop: sticky bps invariant measure}, leave $\mu$ invariant. For example, the preconditioned Crank-Nicolson scheme \textcite{cotter2013mcmc} falls withing this setting.

\subsection{Sticky Boomerang sampler} 
The sticky Boomerang sampler has Hamiltonian dynamics prescribed by the vector field $\bar \xi_i(x_i, v_i) = (v_i, -x_i)$ with close-form solution 
\begin{equation} \label{eq: boom det dynamics}
(x_i(t), v_i(t)) = (\cos(t)x_i(0) + \sin(t)v_i(0), -x_i(0)\sin(t) + \cos(t)v_i(0)),
\end{equation}
and is invariant to a prescribed Gaussian measure centered in 0. Define $U(x)$ such that
\[
U(x) = \Psi(x) - \frac12 x'\Sigma^{-1}x 
\]
for a positive semi-definite matrix $\Sigma \in \RR^{d\times d}$. Consider for example the application in Bayesian inference with spike-and-slab prior (Equation~\eqref{eq: spike-and-slab prior}) where $\{\pi_{i}\}_{i=1}^d$ are centered Gaussian densities with variance $\sigma_i^2$. Then a natural choice is $\Sigma = \text{Diag}(\sigma_1^2,\sigma^2_2,\dots,\sigma^2_n)$.

Similarly to the sticky Bouncy Particle sampler, the process reflects its velocity at an inhomogeneous rate given by 
\[
\lambda(x, v) = \langle v, \nabla U(x)\rangle_{\alpha}^+
\]
with reflection specified by the  transition kernel
\[
(R_U (x,v)v)_i = \begin{cases}
v_i & i \notin \alpha\\
v_i - 2\frac{\langle \nabla U(x), v\rangle_{\alpha}}{ \|\nabla \Sigma^{1/2}U(x)), \|^2_{\alpha}} \langle \Sigma_{[i, :]}, \nabla U(x) \rangle_{\alpha} & \text{else}
\end{cases} 
\]
and refreshes the velocity at exponentially distributed times with rate equal to $\lambda_{\rm ref}$ according to the rule given in Equation~\eqref{eq: refresh rule bps}.
\begin{proposition}\label{prop: sticky boom invariant measure}
  The $d$-dimensional sticky Boomerang sampler is  invariant to the measure in Equation~\eqref{eq: target bps}.
\end{proposition}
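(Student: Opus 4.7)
The plan is to mirror the proof of Proposition~\ref{prop: sticky bps invariant measure} step by step, substituting the Hamiltonian vector field for the straight-line flow of the Bouncy Particle sampler. Concretely, I would write down the extended generator as
\[
\cA f(x,v) = \sum_{i=1}^d \cG_i f(x,v) + \lambda(x,v)\bigl(f(x, R_U(x,v)v) - f(x,v)\bigr) + \lambda_{\rm ref}\int\bigl(f(x,w) - f(x,v)\bigr)\varrho_{x,v}(w)\dd w,
\]
with $\cG_i f = |v_i|\kappa_i(f(T_i(x,v)) - f(x,v))$ on $\fF_i$ and $\cG_i f = v_i\partial_{x_i}f - x_i\partial_{v_i}f$ off $\fF_i$ (the Hamiltonian vector field of Equation~\eqref{eq: boom det dynamics}), and then verify $\int \cL f\,\dd\mu = 0$ on the core $D = \{f\in C^1_c(E)\colon \cA f \in C_b(E)\}$, which by \textcite[Theorem 3.37]{Liggett2010} yields invariance of $\mu$.

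First I would treat the deterministic and sticky contributions together. Writing $\Psi(x) = U(x) + \tfrac12 x'\Sigma^{-1}x$ so that $e^{-\Psi(x)}\rho(v)$ factors into $e^{-U(x)}$ times the Gaussian density preserved by the Hamiltonian flow, integrating $v_i\partial_{x_i}f$ by parts in $x_i$ on each of the half-intervals $(-\infty, 0^-]$ and $[0^+, \infty)$ and then integrating $-x_i\partial_{v_i}f$ by parts in $v_i$ using $\partial_{v_i}\rho(v) = -v_i\rho(v)$ produces two types of terms: a volume term proportional to $v_i\,\partial_{x_i}U(x)$ (after the quadratic part of $\nabla\Psi$ cancels against $x_i$), and boundary terms of the form $v_i(f(x[i\colon 0^+],v) - f(x[i\colon 0^-],v))e^{-\Psi(x[i\colon 0])}\rho(v)$. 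Exactly as in the transition from Equation~\eqref{eq: step 2 sticky bps proof} to Equation~\eqref{eq: step 4 sticky bps proof}, these boundary contributions are cancelled identically by the sticky jump term $|v_i|\kappa_i(f(T_i(x,v))-f(x,v))$ of $\cG_i$ when weighted by $\frac{1}{\kappa_i}\delta_0(\dd x_i,v_i)$, leaving only the volume term $\int \langle v, \nabla U(x)\rangle_\alpha\, f(x,v)\,\dd\mu$ after the subset-sum reorganization over active sets $A \subset \{1,\dots,d\}$.

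Next I would check the three structural identities for the reflection, in direct analogy with the Bouncy Particle case: $\langle \nabla U(x), R_U(x,v)v\rangle_\alpha = -\langle\nabla U(x), v\rangle_\alpha$; $\rho(R_U(x,v)v) = \rho(v)$; and $-\langle v, \nabla U(x)\rangle_\alpha + \lambda(x,v) - \lambda(x, R_U(x,v)v) = 0$. The first and third are algebraic consequences of the definitions of $R_U$ and $\lambda$. Applying the change of variables $w = R_U(x,v)v$ to the reflection integral and using these identities then cancels the remaining volume term from the previous step, exactly as in the passage from Equation~\eqref{eq: bps invariant measure 1}--\eqref{eq: bps invariant measure 2} in the BPS proof.

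Finally, the refreshment term vanishes by the same symmetry argument used for the Bouncy Particle sampler: on each sub-model $\cM_A$ the density $\varrho_{x,v}(w)\rho(v)$ is symmetric under the swap $v\leftrightarrow w$ (the rule \eqref{eq: refresh rule bps} only constrains signs on the frozen coordinates, where $v$ and $w$ must share the sign prescribed by the copy of zero), so the refreshment integral vanishes model by model. The main obstacle I anticipate is the second identity $\rho(R_U(x,v)v) = \rho(v)$, which amounts to $\|R_U v\|^2_\alpha = \|v\|^2_\alpha$: this has to be checked against the precise denominator $\|\nabla \Sigma^{1/2}U(x)\|^2_\alpha$ in the definition of $R_U$, and may require interpreting the velocity reference measure consistently with $\Sigma$. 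Once this norm-preservation is established, the remainder of the computation is a line-by-line transcription of the sticky Bouncy Particle argument.
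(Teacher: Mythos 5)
Your proposal matches the paper's proof essentially line by line: the paper likewise writes the extended generator with the coordinate-wise Hamiltonian vector field plus the sticky jump term, records the two algebraic identities for $R_U$ (sign reversal of $\langle\nabla U(x),\cdot\rangle_{\alpha}$ and $\|\Sigma^{-1/2}R_U(x)v\| = \|\Sigma^{-1/2}v\|$), and then states that one ``carries out similar computations as in the proof of Proposition~\ref{prop: sticky bps invariant measure}'' — precisely the integration-by-parts, subset-sum, change-of-variables, and refreshment-symmetry steps you spell out. Your flagged concern about the norm identity is well-founded and is indeed the one point requiring care: since $R_U$ preserves the $\Sigma^{-1/2}$-weighted norm rather than the Euclidean one, $\rho$ in the invariant measure and in $\varrho_{x,v}$ must be read as the $\cN(0,\Sigma)$ density (consistent with the Hamiltonian flow preserving $\exp(-\tfrac12 x'\Sigma^{-1}x)\rho(v)$) for $\rho(R_U(x,v)v)=\rho(v)$ and the refreshment cancellation to go through.
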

\begin{proof}
The extended generator of the sticky $d$-dimensional Boomerang process is given by
\begin{equation*}
    \cA f(x, v) =  \sum_{i=1}^d\mathcal{G}_i f(x, v) + \lambda(x,v)(f(x, R_U (x,v) v) - f(x, v)) + \lambda_{\rm ref} \int \left(f(x, w) - f(x, v)\right)\varrho_{x,v}(w)\dd w
\end{equation*}
and
\begin{equation*}
    \mathcal{G}_i f(x, v) = \begin{cases}\!% alignment adjustment
 |v_i| \kappa_i \left(f(T_i(x, v)) - f(x, v)\right) & (x,v) \in \mathfrak F_i\\
 v_i \partial_{x_i} f(x, v)  +  x_i \partial_{v_i} f(x, v)& \text{else,}
\end{cases}
\end{equation*}
where
\[
\varrho_{x,v}(w) = \rho(w_{\alpha(x,v)})\prod_{i\in \alpha(x,v)^c}2\rho(w_i) \1
_{v_iw_i >0},
\]
$\rho(y)$ being the standard normal density function evaluated at $y$ and 
for sufficient regular functions  $f\colon E\to \RR$ in the extended domain of the generator. Then, define $D = \{f \in C^1_c( E), \cA f \in C_b( E)\}$ and $\cL$ as the extended generator $\cA$ restricted to $D$. 
The component of the extended generator $(x,v) \to \partial_{x_i} f(x, v)  +  x_i\partial_{v_i} f(x, v)$  produces Hamiltonian dynamics (see Equation~\eqref{eq: boom det dynamics})
preserving any Gaussian measure centered on $0$.
Notice that the $R_U (x)$ satisfies 
\[
\langle \nabla U(x), R_U(x)v \rangle_{\alpha(x,v)} = - \langle \nabla U(x), v \rangle_{\alpha(x,v)}
\]
and that 
\[
\| \Sigma^{-1/2}R_U(x)v\| = \|\Sigma^{-1/2} v\|. 
\]
Then one can check that $\int \Ge f(x,v) \mu(\dd x, \dd v) = 0$ by carrying out similar computations as in the proof of Proposition~\ref{prop: sticky bps invariant measure}.
\end{proof}
A variant of the sticky Boomerang sampler is the sticky factorised Boomerang sampler (being the sticky version of the factorised Boomerang sampler introduced in \cite{bierkens2020boomerang}). Here the process has the same dynamics, refreshment rule and sticky events of the sticky Boomerang process but has a different reflection rate and reflection rule. Similarly to the Sticky Zig-Zag process, the first reflection time of the sticky factorised Boomerang sampler is given by the minimum of $|\alpha(x,v)|$ Poisson times  $\{\tau_j \colon j \in \alpha(x,v)\}$ with $\tau_j \sim \text{Poiss}(t \rightarrow \lambda_j(\varphi(t,x,v))$ and $\lambda_j(x,v) = (\partial_{x_j}U(x)v_j)^+$. Likewise the Sticky Zig-Zag process, at the reflection time the process reflects its velocity by changing the sign of the $i$th component $v 
\to v[i\colon -v_i]$ where  $i = \argmin\{\tau_j \colon j \in \alpha(x,v) \}$. As shown in \textcite{bierkens2020boomerang} the factorised Boomerang sampler can outperform the Boomerang sampler when $\partial_{x_i} U$ is function of few coordinates.

\clearpage

\section{Comparison between reversible jump PDMPs and sticky PDMPs}\label{app: Comparison between reversible jump PDMPs and sticky PDMPs}
In this appendix, we discuss the differences between the sticky PDMPs and  RJ (Reversible Jumps) PDMPs presented in \textcite{chevallier2020reversible} which, similarly to us, addresses variable selection problems using PDMP samplers.

The approach taken in \textcite{chevallier2020reversible} is  based on  the framework of reversible jump (RJ) MCMC as proposed in  \textcite{Green95reversiblejump} and its derivation is therefore substantially different from our approach. Nonetheless, the  samplers have certain similarities.  The dynamics of both the RJ PDMPs in \textcite{chevallier2020reversible} and the sticky PDMPs proposed in this paper allow each coordinate to  stick at 0 for an exponential time. The rate of the exponential time of the sticky PDMPs  depends only on the velocity component of each coordinate, while the rate of RJ PDMPs can depend on the current state of the process. The latter is slightly more general as it allows to choose freely a prior weight on the Dirac measure for each possible model (while our approach allows to choose freely a prior weight on the Dirac measure of each possible coordinate). An important difference between the two methods is the behaviour of the process after the particle sticks at 0:  the velocity of the coordinate of the sticky PDMPs is restored to its previous value while for RJ PDMPs, a new velocity is drawn independently to the previous one. The former action introduces non-reversible jumps between models while the latter reversible jumps and a random walk behaviour when jumping between models. This simple, yet substantial, difference leads to two different limiting behaviour of the two processes when the number of Dirac measures increases. The limiting behaviour of both processes is unvelied  below in Appendix~\ref{app: limiting behaviour} through numerical simulations:  while the Sticky Zig-Zag converges to ordinary Zig-Zag, the RJ Zig-Zag asymptotically exhibits diffusive behaviour.

For RJ PDMPs, the random walk behaviour is mitigated  by introducing a tuning parameter $p$ which allows each coordinate to stick at 0 only a fraction of times when hitting 0 (and compensating for this by down-scaling the rate of the exponential waiting time when the coordinate sticks). The parameter $p$ is tuned to be equal to  $0.6$ based on empirical criteria. In Appendix~\ref{app: heuristic} we investigate the possibility to introduce the tuning parameter $p$ in the Sticky Zig-Zag sampler and, based on a heuristic argument and a simulation study, we concluded that it is not beneficial for us.

\subsection{Heuristics for the choice of \texorpdfstring{$p$}{p}}\label{app: heuristic}
Here we investigate the possibility of introducing the parameter $p$ to the Sticky Zig-Zag sampler. This parameter was originally introduced in \textcite{chevallier2020reversible}.  Based on the heuristic argument and the simulation study given below, we conclude that the introduction of $p$ does not improve the performance of the Sticky Zig-Zag sampler.

The parameter $p$ defines the probability for a coordinate to stick at 0 when it hits 0. By introducing this parameter, the times of the particles stuck at 0 has to be rescaled by a factor of $p$ in order target the right measure. 

Consider a trajectory $\{z_t \colon 0<t<T\}$ of the one dimensional ordinary Zig-Zag sampler (without stickiness) targeting a given measure. In this case, one could create a trajectory of the Sticky Zig-Zag process retrospectively just by adding constant segments equal to 0, every time the process hits 0 with random length equal to $XY$, with $X \sim \mathrm{Ber}(p)$ and $Y \sim \mathrm{Exp}(\kappa/p)$, $X$ independent from $Y$. Then, if the trajectory $z_t$ hits 0 $N$-times, the total occupation time of the sticky process in 0 is Gamma-distributed with shape parameters $\frac{N}{p}$ and inverse scale parameter $p\kappa$ (in variable selection, this would correspond to the posterior probability of the sub-model without the coefficient). While the mean of this random variable is constant for every $p$, its variance is $\frac{N}{\kappa p}$ and is minimized when $p = 1$.

Based on the aforementioned heuristics, it appears not useful to introduce the parameter $p$ for the Sticky Zig-Zag. This claim is supported by simulations presented in Figure~\ref{fig:varying_p}, where we vary $p$ from 0.1 (top) to 1.0 (bottom) for a 20 dimensional Gaussian density with pairwise correlation equal to 0.99 and relative to the measure 
\begin{equation}
\label{eq: test reference measure}
\prod_{i=1}^{d} \big(  \mathrm{d}x_i +  c\sum_{j \in \mathbb{N}} \delta_{j*0.01}(\mathrm{d}x_i) \big),
\end{equation}
with $c = 1.0$. In Figure~\ref{fig:varying_p}, left panels, the traces are more erratic when $p$ is small and the process traverses the space in less time when $p$ is large
(notice the different ranges of the vertical axis). In Figure~\ref{fig:varying_p}, right panels, the phase portrait of the first two coordinates is shown. By visual inspection it is possible to notice that the phase portrait fails to be symmetric on the axis $x_1 = -x_2$ for $p$ small while it succeeds for $p = 1$ (notice again the different ranges of the axes), hence suggesting that Zig-Zag sampler has a better mixing for $p = 1$.
\begin{figure}[ht!]
    \centering
    \includegraphics[width=0.8\linewidth]{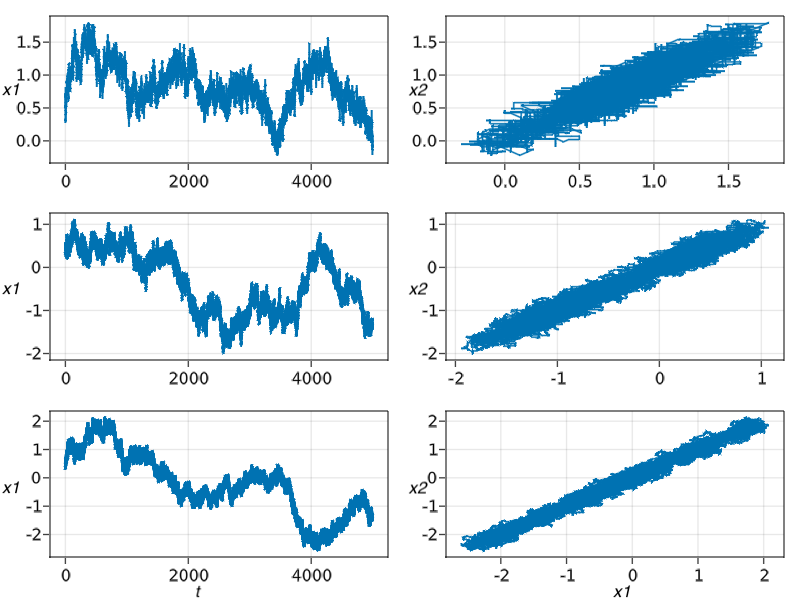}
    \caption{$x_1$ trace plots (left) and $x_1$-$x_2$ phase portraits (right) of the Sticky Zig-Zag samplers with final clock $T = 50^3$ with $p$ equal to  $0.1$ (top), $0.5$ (center), $1.0$ (bottom). The target measure has a Gaussian density with pairwise correlation equal to 0.99 relative to the reference measure of Equation~\eqref{eq: test reference measure}. By comparing the symmetry of the empirical measures along the diagonal and the range of the coordinates, one can conclude that the algorithm performs best for $p = 1$.}
    \label{fig:varying_p}
\end{figure}

\subsection{Limiting behaviour}\label{app: limiting behaviour}
 Here we show the different limiting behaviour between the RJ-PDMP samplers and the sticky PDMP samplers as the number of Dirac measures increases. 
 
 The limiting behaviour of the two samplers significantly differ because after every time a coordinate  sticks at a point mass, the sticky PDMP sampler preserves the velocity component while RJ PDMP sampler has to refresh a new independent velocity. We illustrate the limiting behaviour of the two samplers through simulations where we let the Sticky Zig-Zag and the RJ-Zig-Zag sampler (with $p = 0.6$) target a 20-dimensional measure with a Gaussian density with pairwise correlation equal to 0 (Figure~\ref{fig:revjump_vs_sticky1}) and 0.99 (Figure~\ref{fig:revjump_vs_sticky2}) relative to the reference measure of Equation~\eqref{eq: test reference measure} with $c = 10$. While the Sticky Zig-Zag sampler resemble an ordinary Zig-Zag sampler, the RJ-PDMP sampler has a limiting diffusive behaviour and appears to explore the space less efficiently than the sticky PDMP sampler (see the range of the axes and the symmetries of the measure around the axis $x_2 = -x_1$ ). 

\begin{figure}[h]
    \centering
    %  \begin{minipage}[t]{0.75\linewidth}
\includegraphics[width=0.75\linewidth]{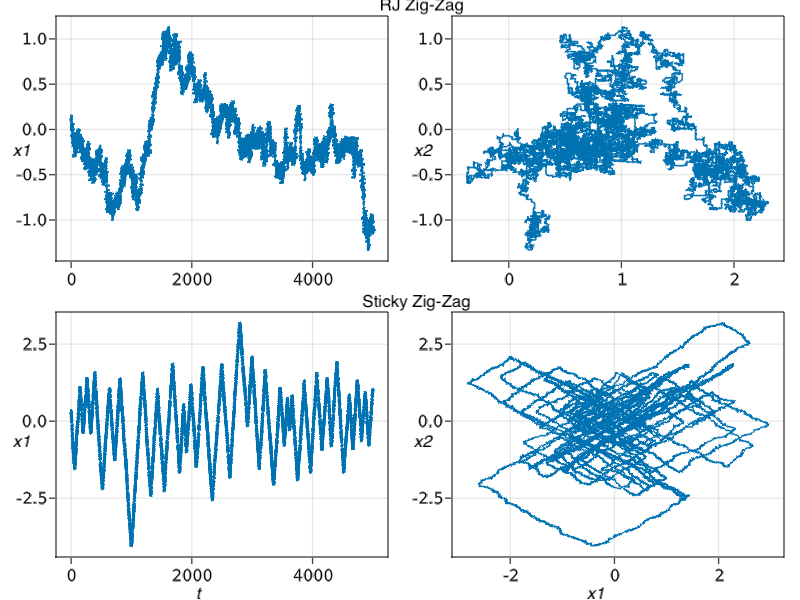}
    \caption{Comparison between RJ Zig-Zag samplers (first row) and Sticky Zig-Zag samplers (second row) targeting a 20 dimensional measure with Gaussian density with pairwise correlation equal to 0.0  and relative to the reference measure in Equation~\eqref{eq: test reference measure}. Column 1: trace plot of the first coordinate. Column 2: trace plot of the second column. In all cases $T = 10^4$. By looking at the range of each coordinate, it is clear that the Sticky Zig-Zag mixes faster than its reversible counterpart.}
    \label{fig:revjump_vs_sticky1}
    \end{figure}
    
    \begin{figure}[h]
    \centering
% \end{minipage}%
    % \hfill%
% \begin{minipage}[t]{0.75\linewidth}
   \includegraphics[width=0.75\linewidth]{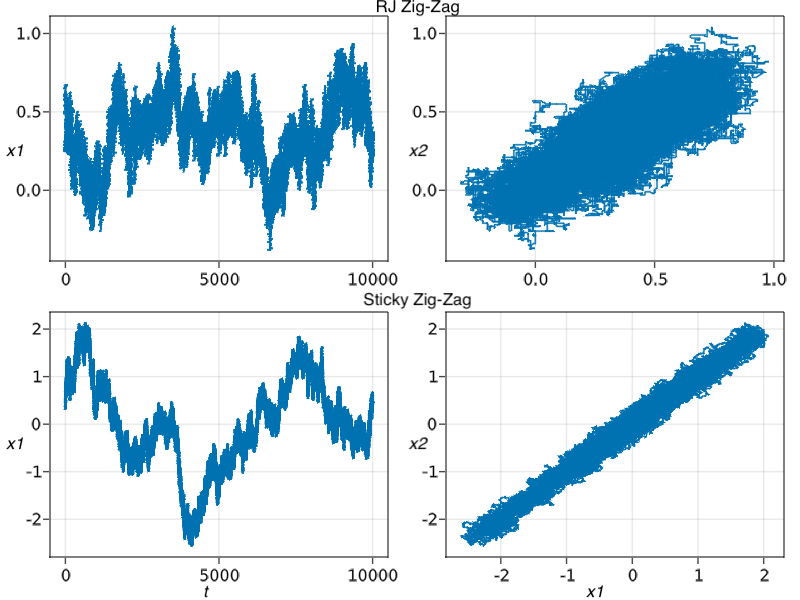}
        \caption{Same description as in Figure~\ref{fig:revjump_vs_sticky1}, except now for a Gaussian measure with pairwise correlation equal to 0.99. By looking for example at the symmetry along the axis $x_2 = -x_1$ and the ranges of the coordinates, it is clear that the Sticky Zig-Zag outperforms the RJ Zig-Zag.}
    \label{fig:revjump_vs_sticky2}
% \end{minipage} 
\end{figure}

\clearpage

\section{Details of Section~\ref{sec:runtime analysis}}\label{app: details of section 4}
\subsection{Bayes factors for Gaussian models}\label{app: bayes_factors}
Let $ (X,Y) \sim N(\mu, \Gamma^{-1})$, written in block form as
\[
\mu = \begin{bmatrix}\mu_x\\ \mu_y\end{bmatrix}, \quad \Gamma = \begin{bmatrix}\Gamma_{x} & \Gamma_{xy} \\ \Gamma_{xy}' &\Gamma_{y} \end{bmatrix}.
\]
Denote the density of $(X,Y)$ evaluated at $(x,y)$ by $\phi([x,y]; \,\mu, \,\Gamma^{-1})$.
Let
\begin{equation}
X\mid (Y=y) \sim \cN(\mu_{x|y},\, \Gamma_{x}^{-1}) \label{lemma:conditional density}
\end{equation}
be the marginal density of $X$ given $Y = y$, where $\mu_{x\mid y} = \mu_x - \Gamma^{-1}_x \Gamma_{xy}(y - \mu_y)$. 
Assume $\Gamma_x$ to be positive definite and let the marginal density of $Y$ be
\begin{equation}
\int\phi([x,y]; \,\mu, \,\Gamma^{-1}) \dd x 
= (2\pi)^{\tfrac{d_x - d}2}|\Gamma|^{\frac12} |\Gamma_x|^{-\frac12}   \exp\left(\frac12 \mu_{x\mid y}'\Gamma_x \mu_{x\mid y}
-\frac12[ - \mu_x, y - \mu_y]' \Gamma [ - \mu_x, y - \mu_y]
\right) \label{lemma:marginal density}
\end{equation}
where $d_x$ is the size of $X$.

We are now ready to compute the corresponding Bayes factors of two neighbouring (sub-)models as in Equation~\eqref{eq; Bayes_factors} when $\Psi$ is a quadratic function. For every set of indices $\alpha \subset \{1,2,\dots,d\} $ and for every $j$, the Bayes factors relative to two neighbouring (sub-)models  (those differing by only one coefficient) for a measure as in Equation~\eqref{eq: target measure} are given by
 \begin{equation}\label{eq: bayes_fact_appendix}
 B_j(\alpha) =  \frac{\mu( \cM_{\alpha \cup \{j\}})}{\mu(
\cM_{\alpha \setminus\{ j\}})} = \frac{\kappa_i\int_{\RR^{|\alpha \cup \{j\}|}} \exp(-\Psi(y)) \dd x_{\alpha \cup \{j\} }}{\int_{\RR^{|\alpha \setminus \{j\}|}} \exp(-\Psi(z)) \dd x_{\alpha \setminus \{j\}}}, 
 \end{equation}
   where $y = \{x \in \RR^d \colon x_i = 0,\, i \notin (\alpha \cup \{j\}) \}$, $z = \{x \in \RR^d \colon x_i = 0, \, i \notin (\alpha \setminus \{j\})]$. Since $\Psi$ is quadratic, we can write $\exp(-\Psi(x)) = C\phi(x; \,\mu, \,\Gamma^{-1})$ for some parameters $C,\mu, \Gamma$. By using both Equation~\ref{lemma:conditional density} and Equation~\ref{lemma:marginal density} we have that the right hand side of Equation~\eqref{eq: bayes_fact_appendix} is equal to 
 \[
 \kappa_i\sqrt{\frac{2\pi|\Gamma_{x_1}|}{|\Gamma_{x_2}|}}\exp\left(\frac{1}{2} (\mu'_{x_1\mid y_1 = \bm{0}} \Gamma_{x_1} \mu_{x_1\mid y_1 = \bm{0}} - \mu'_{x_2\mid y_2 = \bm{0}} \Gamma_{x_2} \mu_{x_2\mid y_2 = \bm{0}}) \right)
 \]
where $x_{1} = x_{\alpha_{-j} \cup \{j\}},\,x_{2} = x_{\alpha_{-j} \setminus \{j\}}$, $y_{1} = x_{\alpha_{-j}^c \setminus \{j\}},\, y_{2} = x_{\alpha_{-j}^c \cup \{j\}}$.
Furthermore, by Equation~\ref{lemma:conditional density}, the random variable at step 2 of the Gibbs sampler presented in Section~\ref{sec: gibbs sampler} can be simulated as $X_\alpha | (X_{\alpha^c} = \bm{0}) \sim \cN(\mu_{x_{\alpha} \mid x_{\alpha^c} = \bm{0}}, \Gamma_{x_{\alpha}})$.
\subsection{Simulating sticky PDMPs and sticky Zig-Zag samplers}\label{app: simulating sticky PDMPs}
Sticky samplers can be implemented recursively by modifying appropriately the ordinary PDMP samplers so to include sticky events as introduced in Section~\ref{sec: sticky PDMP samplers}. We discuss how to integrate local implementations of the algorithms to increase the sampler's performance  in case of a sparse dependence structure in the target measure and in case of local upper bounding rates.

Although PDMPs have continuous trajectories, the algorithm computes and saves only a finite collection of points (which we refer to as the skeleton of the continuous trajectory) corresponding to the positions, velocities and times  where the deterministic dynamics of the process change. In between those points, the continuous trajectory can be deterministically interpolated.

In case the $i$th partial derivative of the negative score function is a sum of $N_i$ terms, which is  the case for example in regression problems, subsampling techniques can be employed as described in Section~\ref{sec: Subsampling}.

\subsubsection{Computing Poisson times for PDMPs}\label{app: computing Poisson times}
As PDMPs move deterministically (and with simple dynamics) in between event times, the main computational challenge consists of simulating those times.
Given an initial position $(x, v)$, the distribution of the time until the next event is specified in \eqref{eq:jumptimes}. A sample from this distribution can be found by solving for $\tau'$ in the equation
\begin{equation}
    \label{eq: find the root lambda}
    \int_0^{\tau'} \lambda(\varphi(s, x, v)) \dd s = t, \quad t = \text{Exp}(1). \end{equation}
We then write that $\tau' \sim \text{Poiss}( \lambda(\varphi(\cdot,x,v))$. When it is not possible to find the root of Equation~\eqref{eq: find the root lambda}  in closed form, it suffices to find upper bounds $\overline \lambda$ for the rate functions which satisfies, for any $(x,v)\in E$ and for 
some $\Delta = \Delta(x,v) > 0$
\begin{equation}\label{eq: lambda bound}
\overline \lambda(t, x, v)\ge \lambda(\varphi(t, x,v)), \quad \Delta \ge t  \ge 0,
\end{equation}
for which this is possible and use the thinning scheme: Let $\tau' \sim \text{Poiss}(\bar\lambda(\cdot,x,v))$; if $\tau' > \Delta$ then the proposed time is rejected and a new time has to be drawn as $\tau' \sim  \text{Poiss}(\bar \lambda(\cdot, \phi(\Delta, x,v)))$. We \textit{accept} the proposed time with probability $\lambda(\phi(\tau', x,v))/\bar\lambda(\tau', x,v)$. This scheme is referred as \emph{adaptive thinning} in \textcite{2015arXiv151002451B}.
More sophisticated and potentially efficient thinning  schemes have been proposed, see \textcite{sutton2021concave}. The simulation of unfreezing times is easier: 
once the $i$-th component hits zero then it sticks at zero for a time that is exponentially distributed with parameter $\kappa_i |v_i|$.

For the ordinary $d$-dimensional Zig-Zag and the factorised Boomerang sampler (these samplers are called  \emph{factorised PDMPs} in \cite{bierkens2020boomerang}), the reflection time is factorised as the minimum of  $d$ independent clocks $\tau_1,\tau_2,\dots,\tau_d$ where $\tau_i \sim \text{Poiss}( \lambda_i(\varphi(\cdot,x,v))$
for $i=1,2,\dots,d.$ The first reflection time of the $d$-dimensional sticky factorised samplers is obtained instead by finding the minimum of $|\alpha|<d$ independent clocks with the same rates $\lambda_i$ of the ordinary factorised sampler, but only for the active coordinates $i \in \alpha(x,v)$.

If $\partial_{x_i} \Psi$ (an estimate of $\partial_{x_i} \Psi$ when using subsampling) or the upper bound $\overline \lambda$ depends on fewer coordinates, then the evaluation of each reflection time is cheaper. 
The fully local implementation presented in \textcite{bierkens2020piecewise} exploits these two features once in proposing the reflection time and once for deciding whether to accept. Below, we discuss in more details the algorithm of Sticky Zig-Zag sampler with local upper bounds and with subsampling. 
\subsubsection{Local implementation:}\label{app: local implementation}
Assume that the sets $\overline A_i$ and  $\overline \lambda_i$ are such that 
\[
\overline \lambda_i(t, x, v) = f_i(t, x_{\overline A_i}), \quad \forall x, \text{ for } i = 1,2,\dots, d
\]
for some $f_i \colon \RR^+\times \RR^{|\overline A_i|} \to \RR^+$ with $\overline A_i \subset \{1,2,\dots, d\}$. 
Given an initial position $(x,v)$ and random times $\tau_j \sim \text{Poiss}(t \to \overline \lambda_j(t,x,v))$,  for $i \in \alpha$, denote by $i= \argmin_{j \in \alpha(x,v)}\,\tau_j$ and $ \tau = \min_{j \in \alpha(x,v)}\,\tau_j$ the first proposed reflection time. According to the thinning procedure for Poisson processes, the process flips the $i$th coordinate with probability $\lambda_i(\varphi(\tau, x, v))/ \overline \lambda_i(\tau, x,v)$. If the process flips the $i$th velocity, then the Poisson rates $\{\overline \lambda_j \colon j \in \alpha,\, \overline A_j  \niton i\}$ continue to be valid upper bounds so that the corresponding reflection times do not need to be renewed (see \cite[Section 4]{bierkens2020piecewise}, for implementation details).

In general, when the $i$th particle freezes at 0 or was stuck at 0 and gets released, the reflection times $\{\tau_j\colon i \in \overline A_j\}$ have to be renewed. 
However this is not always the case, as there are applications, such as the one in Section~\ref{subsec: examples>logistic regression}, for which the upper bounding rates $\{\overline \lambda_i\}_{i=1}^d$ continue to be valid upper bounds when one or more  particles hit 0 and therefore the waiting times computed before the particles hit 0 are still valid. 

\subsubsection{Fully local implementation:}\label{app: fully local implementation}
Consider now the decomposition of $\partial_{x_i} \Psi, \, i = 1,2,\dots, d$ given in Equation~\eqref{eq: decomposition partial derivatives} and such that  
\[
S(x,i,j) = f_{i,j}(x_{\tilde A_{i,j}}), \quad \forall x, \text{ for } (i,j) \in \{1,2,\dots, d\}\times\{1,2,\dots, N_i\}
\]
for some $f_{i,j} \colon \RR^{|\tilde A_{i,j}|} \to \RR$ with $\tilde A_{i,j} \subset \{1,2,\dots, d\}$.

The fully local implementation of the Sticky Zig-Zag with subsampling profits from local upper bounds and local gradient estimators by assigning an independent time for each coordinate, thus evolving the flow of only the coordinates which are required at each step and by stacking $\{\tau_j \wedge \tau^\star_j \colon j \in \alpha\}$, with $\tau_j$ being a proposed reflection time and $\tau_j^\star$ the hitting time to 0, and the unfreezing times $\{\tau_j^\circ \colon j \in \alpha^c\}$ in an ordered queue. For a documented implementation, see \textcite{mschauer/ZigZagBoomerang.jl}. 

Given an initial point $(x,v)$ and if $i = \argmin(\tau_j \colon j \in \alpha(x,v))$ is the coordinate of the first proposed reflection time $\tau = \min(\tau_j \colon j \in \alpha(x,v))$, the sampler reflects the velocity of the $i$th coordinate with probability ${\tilde \lambda_{i, J}(x_{\tilde A_i}(\tau), v)/\overline \lambda(\tau, x, v)}$ with $J \sim \text{Unif}(\{1,2,\dots, N_i\})$.  Hence, it is only  required to update the position of the coordinates with index in $ \tilde A_{i,J} \setminus \alpha^c(x,v)$. Then,
\begin{itemize}
    \item if the $i$th velocity flips, then the algorithm needs to update only the waiting times $\{\tau_j \colon j \in \alpha, \overline A_j \ni i \}$ (as described in Appendix~\ref{app: local implementation}) and, to this end, needs to update the position  of the coordinates with index $\{k \in \overline A_j\setminus \alpha^c(x,v) \colon i \in \overline A_j\}$; 
    \item in the other case, when the $i$th velocity does not change (shadow event), only $\tau_i$ has to be renewed so that only the particles in  $\overline A_i$ have to be updated.
\end{itemize}

\begin{remark}\label{rmk: sparse implementation} (Sparse implementation.)
 When the dimensionality $d$ is large, inserting each waiting time in a ordered queue and initializing the state space
 can be computationally expensive. If for example the product $k_i|v_i|$ is equal for all $i$, an alternative efficient and sparse implementation is possible. Here we simulate the sticky time for each frozen coordinate by means of simulating  the overall sticky time from the exponential distribution with rate $\sum_{i \in \alpha^c} \kappa_i|v_i|$ (which has to be renewed every time a new particle sticks at $0$) and selecting the particle to unfreeze uniformly from the set $\alpha^c$. A further improvement can be obtained by representing $x$ as a sparse vector and saving only the location of the active particles $\{x_i \colon i \in \alpha\}$.
\end{remark}

\subsection{Runtimes of the algorithms}\label{app: runtimes of the algorithms}

We will now compute typical runtimes for the Gaussian model, assuming a decomposition
\[ \Psi(x) = (x-\mu)' \Gamma(x-\mu) = \sum_{i=1}^N (x-\mu_i)' \Gamma_i (x-\mu_i) + c,\]
so that $N$ captures the dependence on the number of observations in a Bayesian setting.

\subsubsection{Sticky Zig-Zag sampler:} The computational cost of simulating PDMP samplers is intimately related with the number of random times generated. This, in turn, depends on the intensity of the rate $\lambda$ of the underlying Poisson process. For any initial position and velocity $(x,v)$,  the total rate of the Sticky Zig-Zag sampler is equal to
\begin{equation}
    \label{eq: overall rate}
    \lambda(x,v) = \sum_{i \in \alpha} \lambda_i(x,v) + \sum_{i \in \alpha^c} |v_i|\kappa_i  
\end{equation}
where, as before, $\alpha = \{i \colon x_i \ne 0\}$. In the following analysis, we drop the dependence on $(x,v)$  and we assume that the size of $\alpha(t) := \{i \colon x_i(t) \ne 0\}$ fluctuates around a typical value $p$ in stationarity. Thus $p$ represents the number of non-zero components in a typical model, and can be much smaller than $d$ in sparse models.

We consider the sticky  Zig-Zag with local implementation as in Remark~\ref{rmk: sparse implementation}
where we assume $\kappa := \kappa_1 = \kappa_2 = \dots = \kappa_{d}$.
We ignore logarithmic factors, e.g., for priority queue insertion. In the analysis below we distinguish between the computational costs of reflection events and unfreezing events.

The number of reflection and unfreezing events per unit time interval are respectively $\cO(p)$ and $\cO((d-p)\kappa)$
per unit time; see Equation~\eqref{eq: overall rate}. Once either a reflection or unfreezing event happens, we have to recompute between $\cO(1)$ and $\cO(p)$ new reflection event times (depending on the elements of $\overline A_i \cap \alpha$; see Appendix~\ref{app: local implementation}).
Finally, each newly computed reflection event time for the particles $i \in \alpha$ requires a computation ranging
 from $\cO(1)$ to $\cO(N)$.
The complexity $\cO(1)$ can be achieved using the subsampling technique (Section~\ref{sec: Subsampling})  in ideal scenarios (\cite{bierkens2019}). 
Table~\ref{tab: 1} in Section~\ref{sec:runtime analysis} summarizes the overall scaling complexity of the Sticky Zig-Zag algorithm for the quantities $p$ and $N$. 
\subsubsection{Gibbs sampler:}
At each iteration, the Gibbs sampler algorithm requires the evaluation of the Bayes factors which involves the inversion of a square matrix of dimension $p\times p$. This can be efficiently obtained with a Cholesky decomposition of a sub-matrix of $\Gamma$. This is a computation of  $\cO(p^3)$ when $\Gamma$ is full; a lower order is possible when $\Gamma$ is sparse. For example, in the example in Section~\ref{sec: example>Spatially structured sparsity}, the complexity of this operation is $\cO(p^{3/2})$. 
This is followed by  computing  sufficient statistics in step 2 of Section~\ref{sec: gibbs sampler} which involves the inversion of a triangular matrix which is  $\cO(|\alpha^2|)$ ($\cO(1)$ if the Cholesky factor is sparse) in addition to  an operation of order $pN$ (for example in linear or logistic regression). It is important to notice that if $\Gamma$ is sparse, its Cholesky factors might not be.
Our finding are summarized in Table~\ref{tab: 1} in Section~\ref{sec:runtime analysis} and validated by the numerical experiments of Section~\ref{sec: examples} (Figure~\ref{fig: runtime of heart}, Figure~\ref{fig:logistic_comparison}).

 \subsection{Mixing}\label{app: mixing}
 Next to the complexity per iteration, we should also understand the time the underlying process needs to explore the state space and to reach its stationary measure.
 Given the different nature of dependencies of the two algorithms, a rigorous and theoretical comparison of their mixing times is difficult. We therefore provide a heuristic argument for two specific scenarios.
 
 Let both algorithms be initialized at $x \sim \cN_d(0, I)$  with all non-zero coordinates ($\alpha^c = \emptyset$) and assume that the target $\mu$ assigns most of its probability mass to the null model $\cM_{\emptyset}$. Consider the following scenarios:
 \begin{itemize}
     \item \emph{A measure supported in every model}  and such that for any two models $\cM_{\alpha_i}$ and $\cM_{\alpha_j}$ with $\alpha_i\ne \alpha_j$, we have $\mu(\cM_{\alpha_i}) > \mu(\cM_{\alpha_j})$ if $|\alpha_i| < |\alpha_j|$. The Sticky Zig-Zag will be directed to the null model, each coordinate with speed 1, so that the first visit of the null set happens with an expected time  $\cO(\max_{i}(|x_i|))$ which is of $\mathcal O(\log d)$ if $x$ is standard Gaussian. On the other hand, the Gibbs sampler, at every iteration, randomly picks a coordinate and, if this is a non-zero coordinate, succeeds to set that coordinate to zero.  Denote by $\tau_\alpha$ the (random) number of iterations needed for the algorithm to set any non-zero coordinate to zero, when exploring a model $\cM_{\alpha}$. Then $\EE(\tau_\alpha)= d/|\alpha|$ which ranges from 1  (when $\cM_{\alpha}$ is the full model) to $d$ (for any sub-model with only one non-zero coordinate). Consider any sequence $\cM_{\alpha_{1}}, \cM_{\alpha_{2}},\dots,\cM_{\alpha_{d-1}}$ of models with $|\alpha_j| + 1 = |\alpha_{j+1}|$ (decreasing size) and with $\cM_{\alpha_1}$ begin the full model. By adding the expected number of iterations at each of those model, we conclude that the process started at $x$ in the full model, is expected to reach the null model in $\sum_{i=1}^d d/i$ iterations which is of $\cO(d\log(d))$.
     \item \emph{A measure supported on a single nested sequence of sub-models,} up to the full model: i.e. for a model $\cM_{\alpha_j}$, with $\mu(\cM_{\alpha_j}) \ne 0$ there is only one sub-model $\cM_{\alpha_i} \subset \cM_{\alpha_j}$ with $|\alpha_i|+1 = |\alpha_j|$  and  the smaller model again has more probability mass $\mu(\cM_{\alpha_i}) > \mu(\cM_{\alpha_j})$. By a similar argument as above, the first expected visit time of the null model is of $\cO(\sum_{i=1}^d |x_i|) = \mathcal O(d)$ for the Sticky Zig-Zag, while for the Gibbs sampler the expected number of steps is $d^2$. 
 \end{itemize} 
 Table~\ref{tab: 2} in Section~\ref{sec:runtime analysis} summarizes the scaling results derived in the two cases considered above. 
 
 \clearpage
 
\section{Details of Section~\ref{sec: examples}}\label{app: details of examples}
\subsection{Logistic regression}\label{app: logistic regression}
Similar computations for the bounds of the Poisson rates of the Zig-Zag sampler applied to logistic regressions can be found in the supplementary material of \textcite{bierkens2019}.
Given a posterior density of the form of Equation~\eqref{eq: target measure} with 
\[
\Psi(x) = \sum_{j=1}^N \left(\log \left(1 + e^{ \langle  A_{[j,:]}, x\rangle}\right) - y_j\langle A_{[j,:]}, x\rangle \right) + \frac{1}{2\sigma^{2}} \|x\|^2 
\]
we use the Sticky Zig-Zag subsampler presented in Section~\ref{sec: Subsampling}. To that end, define $U(x) = \Psi(x) - \frac{1}{2\sigma^{2}} \|x\|^2$. 
We decompose the partial derivatives of $U$ as follow:
\[
\partial_{x_i}U(x) = \sum_{j\in \Gamma_{i}} S(x, i, j)
\]
with sets $\Gamma_i = \{ j \in \{1,2,\dots, N\} \colon A_{j,i} \ne 0\}$ and
\[
S(x,i,j) = \left(\frac{A_{[j,i]} e^{ \langle  A_{[j,:]},x \rangle}}{ 1 +  e^{ \langle  A_{[j,:]}, x\rangle}} - y_j A_{[j,i]}\right).
\]
Then, for all $i = 1,2,\dots, p$ and any $x' \in \RR^p$, if J $\sim \text{Unif}(\Gamma_k)$, the estimator $[|\Gamma_i| (S(x,i,J) - S(x^{'},i,J)] + \partial_{x_i} U(x^{*}) + \sigma^{-2}x_i$ is unbiased for $ \partial_{x_i}\Psi(x)$.
 Notice that the partial derivative of $S(x,k, j)$ is bounded:
\[
\partial_{x_i}(S(x,k,j)) = \frac{A_{[j,k]} A_{[j,i]} e^{ \langle  A_{[j,:]},x \rangle}}{\left( 1 + e^{ \langle  A_{[j,:]},x \rangle} \right)^2} \le \frac14 A_{[j,k]}A_{[j,i]},
\]
which means that for $i = 1,2,\dots,d$
\begin{equation*}
\label{eq: lipschitz condition}
|S(x,i,j) - S(x',i,j)| \le C_i \|x - x'\|_p, \quad p \ge 1, \, j \in \Gamma_i, \, x, x' \in \RR^d,
\end{equation*}
with 
\[
C_k = \frac{1}{4}\max_{j = 1,..,N}|A_{[j,k]}|\,\|A_{j,:} \|_2. 
\]

Then given an initial position $(x,v) \in E$, tuning parameter $x'$ and for any $t\ge 0$, write $(x(t), v(t)) = \varphi(t, x, v)$ with $i \in \alpha(x,v)$ :
\begin{align*}
&  \tilde \lambda_i(x(t), v(t)) = \left(v_i\left(\partial_{x_i}U(x') + \sigma^{-2}x_i(t) + |\Gamma_i| (S(x(t), i, j) - S(x', i, j)) \right)\right)^+ \\
&\quad \le  (v_i(\partial_{x_i}U(x') + \sigma^{-2}(x_i + v_i t)))^+ +  |v_i| |\Gamma_i| \left(|S(x(t), i, j) -  S(x, i, j)| +| S(x, i, j) -  S(x', i ,j)|\right)\\
&\quad \le (v_i(\partial_{x_i}U(x') + \sigma^{-2}(x_i + v_i t))^+ + |v_i| |\Gamma_i| C_i\left(t \|v\|_p + \| x - x'\|_p\right).
\end{align*}

Thus we set
\[
\lambda_i(t, x, v ) = v_i (a_i(x, v) + b_i(x, v)t)
\]
where $a_i(x, v) = (v_i(\partial_i U(x') + \sigma^{-2}x_i))^+ + C_i |\Gamma_i| |v_i| \|x - x' \|_p $ and $b_i(x, v) = |v_i|C_i |\Gamma_i| \| v \|_p + v_i^2 \sigma^{-2}$.
We choose $x'$ to be the posterior mode of $\exp(-\Psi)$, which in this case is unique and easily found with the Newton's method since the function $\exp(-\Psi)$ is convex. 
Given an initial position $(x,v)$, suppose the particle $j \ne i$ gets frozen at time $\tau\ge 0$. Then for $t \ge \tau$ we have that $\|\int_0^t v(t) \dd t\|_p  = \tau\|v\|_p + (t - \tau)\|v'\|_p   \le t\|v\|_p$, with $v' = v[j \colon 0]$. This implies that the Poisson times drawn before the $j$th coordinate gets stuck are still valid upper bounds after time $\tau$. The same argument  follows easily for $n\ge1$ coordinates getting stuck at 0.
\subsection{Spatially structured sparsity}\label{app: spatially structured} 
For this application, we use the thinning scheme as presented in Appendix~\ref{app: computing Poisson times}. 
The bounding rates are of the form
\begin{equation}\label{eq: bounding rate heart}
\bar\lambda_i(t, x(t_0), v(t_0)) = (c+v_i(t_0) \partial_{x_i} \Psi(x(t_0))^+
\end{equation}
for $t \in [0, \Delta]$ with $\Delta =  1/c$.
To see this, define the Lipschitz growth bound $L_{x, v, \Delta}$ as
\[P(\sup_{0 < t<\Delta} \frac1t|V_i(t)\partial_{x_i} \Psi(X(t))| \le L_{x,\Delta} \mid X(0) = 0, V(0) = v) = 1, \quad i = 1,2,\dots,d,\]
which gives an explicit expression for $c$ in Equation~\eqref{eq: bounding rate heart} as
\[
c - L_\Delta\Delta = 0 \, \Rightarrow \, \Delta = 1/c, 
\]
such that the inequality \eqref{eq: lambda bound} holds.
With $L_\Delta = \sup_{x} L_{x, v, \Delta}$, in this application we have that
\[
L_\Delta = \sup_{v, t}  |\partial_t \partial_{x_i} \Psi(x + tv)| = c_2 + 8c_1 + 1/\sigma^2
\]
with $c_1,c_2$ defined in Section~\ref{sec: example>Spatially structured sparsity}. With this given choice, in the simulations of Section~\ref{sec: example>Spatially structured sparsity}, the ratio between the accepted reflection times and the proposed reflection times was 0.357. Here we used the local implementation of the Sticky Zig-Zag given by Appendix~\ref{app: local implementation} (with sets $\overline{A_i} = i$ for all $i$) in conjunction with the sparse algorithm as in Remark~\ref{rmk: sparse implementation}.
\subsection{Sparse precision matrix}\label{app:sparse_precision_matrix}

By write  $\Psi(x)\bigotimes_{i=1}^p \bigotimes_{j=1}^{i} (\dd x_{i,j} + \frac{1}{\kappa} \delta_0(\dd x_{i,j})\ind_{(i \ne j)})$ and we have that
\begin{equation}\label{eq: psi precision matrix}
\partial_{x_{i,j}} \Psi(x) = 
(Y Y')_{(i,:)} X_{(:, j)} + \gamma_{i,j} (x_{i,j} - c_{i,j})
 -  \ind_{(i = j)} \left(\frac{N}{x_{i,j}}\right).
\end{equation}

Note that, for any initial position and velocity $(x, v)$, the reflection times of the Sticky Zig-Zag with rates $\lambda_{i,j}(\phi(t, x, v)) = (v_i \partial_{x_{i,j}} \Psi(x + vt) )^+$ can be computed exactly for the off-diagonal elements and via a thinning scheme for the diagonal elements where
\[
\lambda_{i,i}(\phi(t, x, v) \le \overline{\lambda}_{i,i}(t, x,v) + \overline{\overline{\lambda}}_{i,i}(t, x,v),\quad t>0, \forall i.
\]
Here $\overline{\lambda}_{i,i}(t,x,v) = (v_{i,i}(YY'_{i,:} (X_{:,i} + vt) +\gamma_{i,i} (x_{i,i}+vt - c_{i,i})))^+$ and $\overline{\overline{\lambda}}_{i,i}(t, x,v) = \left(-v_{i,i}\frac{N}{x_{i,i} + v_{i,i} t}\right)$ and a Poisson time form the bounding rate is simulated as $\min(\tau_1,\tau_2)$ where $\tau_1 \sim \text{Poiss}(s \to \overline{\lambda}_{i,i}(s, x, v))$ and 
$\tau_2 \sim \text{Poiss}(s \to \overline{\overline{\lambda}}_{i,i}(s, x, v))$.
\end{document}